\numberwithin{figure}{section}
\numberwithin{table}{section}
\numberwithin{equation}{section}
\newtheorem{theorem}{Theorem}[section] 
\newtheorem{lemma}[theorem]{Lemma}
   \newtheorem{defn}[theorem]{Definition}%
   \newtheorem{observation}[theorem]{Observation}
\newlength{\savedparindent}
\providecommand{\pbrcx}[1]{\left[ {#1} \right]}
\renewcommand{\Re}{{\rm I\!\hspace{-0.025em} R}}
\newcommand{\eqlab}[1]{\label{equation:#1}}
\newcommand{\Eqref}[1]{Eq.~(\ref{equation:#1})}
\newcommand{\lemlab}[1]{\label{lemma:#1}}
\newcommand{\lemref}[1]{Lemma~\ref{lemma:#1}}
\newcommand{\obslab}[1]{\label{observation:#1}}
\newcommand{\obsref}[1]{Observation~\ref{observation:#1}}
\newcommand{\rlxlab}[1]{\label{relaxation:#1}}
\newcommand{\rlxref}[1]{Relaxation~\ref{relaxation:#1}}
\newcommand{\alglab}[1]{\label{algorithm:#1}}
\newcommand{\algref}[1]{Algorithm~\ref{algorithm:#1}}
\newcommand{\seclab}[1]{\label{section:#1}}
\newcommand{\secref}[1]{Section~\ref{section:#1}}
\newcommand{\thmlab}[1]{\label{theorem:#1}}
\newcommand{\thmref}[1]{Theorem~\ref{theorem:#1}}
\newcommand{\pth}[2][\!]{#1\left({#2}\right)}%
\newcommand{\brc}[1]{\left\{ {#1} \right\}}%
\newcommand{\sep}[1]{\,\left|\, {#1} \right.}%
\newcommand{\Ex}[1]{\mathop{\mathbf{E}}\pbrcx{#1}}
\newcommand{\Prob}[1]{\mathop{\mathbf{Pr}}\!\pbrcx{#1}}
\newcommand{\cardin}[1]{\left| {#1} \right|}%
\newcommand{\norm}[2]{\left|\left| {#2} \right|\right|_{#1}}
\newcommand{\dif}[1]{~\mathrm{d #1}}
\newcommand{\detm}[1]{\mathrm{det}\pth{#1}}
\newcommand{\E}{\mathbb{E}}      
\newcommand{\1}{\mathbf{1}}      
\newcommand{\cloud}{\mathsf{cloud}}
\newcommand{\iprod}[2]{\langle #1,#2 \rangle}
\newcommand{\normal}[2]{\mathcal{N}\pth{#1,#2}}
\newcommand{\eigmin}[1]{\lambda_{\mathrm{min}}\pth{#1}}
\DeclareMathOperator{\good}{\mathsf{good}}
\DeclareMathOperator{\bad}{\mathsf{bad}}
\begin{document}

\title{{\bf Approximate Hypergraph Coloring under \\ Low-discrepancy and Related Promises}}

\author{
Vijay V. S. P. Bhattiprolu\thanks{Supported by NSF CCF-1115525 \tt vpb@cs.cmu.edu} \and
Venkatesan Guruswami\thanks{Supported in part by NSF grant CCF-1115525. {\tt guruswami@cmu.edu} } \and
Euiwoong Lee\thanks{Supported by a Samsung Fellowship and NSF CCF-1115525. {\tt euiwoonl@cs.cmu.edu} }
}

\date{Computer Science Department \\ Carnegie Mellon University \\ Pittsburgh, PA 15213.}

\maketitle

\begin{abstract}

 A hypergraph is said to be $\chi$-colorable if its vertices can be colored with $\chi$ colors so that no hyperedge is monochromatic. $2$-colorability is a fundamental property (called Property B) of hypergraphs and is extensively studied in combinatorics. Algorithmically, however, given a $2$-colorable $k$-uniform hypergraph, it is NP-hard to find a $2$-coloring miscoloring fewer than a fraction $2^{-k+1}$ of hyperedges (which is trivially achieved by a random $2$-coloring), and the best algorithms to color the hypergraph properly require $\approx n^{1-1/k}$ colors, approaching the trivial bound of $n$ as $k$ increases.

 In this work, we study the complexity of approximate hypergraph coloring, for both the maximization (finding a $2$-coloring with fewest miscolored edges) and minimization (finding a proper coloring using fewest number of colors) versions, when the input hypergraph is promised to have the following stronger properties than $2$-colorability:
 \begin{itemize}
 \item Low-discrepancy: If the hypergraph has a $2$-coloring of discrepancy $\ell \ll \sqrt{k}$, we give an algorithm to color the hypergraph with $\approx n^{O(\ell^2/k)}$ colors.
  However, for the maximization version, we prove NP-hardness of finding a $2$-coloring miscoloring a smaller than $2^{-O(k)}$ (resp. $k^{-O(k)}$) fraction of the hyperedges when $\ell = O(\log k)$ (resp. $\ell=2$). Assuming the Unique Games conjecture, we improve the latter hardness factor to $2^{-O(k)}$ for almost discrepancy-$1$ hypergraphs.

 \item Rainbow colorability: If the hypergraph has a $(k-\ell)$-coloring such that each hyperedge is polychromatic with all these colors (this is stronger than a $(\ell+1)$-discrepancy $2$-coloring), we give a $2$-coloring algorithm that miscolors at most $k^{-\Omega(k)}$ of the hyperedges when $\ell \ll \sqrt{k}$, and complement this with a matching Unique Games hardness result showing that when $\ell =\sqrt{k}$, it is hard to even beat the $2^{-k+1}$ bound achieved by a random coloring.
 \end{itemize}
\end{abstract}

\maketitle
\thispagestyle{empty}

\newpage

\section{Introduction}
Coloring (hyper)graphs is one of the most important and well-studied
tasks in discrete mathematics and theoretical computer science.  A
$k$-uniform hypergraph $G = (V, E)$ is said to be $\chi$-colorable if
there exists a coloring $c : V \mapsto \{ 1, \dots, \chi \}$ such that
no hyperedge is monochromatic, and such a coloring $c$ is referred to as 
a proper $\chi$-coloring.
 Graph and hypergraph coloring has been
the focus of active research in both fields, and has served as the
benchmark for new research paradigms such as the probabilistic method 
(Lov\'{a}sz local lemma~\cite{EL75}) and semidefinite programming 
(Lov\'{a}sz theta function~\cite{Lovasz79}).

While such structural results are targeted towards special classes of
hypergraphs, given a general $\chi$-colorable $k$-uniform hypergraph,
the problem of reconstructing a $\chi$-coloring is known to be a hard
task.  Even assuming $2$-colorability, reconstructing a proper $2$-coloring
is a classic NP-hard problem for $k \ge 3$. Given the intractability
of proper $2$-coloring, two notions of {\em approximate coloring} of
$2$-colorable hypergraphs have been studied in the literature of
approximation algorithms.  The first notion, called {\em
Min-Coloring}, is to minimize the number of colors while
still requiring that every hyperedge be non-monochromatic.  The second
notion, called {\em Max-2-Coloring} allows only $2$ colors, but the
objective is to maximize the number of non-monochromatic
hyperedges.\footnote{The maximization version is also known as
Max-Set-Splitting, or more specifically Max $k$-Set-Splitting when
considering $k$-uniform hypegraphs, in the literature.}

Even with these relaxed objectives, the promise that the input
hypergraph is $2$-colorable seems grossly inadequate for polynomial
time algorithms to exploit in a significant way.  For Min-Coloring, given 
a $2$-colorable $k$-uniform hypergraph, the best known algorithm uses 
$O(n^{1- \frac{1}{k}})$ colors~\cite{CF96, AKMH96}, which tends to the
trivial upper bound $n$ as $k$ increases.  This problem has been 
actively studied from the hardness side, motivating many new developments 
in constructions of probabilistically checkable proofs. Coloring 
$2$-colorable hypergraphs with $O(1)$ colors was shown to be NP-hard for 
$k \ge 4$ in \cite{GHS02} and $k=3$ in \cite{DRS02}. An exciting body of 
recent work has pushed the hardness beyond poly-logarithmic 
colors~\cite{DG13,GHHSV14,KS14b,Huang15}. In particular, \cite{KS14b} 
shows quasi-NP-hardness of $2^{(\log n)^{\Omega(1)}}$-coloring a 
$2$-colorable hypergraphs (very recently the exponent  was shown to 
approach $1/4$ in \cite{Huang15}).

The hardness results for Max-2-Coloring show an even more pessimistic 
picture, wherein the naive {\em random assignment} (randomly give
one of two colors to each vertex independently to leave a
$(\frac{1}{2})^{k - 1}$ fraction of hyperedges monochromatic in
expectation), is shown to have the best guarantee for a polynomial time
algorithm when $k \geq 4$ (see~\cite{Hastad01}).

Given these strong intractability results, it is natural to consider
what further relaxations of the objectives could lead to efficient
algorithms.  For maximization versions, Austrin and H\aa
stad~\cite{AH13} prove that (almost\footnote{We say a hypergraph is
{\em almost} $\chi$-colorable for a small constant $\epsilon > 0$,
there is a $\chi$-coloring that leaves at most $\epsilon$ fraction of
hyperedges monochromatic.}) $2$-colorability is useless (in a formal
sense that they define) for any Constraint Satisfaction Problem (CSP)
that is a relaxation of $2$-coloring~\cite{Wenner14}.  Therefore, it
seems more natural to find a stronger promise on the hypergraph than
mere $2$-colorability that can be significantly exploited by
polynomial time coloring algorithms for the objectives of Min-Coloring
and Max $2$-Coloring. This motivates our main question {\em ``how
strong a promise on the input hypergraph is required for polynomial
time algorithms to perform significantly better than naive algorithms
for Min-Coloring and Max-2-Coloring?'' }

There is a very strong promise on $k$-uniform hypergraphs which makes
the task of proper $2$-coloring easy. If a hypergraph is $k$-partite (i.e.,
there is a $k$-coloring such that each hyperedge has each color
exactly once), then one can properly $2$-color the hypergraph in polynomial
time. The same algorithm can be generalized to hypergraphs which admit a
$c$-balanced coloring (i.e., $c$ divides $k$ and there is a
$k$-coloring such that each hyperedge has each color exactly
$\frac{k}{c}$ times). This can be seen by random hyperplane rounding of 
a simple SDP, or even simpler by solving a homogeneous linear system and 
iterating~\cite{alon-pc}, or by a random recoloring method analyzed using 
random walks~\cite{mcdiarmid}. In fact, a proper $2$-coloring can be 
efficiently achieved assuming that the hypergraph admits a fair partial 
$2$-coloring, namely a pair of disjoint subsets $A$ and $B$ of the 
vertices such that for every hyperedge $e$, 
$|e \cap A| = |e \cap B| > 0$~\cite{mcdiarmid}.

The promises on structured colorings that we consider in this work are
natural relaxations of the above strong promise of a
perfectly balanced (partial) coloring.

\begin{itemize}
\item A hypergraph is said to have {\em discrepancy $\ell$} when there is a 
$2$-coloring such that in each hyperedge, the difference between the number 
of vertices of each color is at most $\ell$.

\item A $\chi$-coloring $(\chi \leq k)$ is called {\em rainbow} if every 
hyperedge contains each color at least once. 

\item A $\chi$-coloring $(\chi \geq k)$ is called {\em strong} if every 
hyperedge contains $k$ different colors. 
\end{itemize}

These three notions are interesting in their own right, and have been
independently studied. 
Discrepancy minimization has recently seen different algorithmic 
ideas~\cite{Bansal10, LM12, Rothvoss14} to give constructive proofs of the 
classic six standard deviations result of Spencer~\cite{Spencer85}.
Rainbow coloring admits a natural interpretation as a partition of $V$ into 
the maximum number of disjoint vertex covers, and has been actively studied 
for geometric hypergraphs due to its applications in sensor networks~\cite{BPRS13}.
Strong coloring is closely related to graph coloring by definition, and is 
known to capture various other notions of coloring~\cite{AH05}.
It is easy to see that $\ell$-discrepancy ($\ell < k$), $\chi$-rainbow
colorability ($2 \leq \chi \leq k)$, and $\chi$-strong colorability
$(k \leq \chi \leq 2k - 2)$ all imply $2$-colorability.  For odd $k$, both 
$(k + 1)$-strong colorability and $(k-1)$-rainbow colorability imply 
discrepancy-$1$, so strong colorability and rainbow colorability seem stronger 
than low discrepancy.

Even though they seem very strong, previous works have mainly focused
on hardness with these promises.  The work of Austrin et
al.~\cite{AGH14} shows NP-hardness of finding a proper $2$-coloring under the
discrepancy-$1$ promise.  The work of Bansal and Khot~\cite{BK10}
shows hardness of $O(1)$-coloring even when the input hypergraph is
promised to be almost $k$-partite (under the Unique Games Conjecture);
Sachdeva and Saket~\cite{SS13} establish NP-hardness of
$O(1)$-coloring when the graph is almost $k/2$-rainbow colorable; and
Guruswami and Lee~\cite{GL15a} establish NP-hardness when the graph is
perfectly (not almost) $\frac{k}{2}$-rainbow colorable, or admits a
$2$-coloring with discrepancy $2$.
These hardness results indicate that it is still a nontrivial task to 
exploit these strong promises and outperform naive algorithms.

\subsection{Our Results}
In this work, we prove that our three promises, unlike mere $2$-colorability, 
give enough structure for polynomial time algorithms to perform significantly 
better than naive algorithms. We also study these promises from a hardness 
perspective to understand the asymptotic threshold at which beating naive 
algorithms goes from easy to UG/NP-Hard. In particular assuming the UGC, for 
Max-$2$-Coloring under $\ell$-discrepancy or $k - \ell$-rainbow colorability, this 
threshold is $\ell = \Theta(\sqrt{k})$.

\begin{theorem}
There is a randomized polynomial time algorithm that produces a $2$-coloring of 
a $k$-uniform hypergraph $H$ with the following guarantee.
For any $0 < \epsilon < \frac{1}{2}$ (let $\ell = k^{\epsilon}$), there exists a 
constant $\eta > 0$ such that if $H$ is $(k - \ell)$-rainbow colorable or $(k + \ell)$-strong 
colorable, the fraction of monochromatic edges in the produced $2$-coloring is 
$O((\frac{1}{k})^{\eta k})$ in expectation. 
\end{theorem}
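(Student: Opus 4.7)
The plan is to set up a vector-valued SDP relaxation that is feasible by virtue of the promised structured coloring, and then to round it using a single random Gaussian hyperplane. The key observation is that both promises force, for every hyperedge $e$, the edge-sum $\sum_{i \in e} v_i$ to have tiny squared norm in the natural simplex embedding; this smallness is exactly what drives the probability that the Gaussian projections $\iprod{g}{v_i}$ ($i \in e$) all share a sign down to $k^{-\Omega(k)}$.

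\textbf{Step 1 (SDP and feasibility).} I would work with unit vectors $v_i \in \RR^d$, one per vertex, subject to the hard constraint $\bigl\| \sum_{i \in e} v_i \bigr\|^2 \le B$ for every hyperedge $e$, where $B$ is a parameter tuned to the promise. Feasibility is witnessed by the promised coloring $c : V \to [K]$ (with $K = k - \ell$ in the rainbow case, $K = k + \ell$ in the strong case) by setting $v_i = u_{c(i)}$, where $u_1, \ldots, u_K$ are the vertices of a unit regular $(K{-}1)$-simplex: $\sum_c u_c = 0$ and $\iprod{u_c}{u_{c'}} = -1/(K-1)$ for $c \neq c'$. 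Writing $n_c(e) = |\{i \in e : c(i) = c\}|$, a direct expansion gives
\[
\Bigl\| \sum_{i \in e} v_i \Bigr\|^2 \;=\; \frac{K}{K-1} \sum_c n_c(e)^2 \;-\; \frac{k^2}{K-1}.
\]
In the rainbow case the promise forces $1 \le n_c(e) \le \ell+1$, and maximizing the right-hand side under these constraints evaluates to exactly $\ell^2$, so $B = \ell^2$ suffices. In the strong case, $n_c(e) \in \{0,1\}$ with exactly $k$ ones per edge, and the expression evaluates to $k\ell/(K-1) = O(\ell)$.

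\textbf{Step 2 (rounding and analysis).} Solve the SDP in polynomial time, sample $g \sim \normal{0}{I}$, and output $\chi(i) = \mathrm{sign}(\iprod{g}{v_i})$. For a fixed edge $e$, let $X_i = \iprod{g}{v_i}$: the $X_i$ are jointly Gaussian, each marginally standard, and $S_e := \sum_{i \in e} X_i$ is Gaussian with $\mathrm{Var}(S_e) = \| \sum_{i \in e} v_i \|^2 \le B$. On the monochromatic event one has $|S_e| = \sum_{i \in e} |X_i|$, which is typically of order $k$; so monochromaticity forces $|S_e|$ to be an $\Omega(k / \sqrt{B})$-standard-deviation event, whose Gaussian probability is at most $\exp(-\Omega(k^2 / B))$. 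With $B = O(\ell^2) = O(k^{2\eps})$ and $\eps < 1/2$, this is $\exp(-\Omega(k^{2 - 2\eps})) \le (1/k)^{\eta k}$ for some small constant $\eta > 0$; the strong case with $B = O(\ell)$ is strictly easier, and averaging over edges gives the claimed expected fraction.

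\textbf{Main obstacle.} The qualitative step ``monochromaticity forces $|S_e| = \Omega(k)$'' needs to be made quantitative despite the nontrivial pairwise correlations $\iprod{v_i}{v_j}$ among the $X_i$. I would split $X_i = A + Y_i$ with $A = \iprod{g}{\bar v}$, $\bar v = \frac{1}{k} \sum_{i \in e} v_i$, and $Y_i$ mean-zero satisfying $\sum_i Y_i = 0$; monochromaticity then requires $A > \max_i (-Y_i)$, where $A$ has variance at most $B/k^2$ while the $Y_i$ are essentially $k$ unit-variance Gaussians confined to a $(k{-}1)$-dimensional subspace, so $\max_i (-Y_i) = \Theta(\sqrt{\log k})$ with constant probability. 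Converting this heuristic into a clean Gaussian orthant-probability estimate, while correctly handling the correlations among the $Y_i$'s, is the bulk of the technical work.
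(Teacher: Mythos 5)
Your Step 1 (the SDP, the simplex-embedding feasibility argument, and the norm bounds $\|\sum_{i\in e}v_i\|^2 \le \ell^2$ for rainbow and $O(\ell)$ for strong colorability) is correct and matches the paper's setup. The gap is in Step 2. The chain ``on the monochromatic event $|S_e|=\sum_i|X_i|$, which is typically of order $k$, so monochromaticity forces $|S_e|$ to be an $\Omega(k/\sqrt{B})$-deviation event'' is backwards: $\sum_i|X_i|$ is of order $k$ \emph{unconditionally}, but on the monochromatic event it equals $|S_e|$, which has variance at most $B$ and is therefore typically \emph{small}. Conditioning on monochromaticity thus makes $\sum_i|X_i|$ unusually small, not large; the event is rare because of the Gaussian measure of a thin cone near the origin, not because of a tail deviation of $S_e$. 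The claimed bound $\exp(-\Omega(k^2/B))$ is also quantitatively false: in the symmetric strong case with $\ell=O(1)$ it would give $\exp(-\Omega(k^2))$, while Daniels' asymptotic (cited in the paper) for the corresponding regular cone gives roughly $(e/(2\pi))^{k/2}k^{-k/2}$, which is far larger. Your ``main obstacle'' paragraph, with the split $X_i = A + Y_i$, is a more promising direction, but you acknowledge it is incomplete, and there is a further issue you do not flag: in the rainbow case $A$ and the $Y_i$ need not be independent (independence requires $\iprod{\bar v}{v_i}$ to be constant in $i$, which the rainbow SDP does not enforce), and worse, the Gram matrix of the $v_i$'s is not pinned down and can be nearly singular (some $v_i$ almost in the span of the others), which pushes the orthant probability back toward $2^{-k}$.

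The paper's proof takes a different technical route that fills both holes. \thmref{cone:measure} bounds the Gaussian measure of $\{x: U^{T}x \ge 0\}$ by first lower-bounding the quadratic form on the positive orthant via $x^{T}A_U^{-1}x \ge \|x\|_1^2/\mathrm{sum}(A_U)$ (\lemref{L1:inequality}), then linearizing $e^{-\|x\|_1^2/\ell^2}$ with a Gaussian--Fourier identity and shifting the contour of integration, which yields $(e/(2\pi k))^{k/2}\ell^{k}/\sqrt{\det A_U}$, a bound in terms of the sum-norm and the para-volume of $U$. The near-singularity problem in the rainbow case is then handled by the column-subset-selection argument of \secref{wbs}, which extracts roughly $k - k^{1/2+o(1)}$ of the vectors whose Gram submatrix has both a controlled determinant and a controlled entry sum, and applies the cone bound to that subset. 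These two ingredients --- the cone-measure bound and the subset selection --- are the technically hard content of the theorem and are absent from your proposal.
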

Our results indeed show that this algorithm significantly outperforms the random 
assignment even when $\ell$ approaches $\sqrt{k}$ asymptotically. 
See~\thmref{analysis:strong} and~\thmref{analysis:rainbow} for the precise statements.

For the $\ell$-discrepancy case, we observe that when $\ell < \sqrt{k}$, the framework of the 
second and the third authors~\cite{GL15b} yields an approximation algorithm that marginally 
(by an additive factor much less than $2^{-k}$) outperforms the random assignment, but 
we do not formally prove this here.

The following hardness results suggest that this gap between low-discrepancy and 
rainbow/strong colorability might be intrinsic. Let the term {\em UG-hardness} denote 
NP-hardness assuming the Unique Games Conjecture.

\begin{theorem}
\thmlab{hardness1}
For sufficiently large odd $k$, given a $k$-uniform hypergraph which admits a $2$-coloring 
with at most a $(\frac{1}{2})^{6k}$ fraction of edges of discrepancy larger than $1$, it 
is UG-hard to find a $2$-coloring with a $(\frac{1}{2})^{5k}$ fraction of monochromatic edges.
\end{theorem}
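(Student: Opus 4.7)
The plan is to establish UG-hardness in the standard Khot long-code framework: design a dictatorship test tailored to the near-discrepancy-$1$ promise, and compose it with Unique Games. The dictator test produces $k$ inputs $x^{(1)}, \ldots, x^{(k)} \in \{\pm 1\}^R$ by drawing, independently for each coordinate $r \in [R]$, a pattern $p_r \in \{\pm 1\}^k$ from a distribution $\mathcal D$; the hyperedge associated with a Boolean function $f : \{\pm 1\}^R \to \{\pm 1\}$ is $\left(f(x^{(1)}), \ldots, f(x^{(k)})\right)$. I take $\mathcal D$ to be a mixture mostly supported on the uniform distribution over the discrepancy-$1$ patterns $S_1 = \{p : |\sum_j p_j| = 1\}$, together with an auxiliary ``correlation booster'' component of mass $\gamma \leq (1/2)^{6k}$, with parameters chosen so that (i) at most a $(1/2)^{6k}$ fraction of draws from $\mathcal D$ lies outside $S_1$, and (ii) the induced pairwise correlation $\rho = \E_{\mathcal D}\left[p^{(a)} p^{(b)}\right]$ yields a $k$-variate exchangeable Gaussian with all-same-sign orthant probability at least $(1/2)^{5k}$.

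For a dictator $f(x) = x_i$, the tuple $\left(f(x^{(j)})\right)_j$ equals $p_i \sim \mathcal D$, so by (i) its bad fraction (discrepancy larger than $1$) is at most $(1/2)^{6k}$, matching the completeness promise. For the soundness, if $f$ has no low-degree influence above a vanishing threshold $\tau$, Mossel's invariance principle for correlated $k$-ary ensembles implies that $\left(f(x^{(j)})\right)_j$ is $o_\tau(1)$-close in distribution to its Gaussian analog---a jointly standard Gaussian vector with exchangeable pairwise correlation $\rho$---whose all-same-sign orthant probability is bounded below by $(1/2)^{5k}$ by (ii); hence $f$ produces at least $(1/2)^{5k} - o(1)$ monochromatic hyperedges.

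I compose with UG in the standard manner: for each UG vertex $v$, introduce an odd-folded long code $f_v : \{\pm 1\}^L \to \{\pm 1\}$, and for each UG vertex $u$ sample $k$ random neighbors $v_1, \ldots, v_k$ with permutations $\pi_1, \ldots, \pi_k$ and a test tuple $\left(x^{(j)}\right) \sim \mathcal D^{\otimes L}$; the hyperedge is $\left(f_{v_j}\left(x^{(j)} \circ \pi_j\right)\right)_j$. A UG labeling of value $1 - \eta$ yields dictator-like long codes on all but an $O(\eta)$ fraction of vertices, lifting the dictator completeness to the hypergraph (with a small extra slack from UG imperfection, controlled by choosing the UG soundness $\eta \ll (1/2)^{6k}$). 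Conversely, the standard influence-decoding argument turns any $2$-coloring with fewer than $(1/2)^{5k}$ monochromatic edges into a UG labeling of value $\gg \eta$, contradicting UGC.

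The main technical obstacle is calibrating $\mathcal D$ to satisfy both (i) and (ii). A short computation using $\E_{\mathcal D}\left[\left(\sum_j p_j\right)^2\right] = 1$ shows that uniform on $S_1$ alone has pairwise correlation $\rho = -1/k$, for which the exchangeable Gaussian covariance matrix is nearly singular (smallest eigenvalue $1/k$) and its all-same-sign orthant probability is only $2^{-\Omega(k^2)}$, far below $(1/2)^{5k}$. Naively injecting a $(1/2)^{6k}$-fraction of monochromatic patterns into $\mathcal D$ contributes only $(1/2)^{6k}$ monochromatic edges for any low-influence $f$ (via direct evaluation on $k$ identical inputs), still a factor of $2^k$ short of the target $(1/2)^{5k}$. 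Bridging this gap requires a correlation booster that meaningfully shifts $\rho$ away from the singular limit $-1/(k-1)$ while keeping the dictator's bad fraction within $(1/2)^{6k}$, underpinned by a quantitative exchangeable Gaussian orthant estimate in the narrow regime of small negative $\rho$---this is the core analytic step, with the rest of the argument (invariance, folding, long-code soundness) being standard.
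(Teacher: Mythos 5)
Your dictatorship-test approach is a genuinely different route from the paper's proof, which is a short combinatorial reduction from Max-Cut requiring no invariance principle or Gaussian orthant estimate. In the paper, each Max-Cut vertex $v$ becomes a cloud $\cloud(v)$ of $k$ hypergraph vertices, each Max-Cut edge $(u,v)$ becomes the $N = 2\binom{k}{t}\binom{k}{t+1}$ hyperedges (with $k=2t+1$) formed by taking one side from $\cloud(u)$ and the other from $\cloud(v)$ with part sizes differing by exactly one, and a $2$-coloring of the hypergraph is decoded to a cut by majority vote within each cloud. Completeness is immediate (cut edges spawn only discrepancy-$1$ hyperedges); soundness is the one-line observation that if $u,v$ agree under the majority decoding, at least $t+1$ vertices in each cloud share that color, so some hyperedge from $(u,v)$ is monochromatic. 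Combining $N \leq (2/\pi)2^{2k}$ with the KKMO Max-Cut gap $(1-\epsilon)$ vs.\ $\bigl(1-(2/\pi)\sqrt{\epsilon}\bigr)$ at $\epsilon = 2^{-6k}$ gives the theorem directly.

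Your route has a gap that I believe is fatal in the regime the theorem actually addresses (``sufficiently large odd $k$''). You correctly note that any distribution $\mathcal D$ with at most $\gamma \leq 2^{-6k}$ mass off the discrepancy-$1$ patterns is forced to have pairwise correlation $\rho \approx -1/k$, since $k + k(k-1)\rho = \E\bigl[(\sum_j p_j)^2\bigr] \in [1,\, 1 + \gamma k^2]$. But the orthant probability of the exchangeable Gaussian at $\rho = -1/k$ is, by the asymptotic the paper itself records for the symmetric $\cos^{-1}(1/2)$ cone, $\Theta(\sqrt{k})\,(e/(2\pi k))^{k/2}$ --- that is, $k^{-\Theta(k)}$, not $2^{-\Omega(k^2)}$ as you wrote --- and this falls strictly below the required $2^{-5k}$ once $e/(2\pi k) < 2^{-10}$, i.e.\ for all $k \gtrsim 450$. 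Since the allowed perturbation $\gamma = 2^{-6k}$ can shift $\rho$ by at most $O(\gamma k)$, and the orthant probability is smooth at that scale, no admissible ``correlation booster'' can raise it to $2^{-5k}$ for large $k$; conditions (i) and (ii) in your outline are mutually incompatible, so the soundness claim that low-influence $f$ produces at least $2^{-5k}$ monochromatic hyperedges cannot hold. This is exactly the calibration difficulty you flag at the end, but it is not merely an unfinished computation --- it is an obstruction to the whole exchangeable-distribution design. The paper's reduction sidesteps the issue because its soundness is a majority-decoding argument, not a bound on the orthant probability of low-influence colorings.
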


\begin{theorem}
\thmlab{hardness2}
For even $k \geq 4$, given a $k$-uniform hypergraph which admits a $2$-coloring with no edge 
of discrepancy larger than $2$, it is NP-hard to find a $2$-coloring with a $k^{-O(k)}$ 
fraction of monochromatic edges.
\end{theorem}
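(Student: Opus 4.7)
\medskip
\noindent
\textbf{Overall strategy.} The plan is a PCP-style reduction from a strong form of Label Cover (ideally Smooth Label Cover in the style of Khot, with sufficiently small soundness and large alphabet) to Max-$2$-Coloring on $k$-uniform hypergraphs with the discrepancy-$2$ completeness property. For each Label Cover vertex $u$, we introduce a block of $2^{R}$ hypergraph vertices indexed by a (folded) Long Code domain $\{0,1\}^{R}$. Each hyperedge is produced by picking a random Label Cover constraint and then invoking a local $k$-query test whose distribution is designed below.

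\medskip
\noindent
\textbf{Designing the test distribution.} The heart of the argument is a distribution $\mathcal{D}$ on $k$-tuples in $(\{0,1\}^R)^k$ satisfying two competing requirements. First, for completeness we need that on \emph{every} coordinate $j \in [R]$, the marginal of $\mathcal{D}$ on the $k$ bits $(x^{(1)}_j, \ldots, x^{(k)}_j)$ is supported only on $k$-tuples whose number of $1$s lies in $\{k/2-1,\,k/2,\,k/2+1\}$ (using that $k$ is even, so discrepancy $\le 2$ is achievable). This guarantees that any dictator function $f(x)=x_i$ produces a hyperedge of discrepancy at most $2$. Second, for soundness we need that $\mathcal{D}$ is ``noisy enough'' for an invariance-style analysis and, in particular, that the Gaussian surrogate of $\mathcal{D}$ assigns probability at least $k^{-O(k)}$ to the monochromatic outputs $0^k, 1^k$. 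A natural candidate is the uniform distribution on coordinate-wise balanced (or near-balanced) $k$-tuples, mildly perturbed by re-randomizing each bit independently with a small probability $\delta$; the perturbation gives the entropy needed for the invariance step and the $k^{-O(k)}$ Gaussian floor.

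\medskip
\noindent
\textbf{Completeness.} Given a labeling $\ell:V\to[R]$ of the Label Cover instance, we color hypergraph vertex $(u,x)$ by $x_{\ell(u)}$ (after suitable folding to enforce symmetry across the two colors). By property (i) of $\mathcal{D}$, the $k$ queries of any hyperedge evaluate to a coordinate-wise near-balanced pattern, hence the hyperedge has discrepancy at most $2$ under this coloring.

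\medskip
\noindent
\textbf{Soundness.} Assume for contradiction that some $2$-coloring $f=\{f_u\}$ leaves fewer than a $k^{-O(k)}$ fraction of hyperedges monochromatic. For a random Label Cover edge and its random test tuple, we express the monochromatic probability as an expectation of a product of $\pm 1$-valued functions. Using the Mossel--O'Donnell--Oleszkiewicz invariance principle on the distribution $\mathcal{D}$ (pulled back through the Label Cover projections), if no $f_u$ has a coordinate with significant low-degree influence, then the monochromatic probability is within $o(k^{-O(k)})$ of its Gaussian surrogate, which by the second design property is at least $k^{-O(k)}$---a contradiction. Hence many $f_u$'s must have an influential coordinate; invoking the smoothness of the Label Cover instance, these influential coordinates yield a randomized labeling that satisfies a non-negligible fraction of Label Cover constraints, contradicting its soundness.

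\medskip
\noindent
\textbf{Main obstacle.} The crux is to design $\mathcal{D}$ so that properties (i) and (ii) hold simultaneously. Demanding per-coordinate discrepancy $\le 2$ forces $\mathcal{D}$ to be highly correlated across the $k$ coordinates of the test, while applying invariance and getting a Gaussian monochromatic floor of $k^{-O(k)}$ (rather than $2^{-\Omega(k)}$ or worse) requires enough ``noise'' and non-degeneracy in $\mathcal{D}$'s covariance structure. Tuning the noise parameter $\delta$, the support of the balanced marginals, and compensating for the distortion caused by Label Cover projections---all while keeping the final factor at $k^{-O(k)}$---is where the real work lies; everything else is a standard Long Code / smooth Label Cover reduction.
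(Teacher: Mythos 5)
You take a genuinely different route from the paper, and there are some concrete gaps worth flagging.

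\textbf{Where you diverge.} The paper does \emph{not} reduce from bipartite (smooth) Label Cover nor apply the invariance principle. It instead reduces from the $Q$-Hypergraph Label Cover problem of Guruswami--Lee, and its test distribution is much more structured than what you sketch: with $Q = k/2$, one samples a pair of points from each of the $Q$ clouds such that $Q-1$ pairs are exact coordinate-wise complements (forcing one vertex of each color under any dictator coloring) and one randomly chosen pair is sampled i.i.d. This yields the discrepancy-$\le 2$ completeness deterministically; there is no ``near-balanced marginal plus noise'' tuning. For soundness, the paper never invokes invariance. It decomposes the noised indicator of a half-measure set into Fourier contributions from shattered and non-shattered coordinate sets, shows $\E[f(x)f(y)] \geq \tfrac{1}{5Q}$ by a short direct computation (a reverse-hypercontractivity-style lower bound), and then defers to the Guruswami--Lee machinery.

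\textbf{Concrete gaps in your sketch.} First, you never actually commit to $\mathcal{D}$ --- you correctly identify the tension (balanced marginals versus enough entropy for invariance), but you do not resolve it; resolving it \emph{is} the theorem. Second, the invariance-principle route is risky at this quantitative scale: the soundness you need is $k^{-O(k)}$, which is super-exponentially small in $k$, while the error terms in multi-query invariance arguments generally deteriorate with the arity. You would have to control the invariance error and a lower bound on the $k$-dimensional Gaussian monochromatic mass (a cousin of the upper bound the paper proves for its algorithmic results) to the same $k^{-O(k)}$ precision, and nothing in your sketch addresses why this is feasible rather than a dead end. Third, the invariance-plus-pairwise-independence template most naturally produces UG-hardness (as the paper uses for Theorem~\ref{theorem:hardness4}); getting NP-hardness via invariance requires smooth Label Cover with additional work, and the fact that the paper avoids invariance entirely and uses a direct Fourier bound should be read as a signal that the direct route is cleaner here. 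Your completeness step and overall template are sound in spirit, but the core soundness analysis is where you would need an actual proof, and the one you gesture at is not the one in the paper.
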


\begin{theorem}
\thmlab{hardness3}
For $k$ sufficiently large, given a $k$-uniform hypergraph which admits a $2$-coloring with no 
edge of discrepancy larger than $O(\log k)$, it is NP-hard to find a $2$-coloring with a 
$2^{-O(k)}$ fraction of monochromatic edges.
\end{theorem}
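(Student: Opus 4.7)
I would reduce from the multi-layer smooth Label Cover of Dinur--Guruswami (so as to obtain genuine NP-hardness rather than only UG-hardness) via a long-code dictatorship test whose query distribution is engineered to produce low column-sums on dictator colorings. For each Label Cover vertex $u$ with alphabet $[R]$, introduce a block of $2^R$ hypergraph vertices indexed by $\{-1,+1\}^R$. A hyperedge is generated from a Label Cover constraint $(u,v,\pi)$ by choosing a ``base matrix'' $B \in \{-1,+1\}^{k\times R}$ in which every column is perfectly balanced ($k/2$ entries of each sign), perturbing each entry of $B$ independently with probability $\delta = \Theta(\log^2 k/k)$, placing half of the resulting rows in block $u$ and the other half (pulled back through $\pi$) in block $v$, and finally discarding any sample whose columns exceed discrepancy $C\log k$ for a suitably large constant $C$; a Chernoff/union bound shows this loses only a vanishing fraction of the mass and can be absorbed.

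\textbf{Completeness and soundness.} If $\sigma$ is a satisfying Label Cover labeling, color each block by the dictator $x\mapsto x_{\sigma(u)}$. Then every surviving hyperedge has discrepancy equal to the signed sum of the perturbed $\sigma(u)$-th column of $B$; because the column started at $0$ and was perturbed at rate $\delta$, this sum is a sum of independent $\{0,\pm 2\}$ increments of magnitude $O(\sqrt{\delta k})=O(\log k)$, giving the promised bound. For soundness, suppose some $2$-coloring $f$ leaves fewer than a $2^{-\alpha k}$ fraction of edges monochromatic for a small constant $\alpha>0$. The $\delta$-noise exponentially damps Fourier levels of $f_u$ above $O(1/\delta)$, so the test's non-monochromatic probability is controlled by the low-degree part of $f_u$; combining standard invariance-principle decoding with the smoothness and folding properties of the Label Cover extracts, per block, a short list of influential coordinates, which the multi-layer structure stitches together into a labeling satisfying a non-negligible fraction of Label Cover constraints, contradicting its soundness.

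\textbf{Main obstacle.} The crux is calibrating $\delta$ so that column-sum concentration keeps the completeness discrepancy at $O(\log k)$ while the soundness decoding still succeeds against the $2^{-O(k)}$ monochromatic threshold. The $\log k$ slack in the discrepancy promise (versus the discrepancy-$2$ regime of \thmref{hardness2}) is precisely what allows $\delta$ to be polynomially small rather than $\Theta(1/k)$; this heavier noise is exactly what enables an \emph{exponential} $2^{-O(k)}$ soundness gap here, as opposed to the $k^{-O(k)}$ gap of \thmref{hardness2}. Making these two constraints match up quantitatively---through a careful Fourier tail estimate against the perturbed distribution---is the main technical step, and the smoothness layer is essential so that the influential-coordinate lists can be reconciled despite the near-random projections of the test.
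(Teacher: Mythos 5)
Your proposal takes a genuinely different and substantially heavier route than the paper, and it is also underdeveloped at the step where it would have to do real work. The paper's proof of \thmref{hardness3} is not a fresh Label Cover reduction at all; it is a short combinatorial ``cloud blow-up'' of the already-established NP-hardness in \thmref{hardness2}. Concretely, one sets $r = \Theta(k/\log k)$ and $s = k/r = \Theta(\log k)$, takes an $r$-uniform instance that is NP-hard to $2$-color better than $r^{-O(r)}$ under a discrepancy-$2$ promise, replaces each vertex $v$ by a cloud of $2s-1$ new vertices, and replaces each $r$-edge $(v_1,\dots,v_r)$ by the $d = \binom{2s-1}{s}^r \le 2^k$ hyperedges of the form $\cup_i V_i$ with $V_i \subset \cloud(v_i)$, $|V_i|=s$. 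Coloring each cloud monochromatically multiplies each discrepancy by $s$, so the completeness promise becomes discrepancy $2s = O(\log k)$. For soundness, a majority decoding shows that a $2$-coloring of the blown-up $k$-uniform instance with an $\alpha$ fraction of monochromatic edges yields a $2$-coloring of the original with at most $d\alpha$ fraction monochromatic, so the soundness threshold degrades only by $d$, giving $r^{-O(r)}/d = 2^{-O(k)}$. No Fourier analysis, dictatorship test, or invariance principle is needed for this step; those were already expended in the proof of \thmref{hardness2} (which is the $Q$-Hypergraph Label Cover reduction from Guruswami--Lee, not multi-layer smooth Label Cover as you propose).

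The main concern with your direct-reduction plan is precisely the part you flag as the ``main obstacle'': the soundness argument. Your proposed noise rate $\delta = \Theta(\log^2 k/k)$ only damps Fourier levels above $\Theta(1/\delta) = \Theta(k/\log^2 k)$, which is far from polylogarithmic, so standard invariance-principle decoding does not apply off the shelf. Moreover, the perturbed balanced-matrix test distribution is not obviously pairwise independent on the relevant marginals nor obviously reverse-hypercontractive, and you do not establish the correlation lower bound analogous to~\eqref{eq:reverse} that the paper's soundness argument for \thmref{hardness2} hinges on. It may be possible to make your route work with considerable effort, and it would then give \thmref{hardness3} directly without passing through \thmref{hardness2}, but as stated it re-derives a harder version of the machinery the paper already has in place, while leaving the quantitative heart of the argument unproven. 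The simpler takeaway is that once \thmref{hardness2} is in hand, \thmref{hardness3} is an elementary corollary.
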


\begin{theorem}
\thmlab{hardness4}
For $k$ such that $\chi := k - \sqrt{k}$ is an integer greater than $1$, and any $\epsilon > 0$, 
given a $k$-uniform hypergraph which admits a $\chi$-coloring with at most $\epsilon$ fraction 
of non-rainbow edges, it is UG-hard to find a $2$-coloring with a $(\frac{1}{2})^{k - 1}$ 
fraction of monochromatic edges.
\end{theorem}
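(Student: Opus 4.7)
The plan is a Long Code reduction from Unique Games. The heart of the argument is the design of a single-coordinate test distribution $\mu$ on $[\chi]^k$ that is simultaneously (a) supported on rainbow (surjective) $k$-tuples, so that dictator $\chi$-colorings witness the near-$\chi$-rainbow colorability in the completeness case, and (b) pairwise uniform on $[\chi]^2$, which is the standard condition under which the invariance principle drives the soundness bound all the way to the random-assignment threshold $(1/2)^{k-1}$. I would take $\mu$ to be the uniform distribution over those surjective maps $\tau:[k]\to[\chi]$ whose multiplicity pattern is $(\sqrt{k}+1,1,\ldots,1)$---one ``heavy'' color used $\sqrt{k}+1$ times and each of the remaining $\chi-1$ colors used exactly once. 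Every $\tau$ in the support is rainbow by construction; $S_k\times S_\chi$-symmetry forces uniform singleton marginals; and the pairwise collision probability works out, using $(k+\sqrt{k})(k-\sqrt{k})=k(k-1)$, to
\[
\Pr[Y_i=Y_j]\;=\;\binom{\sqrt{k}+1}{2}\Big/\binom{k}{2}\;=\;\frac{(\sqrt{k}+1)\sqrt{k}}{k(k-1)}\;=\;\frac{k+\sqrt{k}}{(k+\sqrt{k})(k-\sqrt{k})}\;=\;\frac{1}{\chi},
\]
which combined with $S_\chi$-symmetry yields $\Pr[Y_i=c,\,Y_j=c']=1/\chi^2$ for every $c,c'\in[\chi]$, i.e., pairwise uniform marginals.

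With $\mu$ fixed, the reduction itself is standard. Given a Unique Games instance $\Psi$ with label set $[R]$ and edge permutations $\pi_{uv}$, I build a $k$-uniform hypergraph on $V(\Psi)\times[\chi]^R$ whose hyperedges are sampled as follows: pick a UG vertex $u$ and $k$ random neighbors $v_1,\ldots,v_k$; for each $r\in[R]$ independently draw $(Y^1_r,\ldots,Y^k_r)\sim\mu$; set $x^i=Y^i\circ\pi_{uv_i}^{-1}$; and add the hyperedge $\{(v_i,x^i)\}_{i=1}^k$. For completeness, if $\Psi$ admits a labeling $\ell$ satisfying a $1-\epsilon/k$ fraction of edges, the $\chi$-coloring $g_v(x)=x_{\ell(v)}$ renders a hyperedge rainbow whenever all $k$ UG edges $(u,v_i)$ are satisfied by $\ell$ (the induced color tuple becomes $(Y^1_{\ell(u)},\ldots,Y^k_{\ell(u)})\sim\mu$, which is always rainbow), so a union bound leaves at most an $\epsilon$-fraction of non-rainbow hyperedges.

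For soundness, suppose a $2$-coloring $\{f_v:[\chi]^R\to\{-1,+1\}\}$ achieves monochromatic fraction exceeding $(1/2)^{k-1}+\delta$. After folding each $f_v$ to have mean zero and Fourier-expanding,
\[
\Pr[\text{mono}]\;=\;(1/2)^{k-1}\Bigl(1\,+\!\sum_{\substack{S\subseteq[k],\,|S|\text{ even}\\ S\neq\emptyset}}\mathbb{E}\Bigl[\prod_{i\in S}f_{v_i}(x^i)\Bigr]\Bigr),
\]
some $|S|\ge 2$ must contribute noticeably. By Mossel's invariance principle for correlated spaces instantiated on the non-Boolean domain $[\chi]^R$, if every $f_{v_i}$ had only small low-degree influences then this expectation would be close to its Gaussian analogue, and pairwise independence of $\mu$ makes the matching Gaussian vectors fully independent so the (mean-zero, folded) Gaussian expectation vanishes. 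Hence some $f_{v_i}$ must carry a nontrivial low-degree influence at some coordinate, and standard influential-coordinate decoding extracts a labeling satisfying a constant fraction of $\Psi$, contradicting the UGC. The main obstacle is not the reduction but the construction of $\mu$: the identity $\sqrt{k}(\sqrt{k}+1)(k-\sqrt{k})=k(k-1)$ is what forces pairwise uniformity \emph{exactly} at $\chi=k-\sqrt{k}$, and a short counting argument (the maximum of $\sum_c m_c^2$ over surjective tuples is attained precisely by this pattern) shows that no rainbow-supported distribution on $[\chi]^k$ can be pairwise uniform for any $\chi$ closer to $k$, matching the $\sqrt{k}$ threshold at which the algorithmic result of the paper takes over.
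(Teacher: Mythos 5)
Your approach matches the paper's: the distribution you describe (uniform over surjective tuples in $[\chi]^k$ with one color of multiplicity $\sqrt{k}+1$ and the remaining $\chi-1$ colors of multiplicity $1$) is exactly the one the paper constructs via its three-step sampling procedure (choose a $\chi$-subset $S$ of positions, assign them a random permutation of $[\chi]$, and give the remaining $\sqrt{k}$ positions a shared random color $y$), and both arguments then invoke the Austrin--Mossel machinery for pairwise-independent, rainbow-supported distributions, whose details the paper likewise omits. Your verification of pairwise uniformity via the single orbit-counting identity $\binom{\sqrt{k}+1}{2}\big/\binom{k}{2}=1/\chi$ is a cleaner route than the paper's two-case computation of $\Pr[x_i=a,x_j=b]$, and your closing sharpness remark --- that pairwise uniformity forces $\sum_c m_c^2$ to attain its maximum over surjective multiplicity vectors, which happens precisely at $\ell=\sqrt{k}$ --- is a correct observation the paper does not make.
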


For Min-Coloring, all three promises lead to an $\tilde{O}(n^{\frac{1}{k}})$-coloring that 
is decreasing in $k$. These results are also notable in the sense that our promises are helpful 
not only for structured SDP solutions, but also for combinatorial {\em degree reduction} 
algorithms.

\begin{theorem}
    Consider any $k$-uniform hypergraph $H=(V,E)$ with $n$ vertices and $m$ edges.
    For any $\ell < O(\sqrt{k})$, 
    If $H$ has discrepancy-$\ell$, $(k - \ell)$-rainbow colorable, or $(k + \ell)$-strong colorable, 
    one can color $H$ with $\tilde{O}((\frac{m}{n})^{\frac{\ell^2}{k^2}}) \leq 
    \tilde{O}(n^{\frac{\ell^2}{k}})$ colors. 
\end{theorem}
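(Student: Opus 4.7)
The strategy is to combine a structured partial-coloring routine (which exploits the promise) with an iterative edge-reduction argument. Each of the three promises --- discrepancy-$\ell$, $(k-\ell)$-rainbow, and $(k+\ell)$-strong --- implies that the hypergraph admits a $2$-coloring in which every edge has imbalance $O(\ell)$; for the rainbow and strong promises this is obtained by randomly merging the $\chi$ color classes into two super-classes and applying a Hoeffding estimate per edge. This structural property is captured by a semidefinite relaxation with unit vectors $v_i \in \RR^n$ satisfying $\norm{2}{\sum_{i \in e} v_i}^2 \leq O(\ell^2)$ for every edge $e$, a relaxation that is feasible under the promise and solvable in polynomial time.

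The partial-coloring step projects the SDP vectors onto a small number of independent standard Gaussian directions and partitions the vertices into $c$ buckets according to the sign pattern of their projections. For a fixed edge $e$, the $k$ projections $\iprod{v_i}{g}$ are jointly Gaussian of unit variance whose sum has variance at most $O(\ell^2)$, forcing the average pairwise correlation to be $\Theta(-1/k)$. A Gaussian orthant / anti-concentration estimate then bounds the probability that all $k$ projections fall into a single bucket by $c^{-\Omega(k^2/\ell^2)}$, so that the expected number of monochromatic edges after one round is at most $m \cdot c^{-\Omega(k^2/\ell^2)}$.

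Choosing the smallest $c$ with $m \cdot c^{-\Omega(k^2/\ell^2)} \leq n$ yields $c = \tilde O((m/n)^{\ell^2/k^2})$. The residual monochromatic edges form a hypergraph on $O(n)$ edges, which we then recolor recursively; since the edge count shrinks by a $\Omega(m/n)$ factor per level, only $\tilde O(\log n)$ levels of recursion are needed, giving the stated color count. The upper bound $\tilde O(n^{\ell^2/k})$ follows from the trivial estimate $m \leq n^k$, which gives $(m/n)^{\ell^2/k^2} \leq n^{(k-1)\ell^2/k^2} \leq n^{\ell^2/k}$.

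The main obstacle is the orthant-probability estimate: extracting the $c^{-\Omega(k^2/\ell^2)}$ monochromatic-edge bound from the single SDP constraint $\norm{2}{\sum_{i \in e} v_i} \leq O(\ell)$ requires a careful use of Gaussian anti-concentration, presumably via Slepian-type comparisons against negatively correlated Gaussians whose covariance is forced by the SDP. A secondary concern is the conversion of the rainbow/strong promise into a low-discrepancy SDP solution without losing more than constant factors in $\ell$; the random-merging argument is delicate because it must preserve per-edge balance simultaneously across all edges, and may require a union bound that forces $\ell$ to absorb an additional $\sqrt{\log m}$ factor.
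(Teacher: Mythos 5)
Your overall template --- solve an SDP expressing low per-edge imbalance, round to many color classes, recurse on monochromatic residuals, and use $m \leq n^k$ to get the $n^{\ell^2/k}$ form --- is in the right spirit, but the central analytic step is wrong, and the architecture differs in a way that matters.

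The claim that the single constraint $\norm{2}{\sum_{i\in e} v_i} \leq O(\ell)$ forces the Gaussian orthant probability down to $c^{-\Omega(k^2/\ell^2)}$ does not hold, and the paper itself exhibits a counterexample (Section~\ref{section:hardness}): take $k/3$ copies each of three planar unit vectors summing to zero and lift them into $\Re^3$ by giving every vector a third coordinate $\approx 1/k$. This satisfies the discrepancy-$1$ SDP, yet the probability that all $k$ inner products $\iprod{v_i}{g}$ share a sign is $\Omega(1/k^2)$ --- whenever the first two Gaussian coordinates have norm $\lesssim 1/k$ (probability $\Theta(1/k^2)$) and the third is $\Omega(1)$. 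So the monochromatic probability under your sign-pattern bucketing into $c$ classes is only $k^{-O(\log c)}$, which collapses to $c\approx (m/n)^{1/\log k}$ rather than $(m/n)^{\ell^2/k^2}$. The ``average pairwise correlation $\Theta(-1/k)$'' inference is exactly the step that fails: the sum-norm constraint bounds the one-dimensional variance of $\sum_{i\in e}\iprod{v_i}{g}$, but it places no lower bound on the para-volume $\detm{A_U}$, and without a para-volume bound the full orthant measure can be polynomially large. (The paper needs and proves such a para-volume bound only in the rainbow/strong case, via the column-subset-selection argument; for discrepancy it cannot be salvaged, which is precisely why the Max-$2$-Coloring guarantee there is only marginal.)

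The paper sidesteps this by not using sign/orthant rounding for Min-Coloring at all. It uses Karger--Motwani--Sudan threshold rounding: pick a Gaussian $r$ and take $\{i : \iprod{v_i}{r}\geq \tau\}$ as a candidate independent set. An edge $e$ survives only if $\iprod{r}{\sum_{i\in e} v_i}\geq k\tau$, and since $\norm{2}{\sum_{i\in e}v_i}\leq \ell$ this is a \emph{one-dimensional} Gaussian tail of order $e^{-k^2\tau^2/(2\ell^2)}$ --- no orthant measure and no para-volume appear, so the degenerate configuration above is handled automatically. Coupled with the KMS choice $\tau^2 = \Theta(\ell^2\log t / k^2)$ this yields $t^{\Theta(\ell^2/k^2)}\log n$ colors in a max-degree-$t$ hypergraph. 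The second component you are missing is the degree reduction: before running the SDP algorithm, the paper applies combinatorial Wigderson/Alon-style preprocessing that exploits the promise itself (e.g.\ under $(k+1)$-strong colorability, the neighborhood of a monochromatic set is $k$-strongly colorable, hence $2$-colorable) to bring the max degree down to roughly $m/n$, which is where the $(m/n)^{\ell^2/k^2}$ exponent actually comes from. Finally, the rainbow/strong cases do not need your random-merging reduction to discrepancy: the paper writes a direct SDP relaxation for each promise, all of which already carry the constraint $\norm{2}{\sum_{i\in e}u_i}\leq \ell$, so there is no $\sqrt{\log m}$ slack to worry about.
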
 

These results significantly improve the current best $\tilde{O}(n^{1 - \frac{1}{k}})$ colors 
that assumes only $2$-colorability. 
Our techniques give slightly better results depending on the promise --- see~\thmref{min:coloring}. 
Table~\ref{tab:summary} summarizes our results.

\begin{table}[htpb]
\centering
\small
{
\begin{tabular}{| c  | | c | c | c |}
\hline
Promises & $\ell$-Discrepancy & $(k - \ell)$-Rainbow & $(k + \ell)$-Strong \\
\hline
\hline
Max-2-Coloring &
$1 - (1/2)^{k - 1} + \delta$, \quad $\ell < \sqrt{k}^{\dagger}$ & 
$1 - (1/k)^{\Omega(k)}$, \quad $\ell \ll \sqrt{k}$ & 
$1 - (1/k)^{\Omega(k)}$, \quad $\ell \ll  \sqrt{k}$ 
\\
Algorithm &&&
\\
&&&
\\
\hline
Max-2-Coloring 
& UG: $1 - (1/2)^{5k}$, \,\,$\ell = 1$.
& UG: $1 - (1/2)^{k - 1}$, \,$\ell = \Omega(\sqrt{k})$
& \\
Hardness
& NP: $1 - (1/k)^{O(k)}$, \,\,$\ell = 2$.
& & \\
& NP: $1 - (1/2)^{O(k)}$, \,\,$\ell = \Omega(\log k)$
& & \\
& UG: $1 - (1/2)^{k - 1}$, \,\,$\ell \geq \sqrt{k}^{\dagger}$
& & \\
&&&
\\
\hline
Min-Coloring & $n^{\ell^2/k}$, \quad $\ell = O(\sqrt{k})$ 
& $n^{\ell^2/k}$, \quad $\ell = O(\sqrt{k})$ 
& $n^{\ell^2/k}$, \quad $\ell = O(\sqrt{k})$     
\\
Algorithm &&&
\\
\hline
\end{tabular}
}

\caption{{\small Summary of our algorithmic and hardness results with 
valid ranges of $\ell$. Two results with $\dagger$ are implied in~\cite{GL15b}.
The numbers of the first row indicate lower bounds on the fraction of 
non-monochromatic edges in a $2$-coloring produced by our algorithms.
$\delta := \delta(k, \ell) > 0$ is a small constant. The second row shows upper 
bounds on the fraction of non-monochromatic edges achieved by polynomial time 
algorithms. For the UG-hardness results, note that the input hypergraph does 
not have all edges satisfying the promises but almost edges satisfying them. 
The third row shows the upper bound upto $\log$ factors, on the number of 
colors one can use to properly $2$-color the graph.
}
\label{tab:summary}
}
\end{table}

\subsection{Techniques}
Our algorithms for Max-$2$-Coloring are straightforward applications of 
semidefinite programming, 
namely, we use natural vector relaxations of the promised properties, and 
round using a random hyperplane. The analysis however, is highly non-trivial 
and boils down to approximating a multivariate Gaussian integral. 
In particular, we show a (to our knowledge, new) upper bound on the Gaussian 
measure of simplicial cones in terms of simple properties of these cones. We 
should note that this upper bound is sensible only for simplicial cones that are 
well behaved with respect to the these properties. 
(The cones we are interested in are those given by the intersection of
hyperplanes whose normal vectors constitute a solution to our vector
 relaxations). We believe our analysis to be of independent interest as similar approaches may work for other $k$-CSP's.

\subsubsection{Gaussian Measure of Simplicial Cones}
As can be seen via an observation of Kneser \cite{kneser1936simplexinhalt}, the Gaussian measure of a simplicial cone is equal to the fraction of spherical volume taken up by a spherical simplex (a spherical simplex is the intersection of a simplicial 
cone with a ball centered at the apex of the cone). This however, is a very old problem in spherical geometry, and while some things are known, like a nice differential formula due to Schlafli (see \cite{schlafli1858multiple}), closed forms upto four dimensions (see \cite{murakami2005volume}), and a complicated power series expansion due to Aomoto \cite{aomoto1977analytic}, it is likely hopeless to achieve a closed form solution 
or even an asymptotic formula for the volume of general spherical simplices. 

Zwick \cite{zwick1998approximation} considered the performance of hyperplane rounding in various $3$-CSP formulations, and this involved analyzing the volume of a $4$-dimensional spherical simplex.  Due to the complexity of this volume function, the analysis was tedious, and non-analytic for many of the formulations. 
His techniques were based on the Schlafli differential formula, which relates the volume differential of a spherical simplex to the volume functions of its codimension-$2$ faces and dihedral angles. However, to our knowledge not much is known about the general volume function in even $6$ dimensions. This suggests that Zwick's techniques are unlikely to be scalable to higher 
dimensions.

On the positive side, an asymptotic expression is known in the case of symmetric spherical simplices, due to H. E. Daniels \cite{rogers1964packing} who gave the analysis for regular cones of angle $\cos^{-1}(1/2)$. His techniques were extended by Rogers \cite{rogers1961asymptotic} and Boeroeczky and Henk \cite{boroczky1999random} to the whole class of regular cones. 

We combine the complex analysis techniques employed by Daniels with a 
lower bound on quadratic forms in the positive orthant, to give an 
upper bound on the Gaussian measure of a much larger class of simplicial 
cones. 

\subsubsection{Column Subset Selection}
Informally, the cones for which our upper bound is relevant 
are those that are high dimensional in a strong sense, i.e. the normal 
vectors whose corresponding hyperplanes form the cone, must be such that 
no vector is too close to the linear span of any subset of the 
remaining vectors. 

When the normal vectors are solutions to our rainbow colorability SDP relaxation, this need not be true. However, this can be remedied. 
We consider the column matrix of these normal vectors, and using 
spectral techniques, we show that there is a reasonably large subset of columns
(vectors) that are well behaved with respect to condition number. 
We are then able to apply our Gaussian Measure bound to the cone given by 
this subset, admittedly in a slightly lower dimensional space.

\section{Approximate Max-$2$-Coloring}
\seclab{maxcolor:approx}

In this section we show how the properties of $(k+\ell)$-strong colorability and 
$(k-\ell)$-rainbow colorability in $k$-uniform hypergraphs allow one to $2$-color 
the hypergraph, such that the respective fractions of monochromatic edges are 
small. For 
$\ell = o(\sqrt{k})$, these guarantees handsomely beat the 
naive random algorithm (color every vertex blue or red uniformly and 
independently at random), wherein the expected fraction of monochromatic edges 
is $1/2^{k-1}$.

Our algorithms are straightforward applications of semidefinite programming, 
namely, we use natural vector relaxations of the above properties, and 
round using a random hyperplane. The analysis however, is quite involved.

\subsection{Semidefinite Relaxations}
\seclab{relaxations}
Our SDP relaxations for low-discrepancy, rainbow-colorability, and strong-colorability are the following.
Given that $\langle v_i, v_j \rangle = \frac{-1}{\chi - 1}$ when unit vectors $v_1, \dots, v_{\chi}$ form 
a $\chi$-regular simplex centered at the origin, it is easy to show that they are valid relaxations. 

\paragraph{Discrepancy $\ell$.}
\begin{align}
\rlxlab{discrepancy}
\forall\,e\in E,~\norm{2}{\sum_{i\in e} u_{i}} &\leq \ell \\
\forall\,i \in [n],~\norm{2}{u_i} &= 1 \nonumber\\
\forall\,i \in [n],~u_i\in\Re^n \nonumber
\end{align}

\paragraph{Feasibility.}
For $k,\ell$ such that $(k-\ell)\mod 2 \equiv 0$, consider any $k$-uniform hypergraph 
$H=(V=[n],E)$, and any $2$-coloring of $H$ of discrepancy $\ell$. 
Pick any unit vector $w\in\Re^n$. For each vertex of the first color in the 
coloring, assign the vector $w$, and for each vertex of the second color assign 
the vector $-w$. This is a feasible assignment, and hence \rlxref{discrepancy} 
is a feasible relaxation for any hypergraph of discrepancy $\ell$.

\paragraph{$(k-\ell)$-Rainbow Colorability.}

\begin{align}
\rlxlab{rainbow}
\forall\,e\in E,~\norm{2}{\sum_{i \in e} u_{i}} &\leq \ell \\
\forall\,e\in E,~\forall\,i<j \in e,~\iprod{u_{i}}{u_{j}} 
&\geq \frac{-1}{k-\ell-1} \nonumber\\
\forall\,i \in [n],~\norm{2}{u_i} &= 1 \nonumber\\
\forall\,i \in [n],~u_i\in\Re^n \nonumber
\end{align}

\paragraph{Feasibility.}
Consider any $k$-uniform hypergraph $H=(V=[n],E\subseteq {V\choose k})$, and 
any $(k-\ell)$-rainbow coloring of $H$. As testified by the vertices of the 
$(k-\ell)$-simplex, we can always choose unit vectors $w_1 \dots w_{k-\ell}\in\Re^n$ 
satisfying,
\[
\forall\,i<j \in [k-\ell],~\iprod{w_{i}}{w_{j}} = \frac{-1}{k-\ell-1},
\]
It is not hard to verify that consequently,
\[
\forall\,a_1,\dots ,a_{k-\ell} \in [l],~\sum_{i\in [k-\ell]} a_i =k,\quad
\text{we have,~}
~\norm{2}{\sum_{i \in e} a_iw_{i}} \leq \ell
\]
\medskip

\noindent
For each vertex of the color $i$, assign the vector $w_i$. This is a feasible 
assignment, and hence \rlxref{rainbow} is a feasible relaxation for any hypergraph of 
rainbow colorability $k-\ell$.
\bigskip

\paragraph{$(k+\ell)$-Strong Colorability.}

\begin{align}
\rlxlab{strong}
\forall\,e\in E,~\forall\,i<j \in e,\quad\iprod{u_{i}}{u_{j}} &= \frac{-1}{k+\ell-1} \\
\forall\,i \in [n],~\norm{2}{u_i} &= 1 \nonumber\\
\forall\,i \in [n],~u_i\in\Re^n \nonumber
\end{align}

\paragraph{Feasibility.}
Consider any $k$-uniform hypergraph $H=(V=[n],E\subseteq {V\choose k})$, and 
any $(k+\ell)$-strong coloring of $H$. As testified by the vertices of the 
$(k+\ell)$-simplex, we can always choose unit vectors $w_1 \dots w_{k+\ell}\in\Re^n$ 
satisfying,
\[
    \forall\,i<j \in [k-\ell],~\iprod{w_{i}}{w_{j}} = -\frac{1}{k+\ell-1},
\]
It is not hard to verify that consequently,
\[
    \forall\,J\subset [k+\ell],\,\cardin{J}=k,~\norm{2}{\sum_{i\in J} w_{i}} = \ell
\]
\medskip

\noindent
For each vertex of the color $i$, assign the vector $w_i$. This is a feasible 
assignment, and hence the \rlxref{strong} is a feasible relaxation for any 
hypergraph of strong colorability $k+\ell$.
\bigskip

Our rounding scheme is the same for all the above relaxations. 
\paragraph{Rounding Scheme.}
Pick a standard $n$-dimensional Gaussian random vector $r$. For any $i\in [n]$, 
if $\iprod{v_i}{r}\geq 0$, then vertex $i$ is colored blue, and otherwise it is 
colored red.

\subsection{Setup of Analysis}
\seclab{setup:local}
We now setup the framework for analyzing all the above relaxations.

\noindent
Consider a standard $n$-dimensional Gaussian random vector $r$, i.e. each coordinate is 
independently picked from the standard normal distribution $\normal{0}{1}$. The following 
are well known facts (the latter being due to Renyi), 

\begin{lemma}
\lemlab{sphere:gauss}
    $r/\norm{2}{r}$ is uniformly distributed over the unit sphere in $\Re^n$.
\end{lemma}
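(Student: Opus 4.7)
The plan is to use the rotational invariance of the standard multivariate Gaussian together with the uniqueness of the rotation-invariant probability measure on the unit sphere. First I would observe that the joint density of $r$ is $(2\pi)^{-n/2}\exp(-\norm{2}{x}^2/2)$, which depends on $x$ only through $\norm{2}{x}$. Hence for any orthogonal matrix $U\in O(n)$, the random vector $Ur$ has the same distribution as $r$: the density is unchanged because $\norm{2}{Ux}=\norm{2}{x}$, and the change-of-variables Jacobian has absolute value $1$.

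Next I would transfer this invariance to the direction $s := r/\norm{2}{r}$, viewed as a random element of the unit sphere $S^{n-1}\subset\Re^n$ (which is well-defined almost surely since $\Prob{r=0}=0$). For any $U\in O(n)$, the identity $\norm{2}{Ur}=\norm{2}{r}$ gives $Us = Ur/\norm{2}{Ur}$, and since $Ur\stackrel{d}{=}r$ we conclude $Us\stackrel{d}{=}s$. Thus the law of $s$ is an $O(n)$-invariant Borel probability measure on $S^{n-1}$. By the uniqueness of the Haar probability measure on $S^{n-1}$, this law must coincide with the normalized surface measure, i.e., the uniform distribution on the unit sphere.

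The main obstacle is essentially nonexistent, as this is a textbook fact; the only subtlety is deciding how much to assume. If one prefers to avoid quoting uniqueness of Haar measure, one can instead pass to polar coordinates $(\rho,\theta)\in(0,\infty)\times S^{n-1}$, where the Gaussian density factors as $(2\pi)^{-n/2}\exp(-\rho^2/2)\,\rho^{n-1}\dif{\rho}\,\dif{\sigma(\theta)}$ with $\sigma$ the unnormalized surface measure on $S^{n-1}$. This product form simultaneously shows that $\norm{2}{r}$ and $r/\norm{2}{r}$ are independent and that the marginal of the directional component is proportional to $\sigma$, hence uniform. Either route yields the claim in a few lines.
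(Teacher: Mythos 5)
Your proof is correct and is the standard argument; the paper itself states this lemma without proof as a well-known fact, so there is no proof in the paper to compare against. Both routes you sketch (rotational invariance plus uniqueness of the $O(n)$-invariant probability measure on $S^{n-1}$, or the explicit polar-coordinate factorization showing independence of radius and direction) are valid and complete.
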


\paragraph{Note.}
\lemref{sphere:gauss} establishes that our rounding scheme is equivalent to random 
hyperplane rounding.

\begin{lemma}
\lemlab{gauss:affine:equiv}
    Consider any $j< n$. The projections of $r$ onto the pairwise orthogonal unit 
    vectors $e_1,\dots ,e_j$ are independent and have distribution $\normal{0}{1}$.
\end{lemma}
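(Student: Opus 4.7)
The plan is to exploit the rotational invariance of the standard Gaussian distribution, which is the whole reason one uses random hyperplane rounding in the first place. First I would extend the orthonormal set $e_1,\dots ,e_j$ to a full orthonormal basis $e_1,\dots ,e_n$ of $\Re^n$ (say by Gram--Schmidt applied to any linear extension). Let $U$ denote the orthogonal matrix whose $i$-th column is $e_i$, so $U^{T}U=I$ and $|\detm{U}|=1$.

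Next I would consider the random vector $r' := U^{T} r$, whose $i$-th coordinate is exactly $\iprod{e_i}{r}$. Because $U$ is orthogonal we have $\norm{2}{U^{T} r}^2 = \norm{2}{r}^2$, so the standard change-of-variables formula shows that the density of $r'$ is
\[
\frac{1}{(2\pi)^{n/2}} \exp\pth{-\norm{2}{r'}^2/2},
\]
i.e.\ $r'$ is again a standard $n$-dimensional Gaussian. Its coordinates are therefore independent $\normal{0}{1}$ random variables; restricting to the first $j$ of them yields the claim.

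The only step that requires any thought at all is the change-of-variables computation, and even that is completely routine; I do not anticipate a real obstacle. If I wanted an even shorter proof I would argue directly that the vector $\pth{\iprod{e_1}{r},\dots ,\iprod{e_j}{r}}$ is a linear image of a Gaussian and is therefore jointly Gaussian, with mean zero and covariance $\Ex{\iprod{e_i}{r}\iprod{e_k}{r}} = e_i^{T}\, \Ex{rr^{T}}\, e_k = \iprod{e_i}{e_k} = \delta_{ik}$; since jointly Gaussian random variables with diagonal covariance are independent and each marginal here has unit variance, the conclusion follows.
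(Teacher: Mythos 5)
Your proof is correct. The paper does not actually supply a proof of this lemma; it is stated as a well-known fact (the text attributes it to R\'enyi) and used as-is. Both of your arguments are standard and sound: the change-of-variables/rotational-invariance argument (extend to an orthonormal basis, apply the orthogonal matrix $U^{T}$, observe the density is unchanged) and the shorter covariance argument (the map $r \mapsto (\iprod{e_1}{r},\dots,\iprod{e_j}{r})$ is linear, hence the image is jointly Gaussian, mean zero, covariance $\delta_{ik}$, and a jointly Gaussian vector with diagonal covariance has independent coordinates). Either one fills the gap the paper leaves implicit; the second is the cleanest since it avoids extending the basis and avoids the (harmless but unnecessary) factor-of-$|\detm{U}|$ bookkeeping.
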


Next, consider any $k$-uniform hypergraph $H=(V=[n],E\subseteq {V\choose k})$ that is 
feasible for any of the aforementioned formulations. Our goal now, is to analyze the 
expected number of monochromatic edges. To obtain this expected fraction with high 
probability, we need only repeat the rounding scheme polynomially many times, and the 
high probability of a successful round follows by Markov's inequality. Thus we are 
only left with bounding the probability that a particular edge is monochromatic. 

To this end, consider any edge $e\in E$ and let the vectors 
corresponding to the vertices in $e$ be $u'_1,\dots,u'_k$. Consider a  
$k$-flat $\mathcal{F}$ (subspace of $\Re^n$ congruent to $\Re^k$), containing 
$u'_1,\dots,u'_k$. Applying \lemref{gauss:affine:equiv} to the standard basis of 
$\mathcal{F}$, implies that the projection of $r$ into $\mathcal{F}$ has the standard 
$k$-dimensional Gaussian distribution. Now since projecting $r$ onto 
$\mathrm{Span}\pth{u'_1,\dots u'_k}$ preserves the inner products 
$\brc{\iprod{r}{u'_i}}_i$~, we may assume without loss of generality that 
$u'_1,\dots ,u'_k$ are vectors in $\Re^k$, and the rounding scheme corresponds to 
picking a random $k$-dimensional Gaussian vector $r$, and proceeding as before. 
\medskip

\noindent
Let $U$ be the $k\times k$ matrix whose columns are the vectors $u'_1,\dots,u'_k$ 
and $\mu$ represent the Gaussian measure in $\Re^k$.
Then the probability of $e$ being monochromatic in the rounding is given by, 
\begin{equation}
\eqlab{monochr:prob}
    \mu\pth{\brc{x\in \Re^k \sep{U^Tx \geq 0}}}
    +
    \mu\pth{\brc{x\in \Re^k \sep{U^Tx < 0}}}
    =
    2 \mu\pth{\brc{x\in \Re^k \sep{U^Tx \geq 0}}}
\end{equation}
In other words, this boils down to analyzing the Gaussian measure of the cone given 
by $U^Tx\geq 0$. We thus take a necessary detour.

\subsection{Gaussian Measure of Simplicial Cones}
In this section we show how to bound the Gaussian measure of a special class of simplicial cones. 
This is one of the primary tools in our analysis of the previously introduced SDP relaxations. 
We first state some preliminaries.

\subsubsection{Preliminaries}

\paragraph{Simplicial Cones and Equivalent Representations.}
A simplicial cone in $\Re^k$, is given by the intersection of a set of $k$ linearly independent 
halfspaces. For any simplicial cone with apex at position vector $p$, there is a unique set 
(upto changes in lengths), of $k$ linearly independent vectors, such that the direct sum of $\brc{p}$ 
with their positive span produces the cone. 
Conversely, a simplicial cone given by the direct sum of $\brc{p}$ and the positive span of $k$ linearly 
independent vectors, can be expressed as the intersection of a unique set of $k$ halfspaces with apex 
at $p$. We shall refer to the normal vectors of the halfspaces above, as simply normal vectors of the cone, 
and we shall refer to the spanning vectors above, as simplicial vectors.
We represent a simplicial cone $C$ with apex at $p$, as $(p,U,V)$ where $U$ is a column matrix 
of unit vectors $u_1, \dots ,u_k$ (normal vectors), $V$ is a column matrix of unit vectors 
$v_1, \dots ,v_k$ (simplicial vectors) and 
\[
C = \brc{x\in\Re^k \sep{u_1^T x\geq p_1,\dots ,u_k^T x \geq p_k}} = 
\brc{p+x_1v_1+\dots +x_kv_k \sep{x\geq 0, x\in\Re^k}}
\]

\paragraph{Switching Between Representations.}
Let $C\equiv (0,U,V)$ be a simplicial cone with apex at the origin. It is not hard to see that any $v_i$ 
is in the intersection of exactly $k-1$ of the $k$ halfspaces determined by $U$, and it is thus orthogonal 
to exactly $k-1$ vectors of the form $u_j$. We may assume without loss of generality that for any $v_i$, 
the only column vector of $U$ not orthogonal to it, is $u_i$. Thus clearly $V^TU = D$ where $D$ is some 
non-singular diagonal matrix. Let $A_U=U^TU$ and $A_V=V^TV$, be the gram matrices of the vectors. 
$A_U$ and $A_V$ are positive definite symmetric matrices with diagonal
entries equal to one (they comprise of the pairwise inner products of the normal and simplicial vectors respectively). 
Also, clearly, 
\begin{equation}
\eqlab{conversion}
V = U^{-T}D,
\qquad
A_V = DA_U^{-1}D
\end{equation}
One then immediately obtains:
$(A_V)_{ij} = \frac{a_{ij}}{\sqrt{a_{ii}a_{jj}}}$, and 
$
(A_U)_{ij} = \frac{-a'_{ij}}{\sqrt{a'_{ii}a'_{jj}}}$.
where $a_{ij}$ and $a'_{ij}$ are the cofactors of the $(i,j)^{th}$ entries of  $A_U$ and $A_V$ respectively. 

\paragraph{Formulating the Integral.}
Let $C\equiv (0,U,V)$ be a simplicial cone with apex at the origin, and for $x\in\Re^k$, let$\dif{x}$ denote the 
differential of the standard $k$-dimensional Lebesgue measure. Then the Gaussian measure of $C$ is given by,
\begin{alignat*}{2}
&~\frac{1}{\pi^{k/2}}\int\limits_{U^Tx\geq 0} e^{-\norm{2}{x}^2}\dif{x}\nonumber
&&=~
\frac{\mathrm{det}(V)}{\pi^{k/2}}\int\limits_{\Re^k_{+}} e^{-\norm{2}{Vx}^2}\dif{x}
\qquad \text{Subst. $x\leftarrow Vx$}\nonumber\\
=~&
\frac{\mathrm{det}(V)}{\pi^{k/2}}\int\limits_{\Re^k_{+}} e^{-\norm{2}{U^{-T}Dx}^2}\dif{x}
\qquad \text{By \Eqref{conversion}} \qquad \nonumber
&&=~
\frac{\mathrm{det}(V)}{\pi^{k/2}~\mathrm{det}(D)}
\int\limits_{\Re^k_{+}} e^{-\norm{2}{U^{-T}x}^2}\dif{x}
\qquad \text{Subst. $x\leftarrow Dx$}\nonumber\\
=~&
\frac{1}{\pi^{k/2}~\mathrm{det}(U)}\int\limits_{\Re^k_{+}} e^{-\norm{2}{U^{-T}x}^2}\dif{x}\nonumber
\eqlab{formulation}
&&=~
\frac{1}{\pi^{k/2}\sqrt{\detm{A_U}}}\int\limits_{\Re^k_{+}} e^{-x^{T}A_U^{-1}x}\dif{x}
\end{alignat*}

For future ease of use, we give a name to some properties.

\begin{defn}
    The \emph{para-volume} of a set of vectors (resp. a matrix $U$), is the volume of 
    the parallelotope determined by the set of vectors (resp. the column vectors of $U$).
\end{defn}

\begin{defn}
    The \emph{sum-norm} of a set of vectors (resp. a matrix $U$), is the length
    of the sum of the vectors (resp. the sum of the column vectors of $U$).
\end{defn}
 
\paragraph{Walkthrough of Symmetric Case Analysis.}
We next state some simple identities that can be found in say, \cite{rogers1964packing}, some of which were originally used 
by Daniels to show that the Gaussian measure of a symmetric cone in $\Re^k$ of angle 
$\cos^{-1}(1/2)$ (between any two simplicial vectors) is 
$
\frac{(1+o(1))~e^{k/2-1}}{\sqrt{2}^{k+1}\sqrt{k}^{k-1}\sqrt{\pi}^{k}}.
$
We state these identities, while loosely describing the analysis of the symmetric case, to give the reader an idea of their purpose.

\noindent
First note that the gram matrices $S_U$ and $S_V$, of the symmetric cone of angle 
$\cos^{-1}(1/2)$ are given by:
\[
S_U = (1+1/k)\mathrm{I} -\1\1^T/k
\qquad
S_V = (\mathrm{I} + \1\1^T)/2
\]
Thus $x^T S_U^{-1} x$ is of the form, 
\begin{equation}
\eqlab{sym:quad}
\alpha\norm{1}{x}^2 + \beta\norm{2}{x}^2
\end{equation}

The key step is in linearizing the $\norm{1}{x}^2$ term in the exponent, which allows us to 
separate the terms in the multivariate integral into a product of univariate integrals, and this is 
easier to analyze.
\begin{lemma}[Linearization]
\lemlab{linearize}
$
    \sqrt{\pi}e^{-s^2} = \int_{-\infty}^{\infty} e^{-t^2 + 2its}\dif{t}
$
\end{lemma}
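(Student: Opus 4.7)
The plan is to reduce the claimed identity to the standard Gaussian integral $\int_{-\infty}^{\infty} e^{-t^2}\dif{t} = \sqrt{\pi}$ by completing the square and then justifying a contour shift in the complex plane. Concretely, I first rewrite the exponent as
\begin{equation*}
-t^2 + 2its = -(t - is)^2 - s^2,
\end{equation*}
which factors the integrand and yields
\begin{equation*}
\int_{-\infty}^{\infty} e^{-t^2 + 2its}\dif{t} \;=\; e^{-s^2} \int_{-\infty}^{\infty} e^{-(t-is)^2}\dif{t}.
\end{equation*}
So the lemma reduces to showing that the shifted Gaussian integral $\int_{-\infty}^{\infty} e^{-(t-is)^2}\dif{t}$ equals $\sqrt{\pi}$, independently of the real parameter $s$.

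For this step I would invoke Cauchy's theorem. Since $f(z) = e^{-z^2}$ is entire, its integral around any closed contour vanishes. I would use the rectangular contour $\Gamma_R$ with vertices $-R,\ R,\ R - is,\ -R - is$, traversed counterclockwise. The integral over the top edge is $\int_{-R}^{R} e^{-t^2}\dif{t}$, which tends to $\sqrt{\pi}$ as $R \to \infty$, and the integral over the bottom edge (traversed right-to-left) is $-\int_{-R}^{R} e^{-(t-is)^2}\dif{t}$, which tends to $-\int_{-\infty}^{\infty} e^{-(t-is)^2}\dif{t}$. Equating the contour integral to zero then gives the desired equality, provided the two vertical sides vanish in the limit.

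The main (and only) technical step is bounding the vertical sides. On the right edge, parameterize $z = R - iy$ for $y \in [0,s]$ (with an analogous estimate on the left edge for $s<0$). Then $|e^{-z^2}| = e^{-\mathrm{Re}(z^2)} = e^{-(R^2 - y^2)} \leq e^{-R^2 + s^2}$, so the length-times-sup bound gives
\begin{equation*}
\biggl|\int_{\text{right edge}} e^{-z^2}\dif{z}\biggr| \;\leq\; |s|\, e^{-R^2 + s^2} \;\longrightarrow\; 0
\end{equation*}
as $R \to \infty$. Combining this with the completion of the square produces $\int_{-\infty}^{\infty} e^{-t^2 + 2its}\dif{t} = e^{-s^2}\sqrt{\pi}$, which is exactly the lemma. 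I don't anticipate any real obstacle — this is a standard fact whose slight delicacy lies entirely in the contour-shift argument, which the exponential decay of $e^{-z^2}$ handles cleanly.
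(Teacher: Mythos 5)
The paper states this linearization identity without proof (it is a classical Gaussian/Fourier fact), so there is no in-paper argument to compare against. Your proof — complete the square to reduce to $\int_{-\infty}^{\infty} e^{-(t-is)^2}\dif{t} = \sqrt{\pi}$, then justify the shift of the line of integration from $\mathbb{R}$ to $\mathbb{R} - is$ by Cauchy's theorem over a rectangle with vanishing vertical edges — is correct, complete, and is the standard way to establish it. The key estimate $|e^{-z^2}| = e^{-(R^2 - y^2)} \le e^{-R^2 + s^2}$ on the vertical sides is exactly what makes the contour shift legitimate, and you state it correctly. Nothing to fix.
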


\begin{observation}
\obslab{abs:val}
    Let $f:\pth{-\infty,\infty}\mapsto \mathbb{C}$ be a continuous complex function. Then, 
$
    \cardin{\,\,\int\limits_{-\infty}^{\infty} f(t) \dif{t}\,} 
    \leq 
    \int\limits_{-\infty}^{\infty} \cardin{f(t)} \dif{t}.
$\end{observation}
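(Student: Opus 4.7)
The plan is to reduce this inequality to the standard monotonicity of real integrals by a rotation in the complex plane that makes the integral real and nonnegative. Let $I := \int_{-\infty}^{\infty} f(t)\dif{t}$. If $I = 0$ the inequality holds trivially (since the right-hand side is nonnegative), so assume $I \neq 0$ and write $I$ in polar form as $I = \cardin{I}\, e^{i\theta}$ for some $\theta \in [0, 2\pi)$.

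Multiplying both sides by the unit-modulus scalar $e^{-i\theta}$ and using linearity of the integral with respect to scalar multiplication, one obtains
$$\cardin{I} \;=\; e^{-i\theta} I \;=\; \int_{-\infty}^{\infty} e^{-i\theta}f(t)\dif{t}.$$
Since the left-hand side is a real number, the integral on the right must also be real, and therefore equals its own real part. Using the fact that the real part commutes with the real Lebesgue (or Riemann) integral of a continuous complex-valued function, this gives
$$\cardin{I} \;=\; \int_{-\infty}^{\infty} \mathrm{Re}\pth{e^{-i\theta}f(t)}\dif{t}.$$

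Finally, the pointwise inequality $\mathrm{Re}(z) \leq \cardin{z}$ for all $z\in\mathbb{C}$, combined with $\cardin{e^{-i\theta}f(t)} = \cardin{f(t)}$ (since $\cardin{e^{-i\theta}} = 1$), yields $\mathrm{Re}(e^{-i\theta}f(t)) \leq \cardin{f(t)}$ at every $t$. Applying monotonicity of the real integral to this pointwise bound gives $\cardin{I} \leq \int_{-\infty}^{\infty} \cardin{f(t)}\dif{t}$, which is the desired inequality. I do not expect any real obstacle here: this is the standard proof of the triangle inequality for integrals of complex-valued functions, and the only care required is the trivial special case $I = 0$ and the routine appeal to linearity and monotonicity of the real integral.
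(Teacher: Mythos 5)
Your proof is correct: this is the standard rotation argument for the triangle inequality for integrals of complex-valued functions, and the paper in fact leaves this observation unproved, treating it as a well-known fact. The only minor point worth noting is that the statement as written implicitly assumes the improper integrals converge absolutely (continuity alone on $(-\infty,\infty)$ does not guarantee this), but your argument is exactly right under that standing assumption.
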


On applying \lemref{linearize} to \Eqref{formulation} in the symmetric case, one obtains a product 
of identical univariate complex integrals. Specifically, by \Eqref{formulation}, \Eqref{sym:quad}, 
and \lemref{linearize}, we have the expression, 
\begin{align*} 
\int\limits_{\Re_+^k} e^{ - \beta\norm{2}{x}^2 -\alpha\norm{1}{x}^2} \dif{x}
~=
\int\limits_{-\infty}^{\infty} e^{-t^2}\int\limits_{\Re_+^k} e^{-\beta (x_1^2 +\dots \,x_k^2 ) \,+\, 
2it\sqrt{\alpha}(x_1 +\dots \,x_k)} \dif{x}\dif{t} 
=~
\int\limits_{-\infty}^{\infty} e^{-t^2}\pth{\int\limits_{0}^{\infty} 
e^{-\beta s^2 + 2it\sqrt{\alpha}s} \dif{s}}^{k}
\end{align*}
The inner univariate complex integral is not readily evaluable. To circumvent this, 
one can change the line  of integration so as to shift mass form the inner integral to the outer integral. 
Then we can apply the crude upper bound of \obsref{abs:val} to the inner integral, and by design, the error in our estimate is small.

\begin{lemma}[Changing line of integration]
\lemlab{change:line}
    Let $g(t)$ be a real valued function for real $t$. If, when interpreted as a complex function 
    in the variable $t=a+ib$, $g(a+ib)$ is an entire function, and furthermore, 
    $\lim\limits_{a\rightarrow \infty}g(a+ib) = 0$ for some fixed $b$, then we have, 
$        \int_{-\infty}^{\infty} g(t)\dif{t} = \int_{-\infty}^{\infty} g(a+ib)\dif{a}.
$
\end{lemma}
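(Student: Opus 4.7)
The plan is to invoke Cauchy's integral theorem using a rectangular contour in the complex $t$-plane with two horizontal sides sitting on the lines $\mathrm{Im}(t)=0$ and $\mathrm{Im}(t)=b$. The underlying idea is that since $g$ is holomorphic on all of $\mathbb{C}$, shifting the line of integration from the real axis up to $\mathrm{Im}(t)=b$ should not change the value of the integral, provided that the two vertical sides of the rectangle contribute negligibly in the limit.

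First, for each $R>0$ I would consider the contour $\Gamma_R$ that is the positively oriented boundary of the rectangle with vertices $-R$, $R$, $R+ib$, and $-R+ib$. Since $g$ is entire by hypothesis, Cauchy's integral theorem yields $\oint_{\Gamma_R} g(z)\,dz = 0$. Splitting this into its four sides (bottom, right, top, left) and accounting for orientation gives
\[
\int_{-R}^{R} g(a)\,da \;+\; i\int_{0}^{b} g(R+is)\,ds \;-\; \int_{-R}^{R} g(a+ib)\,da \;-\; i\int_{0}^{b} g(-R+is)\,ds \;=\; 0.
\]

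Next, I would argue that the two vertical-side contributions vanish as $R\to\infty$. The stated hypothesis $\lim_{a\to\infty} g(a+ib) = 0$ is slightly informal as written; what is actually needed (and what holds in every intended application in the paper) is uniform decay of $|g(a+is)|$ for $s$ in the closed strip between $0$ and $b$ as $|a|\to\infty$. Under this mild strengthening, each vertical segment contributes at most $|b|\cdot \sup_{s\in[0,b]} |g(\pm R + is)|$, which tends to $0$. Passing to the limit in the contour identity and rearranging then yields $\int_{-\infty}^{\infty} g(a)\,da = \int_{-\infty}^{\infty} g(a+ib)\,da$, as claimed.

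The only real obstacle is justifying the vanishing of the vertical-side integrals, since the hypothesis as formally stated only gives decay along a single horizontal line at one end. Fortunately, in every invocation of this lemma the integrand arises as a Gaussian factor $\exp(-\beta s^2 + 2it\sqrt{\alpha}\,s)$ with $\beta>0$ (times a benign polynomial factor), and the magnitude of such an expression on a horizontal strip of bounded height is dominated uniformly by $\exp(-\beta\,\mathrm{Re}(s)^2 + O(1))$, which decays rapidly in $|\mathrm{Re}(s)|$. Hence the vertical integrals are negligible as $R\to\infty$ on both ends, and the Cauchy shift argument produces exactly the identity stated in the lemma.
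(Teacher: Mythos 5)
Your proof is correct and is the standard Cauchy contour-shift argument; the paper itself does not prove this lemma but states it as a known preliminary fact (pointing to the Daniels/Rogers line of work), so there is no in-paper proof to compare against. You are also right to flag that the hypothesis as literally written -- decay of $g(a+ib)$ as $a\to\infty$ along a single horizontal line -- is too weak to kill both vertical sides of the rectangle; what is actually needed, and what holds for the Gaussian-type integrands used later in \thmref{cone:measure}, is uniform decay of $|g|$ on the closed horizontal strip between $\mathrm{Im}=0$ and $\mathrm{Im}=b$ as $|\mathrm{Re}(t)|\to\infty$. With that (mild and satisfied-in-practice) strengthening, your rectangular-contour argument is complete and is exactly the intended justification.
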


\paragraph{Squared $\mathrm{L}_1$ Inequality.}  
Motivated by the above linearization technique, we prove the following lower bound 
on quadratic forms in the positive orthant:

\begin{lemma}
\lemlab{L1:inequality}
	Consider any $k\times k$ matrix $A$, 	and $x\in \Re_+^k$, such that 
	$x$ is in the column space of $A$. Let 	$A^{\dagger}$ denote the 
	Moore-Penrose pseudo-inverse of $A$.	Then, 
$
	x^T A^{\dagger} x \geq \frac{\norm{1}{x}^2}{\mathrm{sum}(A)}
$.
\end{lemma}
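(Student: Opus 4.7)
The plan is to reduce the inequality to a single application of Cauchy--Schwarz after factoring $A$ appropriately. In the context in which this lemma is used, $A = A_U = U^T U$ is a symmetric positive semidefinite Gram matrix, and in fact symmetry and positive semidefiniteness of $A$ appear to be necessary for the inequality to hold in the stated form (small non-symmetric examples can make $x^T A^{-1} x$ negative while $\|x\|_1^2/\mathrm{sum}(A)$ is positive). I will therefore carry out the proof under the assumption that $A$ is symmetric PSD, which is all that is needed downstream.

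First I factor $A = A^{1/2} A^{1/2}$ using the unique symmetric PSD square root. Since the column space of $A$ equals the range of $A^{1/2}$, the hypothesis $x \in \mathrm{col}(A)$ lets me write $x = A^{1/2} z$ where $z$ is chosen to be the minimum-norm such preimage, equivalently $z = (A^{1/2})^\dagger x$ and $z$ lies in the range of $A^{1/2}$. Using the identity $A^\dagger = \bigl((A^{1/2})^\dagger\bigr)^2$ (which one verifies via the eigendecomposition of $A$) and the fact that $(A^{1/2})^\dagger A^{1/2} z = z$ for $z$ in the range of $A^{1/2}$, this gives
\[
x^T A^\dagger x \;=\; \bigl\| (A^{1/2})^\dagger x \bigr\|_2^2 \;=\; \|z\|_2^2.
\]

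Next I rewrite the right-hand side of the target inequality in compatible form. Since $x \geq 0$ componentwise, $\|x\|_1 = \1^T x$, and
\[
\1^T x \;=\; \1^T A^{1/2} z \;=\; \bigl\langle A^{1/2} \1,\; z \bigr\rangle, \qquad
\mathrm{sum}(A) \;=\; \1^T A \1 \;=\; \bigl\| A^{1/2} \1 \bigr\|_2^2.
\]
A single application of the Cauchy--Schwarz inequality to the inner product $\langle A^{1/2}\1,\, z\rangle$ then yields
\[
\|x\|_1^2 \;\leq\; \bigl\| A^{1/2} \1 \bigr\|_2^2 \cdot \|z\|_2^2 \;=\; \mathrm{sum}(A)\cdot x^T A^\dagger x,
\]
and rearranging produces exactly the claimed bound.

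There is essentially no hard step: the only point deserving care is verifying that the preimage $z$ can be taken to lie in the range of $A^{1/2}$, so that $(A^{1/2})^\dagger A^{1/2} z = z$ and hence $x^T A^\dagger x$ equals $\|z\|_2^2$ exactly (rather than just being bounded by it), for otherwise the Cauchy--Schwarz inequality would be applied to the wrong vector. This is precisely where the $x \in \mathrm{col}(A)$ hypothesis enters, guaranteeing the existence of such a $z$ via the minimum-norm solution of $A^{1/2} z = x$.
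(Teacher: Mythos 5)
Your proof is correct and essentially matches the paper's: both are a weighted Cauchy--Schwarz inequality driven by the spectral decomposition of $A$, with the paper working in eigenvector coordinates (writing $x = \sum_i \beta_i v_i$, $x^T A^\dagger x = \sum_i \lambda_i^{-1}\beta_i^2$, and $\mathrm{sum}(A) = \sum_i \lambda_i \langle \1, v_i\rangle^2$) and you packaging the same computation coordinate-free via the symmetric square root $A^{1/2}$. Your observation that the stated hypothesis ``any $k\times k$ matrix $A$'' is too broad is also well taken --- the paper's own proof tacitly assumes $A$ is symmetric PSD by invoking an orthonormal eigendecomposition with positive $\lambda_i$, which is all that is needed since the lemma is only ever applied to Gram matrices $A = U^T U$.
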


\begin{proof}
	Consider any $x$ in the positive orthant and column space of $A$. 
	Let $v_1, \dots ,v_q$ be the eigenvectors of $A$ corresponding to it's 
	non-zero eigenvalues. 
	We may express $x$ in the form $x = \sum_i \beta_i v_i$, so that 
	\begin{align*}
		\norm{1}{x} = \langle \1, x \rangle = \sum_{i\in [q]} \beta_i \langle \1, 
		v_i \rangle
		\Rightarrow \norm{1}{x}^2 = (\sum_{i\in [q]} \beta_i \langle \1, v_i \rangle)^2.
	\end{align*}
	We also have 
	\[
		x^T A^{\dagger} x = x^T(\sum_{i\in [q]} \lambda_i^{-1} v_i v_i^T)x = 
		\sum_{i\in[q]} \lambda_i^{-1} \beta_i^2.
	\]
	Now by Cauchy-Schwartz, 
	\[
		\pth{\sum_i \lambda_i \langle \1, v_i \rangle^2 }\pth{\sum_{i\in [q]} 
		\lambda_i^{-1} \beta_i^2} 
		\geq \norm{1}{x}^2.
	\]
	Therefore, we have
	\[
		x^T A^{\dagger} x \geq \frac{||x||_1^2}{\sum_{i\in [q]} \lambda_i \langle \1, 
		v_i \rangle^2}
		= \frac{\norm{1}{x}^2}{\1^T A \1} 
		= \frac{\norm{1}{x}^2}{\mathrm{sum}(A)}.
	\]
\end{proof}
\medskip

\noindent
Equipped with all necessary tools, we may now prove our result.

\subsubsection{Our Gaussian Measure Bound}
Let $C\equiv (0,U,V)$ be a simplicial cone with apex at the origin. We now 
show an upper bound on the Gaussian measure of $C$ that depends surprisingly 
on only the para-volume and sum-norm of $U$. Since Gaussian measure is at most $1$, 
it is evident when viewing our bound that it can only be useful for simplicial 
cones wherein the sum-norm of their normal vectors is $O(\sqrt{k})$, and the 
para-volume of their normal vectors is not too small.
\begin{theorem}
\thmlab{cone:measure}
	Let $C\equiv (0,U,V)$ be a simplicial cone with apex at the origin. Let 
	$\ell = \norm{2}{\sum_i u_i}$ (i.e. sum-norm of the normal vectors), then 
	the Gaussian measure of $C$ is at most 
$
		\pth{\frac{e}{2\pi k}}^{k/2}\frac{\ell^{k}}{\sqrt{\detm{A_U}}}
$
\end{theorem}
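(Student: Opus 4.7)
My plan is to combine the integral formulation derived in \Eqref{formulation} with the squared $L_1$ inequality \lemref{L1:inequality} so as to collapse the $k$-dimensional integral to essentially one dimension, and then finish by a Stirling estimate. Since a simplicial cone has linearly independent normal vectors, $A_U = U^TU$ is positive definite, so \lemref{L1:inequality} applies pointwise on $\Re^k_+$ and yields $x^T A_U^{-1} x \geq \norm{1}{x}^2/\mathrm{sum}(A_U)$. A direct computation gives $\mathrm{sum}(A_U) = \1^T U^T U \1 = \norm{2}{\sum_i u_i}^2 = \ell^2$, so the integrand in \Eqref{formulation} satisfies $e^{-x^T A_U^{-1} x} \leq e^{-\norm{1}{x}^2/\ell^2}$ throughout the positive orthant. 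It thus remains to estimate $\int_{\Re^k_+} e^{-\norm{1}{x}^2/\ell^2} \dif{x}$.

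Next I would compute this simpler integral exactly by slicing along the level sets of $\norm{1}{x}$. Performing the unit-Jacobian change of variables $(y_1,\ldots,y_{k-1},u) = (x_1,\ldots,x_{k-1},\norm{1}{x})$, the fiber at height $u$ is the standard $(k-1)$-simplex of side $u$, whose Lebesgue volume is $u^{k-1}/(k-1)!$. Hence
\[
\int_{\Re^k_+} e^{-\norm{1}{x}^2/\ell^2}\dif{x} = \int_0^\infty \frac{u^{k-1}}{(k-1)!}\, e^{-u^2/\ell^2}\dif{u} = \frac{\ell^k\,\Gamma(k/2)}{2(k-1)!}.
\]
Combined with the prefactor $1/(\pi^{k/2}\sqrt{\detm{A_U}})$ from \Eqref{formulation}, this yields the upper bound $\ell^k\Gamma(k/2)\big/\bigl(2(k-1)!\,\pi^{k/2}\sqrt{\detm{A_U}}\bigr)$ on the Gaussian measure of $C$.

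Finally, the claim reduces to the purely numerical inequality $\Gamma(k/2)/(2(k-1)!) \leq (e/(2k))^{k/2}$, which is a routine application of Stirling's formula; asymptotically the left-hand side is $e^{k/2}/(\sqrt{2}(2k)^{k/2})$, so the target bound holds with roughly a factor of $\sqrt{2}$ to spare. I do not foresee a substantive obstacle here. The only point requiring a little care is to use a non-asymptotic form of Stirling (e.g.\ Robbins' bounds on $\Gamma$) so that the inequality holds for every $k$, not merely in the limit; the generous slack makes this essentially automatic for $k$ above a small threshold, and one can check the few small cases by hand if needed.
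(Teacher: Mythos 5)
Your proposal is correct and takes a genuinely different route from the paper once the key reduction is made. You and the paper start identically: apply \lemref{L1:inequality} and $\mathrm{sum}(A_U)=\ell^2$ to bound the integrand by $e^{-\norm{1}{x}^2/\ell^2}$, reducing to estimating $\int_{\Re^k_+} e^{-\norm{1}{x}^2/\ell^2}\dif{x}$. From there the paper runs the Daniels/Rogers/Boeroeczky--Henk machinery imported for symmetric spherical simplices: linearize the $\norm{1}{x}^2$ in the exponent via \lemref{linearize}, split into a product of identical univariate complex integrals, shift the contour to $b=\sqrt{k/2}$ via \lemref{change:line}, and take absolute values via \obsref{abs:val}. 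You instead slice $\Re^k_+$ by level sets of $\norm{1}{x}$, observe the fiber at height $u$ is a $(k-1)$-simplex of Lebesgue volume $u^{k-1}/(k-1)!$, and evaluate the resulting one-dimensional integral exactly to get $\ell^k\Gamma(k/2)/(2(k-1)!)$; the theorem then reduces to the numerical inequality $\Gamma(k/2)(2k)^{k/2} \le 2(k-1)!\,e^{k/2}$. Via the duplication formula this is equivalent to $\sqrt{\pi}\,(k/(2e))^{k/2} \le \Gamma((k+1)/2)$, which follows for $k \ge 2$ from the standard lower bound $\Gamma(x+1)\ge\sqrt{2\pi x}\,(x/e)^x$ together with $(1+1/(k-1))^{k/2}\le\sqrt{2e}$, and is checked directly for $k=1$ (it is $\pi\le 2e$); so you do not actually need to leave any small cases hand-waved. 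Your route is entirely elementary (no complex analysis), is self-contained, and is in fact slightly sharper by a multiplicative $\sqrt{2}$; the paper's route keeps the authors' exposition parallel to the cited asymptotics for symmetric cones but buys nothing extra here.
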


\begin{proof}
By the sum-norm property, the sum of entries of $A_U$ is $\ell^2$. Also by the 
definition of a simplicial cone, $U$, and cosequently $A_U$, must have full rank. 
Thus we may apply \lemref{L1:inequality} over the entire positive orthant.
We proceed to analyze the multivariate integral in \Eqref{formulation}, by first 
applying \lemref{L1:inequality} and then linearizing the exponent using 
\lemref{linearize}. Post-linearization, our approach is similar to the 
presentation of Boeroeczky and Henk \cite{boroczky1999random}.
We have,
\begin{align*}
    I \leftarrow &\int\limits_{\Re^k_{+}} e^{-x^{T}A_U^{-1}x}\dif{x} \quad
    \leq \quad \int\limits_{\Re^k_{+}} e^{-\norm{1}{x}^2/\ell^2}\dif{x}
    \quad \text{(by \lemref{L1:inequality})}\quad
    = \quad \ell^{k}\int\limits_{\Re^k_{+}} e^{-\norm{1}{y}^2}\dif{y}
    \quad \text{(Subst. } y \leftarrow x/\ell) 
\end{align*}
\vspace{-1ex}
\begin{align*}
    =&~\frac{\ell^{k}}{\sqrt{\pi}}\int\limits_{\Re^k_{+}}\int\limits_{-\infty}^{\infty} e^{-~t^2~+~2it\sum\limits_{i\in [k]} y_i}\dif{t}\dif{y}
    \quad \text{(by \lemref{linearize})}\quad
    =\quad \frac{\ell^{k}}{\sqrt{\pi}}\int\limits_{-\infty}^{\infty} e^{-t^2}~\prod\limits_{i\in [k]}\pth{\int\limits_{0}^{\infty} e^{2ity_i}\dif{y_i}}\dif{t}
\end{align*}
\vspace{-1.3ex}
\begin{align*}
    =&~\frac{\ell^{k}}{\sqrt{\pi}}\int\limits_{-\infty}^{\infty} e^{-t^2} \pth{\int_{0}^{\infty} e^{2its} \dif{s}}^k \dif{t}
    &&\text{Let }g(t) = e^{-t^2} \pth{\int_{0}^{\infty} e^{2its} \dif{s}}^k \\
    ~&~ && \cardin{g(a+ib)} \leq e^{-a^2 + b^2} \pth{\int_{0}^{\infty} e^{-2bs} \dif{s}}^k \\
    ~&~ && \Rightarrow~ \lim\limits_{a\rightarrow \infty} g(a+ib) \rightarrow 0,~\forall b>0 \\
    =&~\frac{\ell^{k}}{\sqrt{\pi}}\int\limits_{-\infty}^{\infty} e^{-a^2 + b^2 -2abi} \pth{\int_{0}^{\infty} e^{-2bs + 2asi} \dif{s}}^k \dif{a}
    && \text{(for $b>0$)~~by \lemref{change:line}}\\
    =&~\frac{e^{k/2}~\ell^{k}}{\sqrt{\pi}(2k)^{k/2}}\int\limits_{-\infty}^{\infty} e^{-a^2} \pth{2be^{-ia/b}\int_{0}^{\infty} e^{-2bs + 2asi} \dif{s}}^k \dif{a}
    && \text{Fixing $b=\sqrt{k/2}$}\\
    = &~\frac{e^{k/2}~\ell^{k}}{\sqrt{\pi}(2k)^{k/2}}\cardin{\,\int\limits_{-\infty}^{\infty} e^{-a^2} \pth{2be^{-ia/b}\int_{0}^{\infty} e^{-2bs + 2asi} \dif{s}}^k \dif{a}\,}
    && \text{Since expr. is real and +ve}\\
    \leq &~\frac{e^{k/2}~\ell^{k}}{\sqrt{\pi}(2k)^{k/2}}\int\limits_{-\infty}^{\infty} e^{-a^2} \pth{2b\int_{0}^{\infty} e^{-2bs} \dif{s}}^k \dif{a}
    && \text{By \obsref{abs:val}}\\
    = &~\frac{e^{k/2}~\ell^{k}}{\sqrt{\pi}(2k)^{k/2}} \int\limits_{-\infty}^{\infty} e^{-a^2} \dif{a} ~=~ \frac{e^{k/2}~\ell^{k}}{(2k)^{k/2}}
\end{align*}
Lastly, the claim follows by substituting the above in \Eqref{formulation}.
\end{proof}

\subsection{Analysis of Hyperplane Rounding given Strong Colorability}
\seclab{analysis:strong}


In this section we analyze the performance of random hyperplane rounding 
on $k$-uniform hypergraphs that are $(k+\ell)$-strongly colorable.

\begin{theorem}
\thmlab{analysis:strong}
    Consider any $(k+\ell)$-strongly colorable $k$-uniform hypergraph 
    $H=(V,E)$. The expected fraction of monochromatic 
    edges obtained by performing random hyperplane rounding on the solution of 
    \rlxref{strong}, is 
$        O\pth{\,\ell^{k-1/2}\pth{\frac{e}{2\pi}}^{k/2}\frac{1}{k^{(k-1)/2}}\,}.
$
\end{theorem}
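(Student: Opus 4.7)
The plan is to apply the Gaussian measure bound \thmref{cone:measure} once per hyperedge and then sum via linearity of expectation. By \Eqref{monochr:prob}, the probability that a fixed edge $e$ is monochromatic equals $2\mu\pth{\brc{x\in\Re^k \sep{U^Tx\geq 0}}}$, where $U$ is the $k\times k$ matrix whose columns are the SDP unit vectors associated with the vertices of $e$. So the task reduces to bounding the Gaussian measure of the simplicial cone $C\equiv (0,U,V)$ with apex at the origin, for which we need just two numbers read off from $U$: the sum-norm $L:=\norm{2}{\sum_{i\in e} u_i}$ and the Gram determinant $\detm{A_U}$.

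The key feature of the strong-colorability relaxation \rlxref{strong} is that inside each hyperedge, all pairwise inner products are \emph{exactly} $-1/(k+\ell-1)$. Consequently $A_U=\tfrac{k+\ell}{k+\ell-1}\mathrm{I}-\tfrac{1}{k+\ell-1}\1\1^T$, which has the same closed-form structure for every edge (namely, the Gram matrix of the first $k$ vertices of a $(k+\ell)$-regular simplex). Its eigenvalues are $\ell/(k+\ell-1)$ along $\1$ and $(k+\ell)/(k+\ell-1)$ with multiplicity $k-1$, so
\[
\detm{A_U}~=~\frac{\ell}{k+\ell-1}\pth{\frac{k+\ell}{k+\ell-1}}^{k-1},
\]
while expanding the square of the sum gives $L^2~=~k+2\binom{k}{2}\pth{-\tfrac{1}{k+\ell-1}}~=~\tfrac{k\ell}{k+\ell-1}$.

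Substituting into \thmref{cone:measure} the powers of $(k+\ell-1)$ fortuitously cancel, yielding
\[
\mu(C)~\leq~\pth{\frac{e}{2\pi k}}^{k/2}\cdot\frac{L^{k}}{\sqrt{\detm{A_U}}}~=~\pth{\frac{e}{2\pi}}^{k/2}\frac{\ell^{(k-1)/2}}{(k+\ell)^{(k-1)/2}},
\]
from which the claimed bound $O\pth{\ell^{(k-1)/2}\pth{e/(2\pi)}^{k/2}k^{-(k-1)/2}}$ on $2\mu(C)$ follows, and averaging over edges completes the argument. Because \rlxref{strong} forces the vectors of an edge into a perfectly symmetric configuration whose Gram matrix is diagonalised trivially, there is essentially no obstacle in this case; the difficulty will only reappear in the rainbow-colorability theorem \thmref{analysis:rainbow}, where $A_U$ is no longer so regular and one must first extract a well-conditioned sub-configuration via the column-subset-selection step before \thmref{cone:measure} is applicable.
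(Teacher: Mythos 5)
Your proof is correct and takes essentially the same approach as the paper: reduce to \Eqref{monochr:prob}, read off the Gram matrix $A_U = (1+\alpha)I - \alpha\1\1^T$ with $\alpha = \tfrac{1}{k+\ell-1}$ from \rlxref{strong}, compute (or bound) $\detm{A_U}$ and the sum-norm, and plug into \thmref{cone:measure}. The only difference is that the paper uses $\detm{A_U}\geq \tfrac{\ell}{k+\ell}$ via the matrix-determinant lemma and the crude bound $L \leq \ell$, whereas you compute both quantities exactly from the spectrum, obtaining the slightly sharper exponent $\ell^{(k-1)/2}$ in place of the stated $\ell^{k-1/2}$ --- which certainly implies the claimed $O(\cdot)$ bound for $\ell \geq 1$.
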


\begin{proof}
    Let $U$ be any $k\times k$ matrix whose columns are unit vectors 
    $u_1, \dots , u_k \in Re^k$ that satisfy the edge constraints in \rlxref{strong}. 
    Recall from \secref{setup:local}, that to bound the probability of a monochromatic 
    edge we need only bound the expression in \Eqref{monochr:prob} for $U$ of 
    the above form. 
    By \rlxref{strong}, the gram matrix $A_U = U^T U$, is exactly, 
$
        A_U = (1+\alpha)\mathrm{I} -\alpha\1\1^T$ where $
        \alpha = \frac{1}{k+\ell-1}$.
    By matrix determinant lemma (determinant formula for rank one updates), we know 
    \[
        \detm{A_U} = (1+\alpha)^{k}\pth{1-\frac{k\alpha}{1+\alpha}}
        \geq 
        \pth{\frac{\ell}{k+\ell}}
        = 
        \Omega\pth{\frac{\ell}{k}}
    \]
    Further, \rlxref{strong} implies the length of $\sum_i u_i$, is at most $\ell$. 
    The claim then follows by combining \Eqref{monochr:prob} 
    with \thmref{cone:measure}.
\end{proof}

\paragraph{Note.}
Being that any edge in the solution to the strong colorability relaxation corresponds 
to a symmetric cone, \thmref{analysis:strong} is directly implied by prior work on the 
volume of symmetric spherical simplices. It is in the next section, where the true 
power of \thmref{cone:measure} is realized.

\paragraph{Remark.}
As can be seen from the asymptotic volume formula of symmetric spherical simplices, 
$\sqrt{\pi k/(2e)}$ is a sharp threshold for $\ell$, i.e. when~
$\ell > (1+o(1))\sqrt{\pi k/(2e)}$, hyperplane rounding does worse than the naive 
random algorithm, and when~$\ell < (1-o(1))\sqrt{\pi k/(2e)}$, hyperplane rounding beats 
the naive random algorithm.

\subsection{Analysis of Hyperplane Rounding given Rainbow Colorability}
\seclab{analysis:rainbow}

In this section we analyze the performance of random hyperplane rounding 
on $k$-uniform hypergraphs that are $(k-\ell)$-rainbow colorable.

Let $U$ be the $k\times k$ matrix whose columns are unit vectors 
$u_1, \dots , u_k \in \Re^k$ satisfying the edge constraints in \rlxref{rainbow}. 
We need to bound the expression in \Eqref{monochr:prob} for $U$ of 
the above form. While we'd like to proceed just as in \secref{analysis:strong}, 
we are limited by the possibility of $U$ being singular or the parallelotope 
determined by $U$ having arbitrarily low volume (as $u_1$ 
can be chosen arbitrarily close to the span of $u_2,\dots ,u_k$ while still 
satisfying $||\sum_i u_i||_2 \leq \ell$). 
\medskip

While $U$ can be bad with respect to our properties of interest, we will show 
that some subset of the vectors in $U$ are reasonably well behaved with respect 
to para-volume and sum-norm. 

\subsubsection{Finding a Well Behaved Subset}
\seclab{wbs}

We'd like to find a subset of $U$ with high para-volume, or equivalently, 
a principal sub-matrix of $A_U$ with reasonably large determinant.
To this end, we express the gram matrix $A_U = U^TU$ as the sum of a 
symmetric skeleton matrix $B_U$ and a residue matrix $E_U$. 
Formally, $E_U = A_U - B_U$ and 
$B_U = (1+\beta)I - \beta\1\1^T$ where $\beta = \frac{1}{k-\ell-1}$.
We have (assuming $\ell = o(k)$),
$\mathrm{sum}(A_U) \leq~\ell^2$ and $\mathrm{sum}(B_U) =~k - k(k-1)\beta\quad =~ \frac{-\ell}{1-o(1)}$.
Let $s \leftarrow~\mathrm{sum}(E_U)\leq~\ell^2 - \mathrm{sum}(B_U) =~
\ell^2 + \frac{\ell}{1-o(1)}$.

We further observe that $E_U$ is symmetric, with all diagonal entries zero. 
Also since $u_1,\dots ,u_k$ satisfy \rlxref{rainbow}, all entries of $E_U$ 
are non-negative. 

By an averaging argument, at most $ck^{\delta}$ columns of $E_U$ have column sums 
greater than $s/(ck^{\delta})$ for some parameters $\delta,c$ to be determined later. 
Let $S\subseteq [k]$ be the set of indices of the columns having the lowest 
$k - ck^{\delta}$ column sums. Let $\tilde{k} \leftarrow |S| = k - ck^{\delta}$, 
and let $A_S, B_S, E_S$ be the corresponding matrices restricted to $S$ 
(in both columns and rows). 

\paragraph{Spectrum of $B_S$ and $E_S$.}

\begin{observation}
For a square matrix $X$, let $\eigmin{X}$ denote its minimum 
eigenvalue.
\obslab{B:spectrum}
    The eigenvalues of $B_S$ are exactly $(1+\beta)$ with multiplicity 
    $(\tilde{k}-1)$, and $(1+\beta -\tilde{k}\beta)$ with 
    multiplicity $1$. Thus $\eigmin{B_S}=1+\beta -\tilde{k}\beta$. 
    This is true since $B_S$ merely shifts all eigenvalues of $-\beta\1\1^T$ 
    by $1+\beta$.
\end{observation}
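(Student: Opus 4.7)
The plan is to observe that since $B_U = (1+\beta)I - \beta \1\1^T$ is a symmetric matrix whose entries depend only on whether the row and column index are equal, its restriction $B_S$ to any principal submatrix of size $\tilde{k}$ has exactly the same form, namely $B_S = (1+\beta) I_{\tilde{k}} - \beta \1_{\tilde{k}} \1_{\tilde{k}}^T$. So the entire content reduces to computing the spectrum of a rank-one perturbation of a scalar multiple of the identity.

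First I would diagonalize $\1_{\tilde{k}} \1_{\tilde{k}}^T$ directly: it is a rank-one positive semidefinite matrix whose only nonzero eigenvalue is $\tilde{k}$, attained on the one-dimensional subspace spanned by $\1_{\tilde{k}}$, while every vector in the $(\tilde{k}-1)$-dimensional orthogonal complement $\1_{\tilde{k}}^\perp$ is an eigenvector with eigenvalue $0$. Multiplying by $-\beta$ gives eigenvalues $-\tilde{k}\beta$ (multiplicity $1$) and $0$ (multiplicity $\tilde{k}-1$).

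Next I would use the fact that adding $(1+\beta)I_{\tilde{k}}$ to a symmetric matrix shifts every eigenvalue by $(1+\beta)$ while preserving the eigenvectors, since $(1+\beta)I$ acts as the same scalar on every direction. Applying this shift yields the two claimed eigenvalues of $B_S$: $1+\beta$ with multiplicity $\tilde{k}-1$ (coming from the orthogonal complement of $\1_{\tilde{k}}$) and $1+\beta - \tilde{k}\beta$ with multiplicity $1$ (coming from the $\1_{\tilde{k}}$ direction).

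Finally, $\eigmin{B_S}$ is read off by comparison: for the intended regime $\tilde{k}\beta > 1$ (which corresponds to $\tilde{k} > k-\ell-1$, consistent with $\tilde{k} = k - ck^{\delta}$ and $\ell$ small), we have $1+\beta - \tilde{k}\beta < 1+\beta$, so the minimum eigenvalue is $1+\beta - \tilde{k}\beta$. There is no real obstacle here, the statement is pure linear algebra; the observation is recorded because it will feed into the subsequent spectral bound on $A_S = B_S + E_S$, where a lower bound on $\eigmin{B_S}$ must be balanced against the operator norm of $E_S$ controlled via the column-sum averaging used to define $S$.
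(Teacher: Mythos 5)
Your proof is correct and follows the same route as the paper's one-line justification, namely diagonalizing the rank-one matrix $\1\1^T$ and then observing that adding $(1+\beta)I$ uniformly shifts the spectrum. One small remark: the condition $\tilde{k}\beta > 1$ you invoke is not needed for identifying $\eigmin{B_S}$ — since $\beta, \tilde{k} > 0$ we always have $1+\beta - \tilde{k}\beta < 1+\beta$, so the minimum is $1+\beta - \tilde{k}\beta$ unconditionally (its \emph{sign} is what depends on whether $\tilde{k}\beta$ exceeds $1+\beta$).
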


\noindent
While we don't know as much about the spectrum of $E_S$, we can still say 
some useful things.
\begin{observation}
\obslab{E:radius}
    Since $E_S$ is non-negative, by Perron-Frobenius theorem, its spectral radius 
    is equal to its max column sum, which is at most $s/(ck^{\delta})$. 
    Thus $\eigmin{E_S}\geq - s/(ck^{\delta})$.
\end{observation}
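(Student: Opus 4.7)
The plan is to verify the claim via Perron--Frobenius combined with the averaging argument that was already used to define the set $S$. First I would note that $E_S$ inherits two crucial structural properties from $E_U$: it is symmetric (since $A_U$ and $B_U$ are symmetric), and all its entries are non-negative (since \rlxref{rainbow} forces $\iprod{u_i}{u_j} \geq -1/(k-\ell-1)$, i.e., every off-diagonal entry of $E_U$ is non-negative, while diagonal entries are zero).

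Next I would invoke the standard Perron--Frobenius style bound for non-negative matrices, namely that the spectral radius $\rho(E_S)$ is at most the maximum column sum (this follows e.g.\ from $\rho(M) \le \|M\|_1$ for any matrix $M$, which holds via Gelfand's formula, and for non-negative symmetric $M$ is in fact achieved). So the task reduces to controlling $\max_j \sum_{i\in S} (E_S)_{ij}$.

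For this step I would use the defining property of $S$. By construction, $S$ indexes the $\tilde{k} = k - ck^\delta$ columns of $E_U$ with smallest column sums. A one-line averaging argument then bounds these column sums: if more than $ck^\delta$ columns of $E_U$ had column sum exceeding $s/(ck^\delta)$, then $\mathrm{sum}(E_U)$ would exceed $s$, a contradiction. Hence every column indexed by $S$ has column sum at most $s/(ck^\delta)$ in $E_U$. Because entries of $E_U$ are non-negative, restricting rows to $S$ only drops terms, so the same bound holds for the column sums of $E_S$.

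Combining these pieces gives $\rho(E_S) \le s/(ck^\delta)$, and since $E_S$ is real symmetric all eigenvalues are real with $|\lambda| \le \rho(E_S)$, yielding $\eigmin{E_S} \ge -s/(ck^\delta)$. There is no real obstacle here; the only subtle point to state explicitly is that the non-negativity of $E_U$ is what allows the column-sum bound to survive the row restriction, and it is also what activates the Perron--Frobenius bound in the first place.
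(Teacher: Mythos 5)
Your proposal takes essentially the same route as the paper: non-negativity of $E_S$ plus Perron--Frobenius to bound the spectral radius by the max column sum, and the averaging definition of $S$ to bound that column sum by $s/(ck^\delta)$. In fact you state the Perron--Frobenius step more carefully than the paper does (the paper's ``equal to its max column sum'' is not literally true for non-constant column sums; only the upper bound ``at most'' is needed and is what you correctly use), and you also make explicit the small but necessary point that restricting rows to $S$ only decreases column sums because $E_U$ is entrywise non-negative.
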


Now that we know some information about the spectra of $B_S$ and $E_S$, the next 
natural step is to consider the behaviour of spectra under matrix sums. 

\paragraph{Spectral properties of Matrix sums.}~\medskip

\noindent
The following identity is well known.
\begin{observation}
\obslab{sum:eigorder}
    If $X$ and $Y$ are symmetric matrices with eigenvalues 
    $x_1 > x_2 > \dots > x_m$ and $y_1 > y_2 > \dots > y_m$ and the 
    eigenvalues of $A+B$ are $z_1 > z_2 > \dots > z_m$, then 
    \[
        \forall~ 0\leq i+j \leq m,~~ z_{m-i-j} \geq x_{m-i} + y_{m-j}.
    \]
    In particular, this implies $\eigmin{X+Y} \geq \eigmin{X} + \eigmin{Y}$.
\end{observation}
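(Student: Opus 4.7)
The plan is to recognize this as Weyl's inequality for the eigenvalues of a sum of symmetric matrices and derive it from the Courant--Fischer--Weyl min-max principle. To keep the indexing readable, I would first reparameterize by setting $p = m-i$ and $q = m-j$, so that the hypothesis $0 \leq i+j \leq m$ becomes $p+q \geq m$ and the target inequality reads $z_{p+q-m} \geq x_p + y_q$.

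The main tool is the max-min characterization: for any symmetric $m \times m$ matrix $M$ with eigenvalues ordered $\mu_1 \geq \mu_2 \geq \cdots \geq \mu_m$,
\[
\mu_r(M) \;=\; \max_{\substack{W \subseteq \Re^m \\ \dim W = r}} \; \min_{\substack{v \in W \\ \|v\|_2=1}} v^T M v .
\]
Applying this to $X$ at index $p$ and to $Y$ at index $q$ produces subspaces $V_X, V_Y \subseteq \Re^m$ with $\dim V_X = p$ and $\dim V_Y = q$ such that every unit $v \in V_X$ satisfies $v^T X v \geq x_p$ and every unit $v \in V_Y$ satisfies $v^T Y v \geq y_q$.

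The pivotal step is a dimension count: from $\dim(V_X) + \dim(V_Y) = \dim(V_X + V_Y) + \dim(V_X \cap V_Y)$ and $\dim(V_X + V_Y) \leq m$, the intersection $W := V_X \cap V_Y$ has dimension at least $p+q-m = m - i - j$. For any unit $v \in W$,
\[
v^T(X+Y)v \;=\; v^T X v + v^T Y v \;\geq\; x_p + y_q \;=\; x_{m-i} + y_{m-j}.
\]
Applying the max-min characterization to $X+Y$ at index $p+q-m$ with this particular choice of test subspace then yields $z_{m-i-j} \geq x_p + y_q$, which is the claim. The in-particular assertion $\eigmin{X+Y} \geq \eigmin{X} + \eigmin{Y}$ is the special case $i = j = 0$, i.e.\ $p = q = m$ (where the intersecting subspaces are all of $\Re^m$).

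I do not expect a real obstacle here, since the argument is a textbook use of Courant--Fischer combined with one Grassmann-type dimension count. The only mild care needed is in tracking the decreasing indexing convention and in handling the boundary: the inequality is vacuous when $i+j = m$ (as $z_0$ is not defined), so effectively we only need $i+j \leq m-1$ for the statement to carry content.
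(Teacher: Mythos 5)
Your proof is correct. Note that the paper itself offers no proof of this observation, introducing it only with ``The following identity is well known'' --- it is Weyl's inequality for eigenvalues of sums of symmetric matrices. Your derivation via the Courant--Fischer max-min characterization together with the Grassmann dimension count $\dim(V_X \cap V_Y) \geq \dim V_X + \dim V_Y - m$ is the standard textbook argument and is carried out accurately, including the correct handling of the decreasing-index convention and the observation that the case $i+j=m$ is vacuous since $z_0$ is undefined. One small generalization you implicitly make: the paper's statement assumes strict inequalities among eigenvalues ($x_1 > x_2 > \cdots$), whereas your argument works with the usual nonstrict ordering $\mu_1 \geq \mu_2 \geq \cdots$, which is the more robust formulation and subsumes the paper's.
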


\noindent
We may finally analyze the spectrum of $A_S$.

\paragraph{Properties of $A_S$.}
\begin{observation}[Para-Volume]
\obslab{A:spectrum}
    Let the eigenvalues of $A_S$ be $a_1 > a_2 > \dots > a_{\tilde{k}}$
    By \obsref{B:spectrum}, \obsref{E:radius}, and \obsref{sum:eigorder} we have 
    (Assuming $\ell < ck^{\delta/2}$), 
    \begin{alignat*}{2}
        \eigmin{A_S} = a_{\tilde{k}} \quad &\geq\quad 1+\beta -\tilde{k}\beta 
        -\frac{s}{ck^{\delta}} &&= 
        \quad \frac{c}{k^{1-\delta}}  - \frac{\ell^2}{ck^\delta} - o(1)
        \\
        a_2, a_3, \dots ,a_{\tilde{k}-1} \quad &\geq \quad 1+\beta -\frac{s}{ck^{\delta}}
        &&= \quad 1 - \frac{\ell^2}{ck^\delta} -o(1)
    \end{alignat*}
    Consequently,
    \[
        \qquad \detm{A_S} \geq  
        \pth{\frac{c}{k^{1-\delta}}  - \frac{\ell^2}{ck^\delta} - o(1)} 
        \pth{1 - \frac{\ell^2}{ck^\delta} - o(1)}^{\tilde{k}}
        \geq 
        \pth{\frac{c}{k^{1-\delta}}  - \frac{\ell^2}{ck^\delta} - o(1)}e^{-k}
    \]
    
    In particular, note that $A_S$ is non-singular and has non-negligible para-volume 
    when 
    \[
        \frac{\ell^2}{ck^\delta} = \frac{c}{2k^{1-\delta}}, \quad \text{i.e.~} \ell\approx 
        ck^{\delta - 1/2}
        \quad \text{or,~~} \delta\approx \frac{1}{2}\frac{\log(\ell/c)}{\log k}
    \]
\end{observation}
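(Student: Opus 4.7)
The plan is to view the observation as a direct application of the Weyl-type eigenvalue inequality in \obsref{sum:eigorder} to the decomposition $A_S = B_S + E_S$ already set up in \secref{wbs}. First, I would record the spectra of the two summands. Since $B_U = (1+\beta)\mathrm{I} - \beta \1\1^T$, its restriction $B_S$ to $S$ retains the same rank-one-update form on $\Re^{\tilde{k}}$, so by \obsref{B:spectrum} its eigenvalues are $1+\beta$ with multiplicity $\tilde{k}-1$ and $1+\beta - \tilde{k}\beta$ with multiplicity one. For $E_S$, the matrix is a principal submatrix of the non-negative symmetric matrix $E_U$ restricted to the $\tilde{k}$ columns of smallest sum, so each of its row/column sums is bounded by $s/(ck^\delta)$, and hence by Perron--Frobenius $\eigmin{E_S} \geq -s/(ck^\delta)$, as in \obsref{E:radius}.

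With these spectra in hand, I would apply \obsref{sum:eigorder} in two regimes. Ordering $a_1 \geq a_2 \geq \cdots \geq a_{\tilde{k}}$, the case $i=j=0$ gives
\[
a_{\tilde{k}} \;\geq\; \eigmin{B_S} + \eigmin{E_S} \;\geq\; (1+\beta -\tilde{k}\beta) \;-\; \frac{s}{ck^\delta},
\]
which, on substituting $\beta = 1/(k-\ell-1)$, $\tilde{k} = k - ck^\delta$, and $s \leq \ell^2 + \ell/(1-o(1))$, rearranges to $c/k^{1-\delta} - \ell^2/(ck^\delta) - o(1)$. For the remaining eigenvalues, the case $i\geq 1$, $j=0$ places $x_{m-i}$ among the $\tilde{k}-1$ copies of $1+\beta$ in the spectrum of $B_S$, giving
\[
a_{\tilde{k}-i} \;\geq\; (1+\beta) + \eigmin{E_S} \;\geq\; 1 - \frac{\ell^2}{ck^\delta} - o(1)
\]
for every $i \geq 1$, which is the stated bound on $a_1, \dots, a_{\tilde{k}-1}$.

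The determinant bound then follows by multiplying: isolating $a_{\tilde{k}}$ and bounding the remaining $\tilde{k}-1$ eigenvalues uniformly by the middle bound yields
\[
\detm{A_S} \;\geq\; \pth{\frac{c}{k^{1-\delta}} - \frac{\ell^2}{ck^\delta} - o(1)} \pth{1 - \frac{\ell^2}{ck^\delta} - o(1)}^{\tilde{k}-1}.
\]
In the intended regime the quantity $\ell^2/(ck^\delta)$ stays bounded well below $1-1/e$, so the second factor is at least $e^{-(\tilde{k}-1)} \geq e^{-k}$, producing the claimed bound. The balance point $\ell \approx c k^{\delta - 1/2}$ is simply where the two terms in the first factor tie, beyond which the determinant could collapse. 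The main obstacle is pure bookkeeping: tracking the $o(1)$ terms coming from the ratio $(k - ck^\delta - 1)/(k - \ell - 1)$ hidden inside $1 + \beta - \tilde{k}\beta$ and from the approximation $s \approx \ell^2$ in the regime $\ell = o(k^{1/2})$. Once these are controlled, \obsref{sum:eigorder} does all of the real work.
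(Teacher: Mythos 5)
Your proposal is correct and follows essentially the same route as the paper: decompose $A_S = B_S + E_S$, read off the spectra of the summands from \obsref{B:spectrum} and \obsref{E:radius}, apply the Weyl inequality \obsref{sum:eigorder} once at $i=j=0$ for $a_{\tilde k}$ and once at $j=0$, $i\ge 1$ for the remaining eigenvalues, and multiply. The only (harmless) discrepancy is that you correctly use the exponent $\tilde{k}-1$ on the middle factor whereas the paper writes $\tilde{k}$, which is merely a conservative (weaker) statement of the same bound and does not affect the conclusion.
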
    

\begin{observation}[Sum-Norm]
Since $E_U$ is non-negative, $\mathrm{sum}(E_S) \leq \mathrm{sum}(E_U) = s$.
Also we know that the sum of entries of $A_S$ is 
\begin{align}
\eqlab{A:sumnorm}
    \mathrm{sum}(B_S)+ \mathrm{sum}(E_S) &=~
    \tilde{k}(1+\beta) - \tilde{k}(\tilde{k} - 1)\beta + s 
    \leq ck^{\delta} + s
\end{align}
\end{observation}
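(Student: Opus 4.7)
The plan is to establish the two bounds in the observation by reducing everything to the non-negativity of $E_U$ and a direct computation of $\mathrm{sum}(B_S)$. First I would justify that $E_U$ is entrywise non-negative. The rainbow relaxation \rlxref{rainbow} enforces $\iprod{u_i}{u_j}\ge -1/(k-\ell-1) = -\beta$ for distinct $i,j$ in any edge, while $A_U=U^TU$ has diagonal $1$. The skeleton $B_U=(1+\beta)I-\beta\1\1^T$ has diagonal exactly $1$ and every off-diagonal equal to $-\beta$. Consequently $E_U = A_U - B_U$ has vanishing diagonal and off-diagonal entries $\iprod{u_i}{u_j}+\beta\ge 0$. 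This is the only structural input; the rest is bookkeeping.

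With $E_U\ge 0$ entrywise, the first half of the observation is immediate: $E_S$ is the principal submatrix of $E_U$ indexed by $S$, so each of its entries is an entry of $E_U$, and dropping rows/columns from a non-negative matrix can only decrease the total sum, giving $\mathrm{sum}(E_S)\le \mathrm{sum}(E_U)=s$. By linearity applied to the decomposition $A_S = B_S + E_S$ on the common index set $S$, we also get $\mathrm{sum}(A_S)=\mathrm{sum}(B_S)+\mathrm{sum}(E_S)$, which is the middle equality in \Eqref{A:sumnorm}.

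What remains is computing $\mathrm{sum}(B_S)$ and bounding it by $ck^{\delta}$. Since $B_U$ has the displayed rank-one-plus-scalar form, its principal submatrix on $S$ is $B_S=(1+\beta)I_{\tilde{k}}-\beta\1_{\tilde{k}}\1_{\tilde{k}}^T$, whose entry sum equals $\tilde{k}(1+\beta) - \tilde{k}(\tilde{k}-1)\beta$ by directly summing the two terms. Plugging in $\beta = 1/(k-\ell-1)$ and $\tilde{k}=k-ck^\delta$ and simplifying yields
\[
\mathrm{sum}(B_S) \;=\; \frac{\tilde{k}(ck^\delta - \ell)}{k-\ell-1},
\]
which, under the standing assumptions $\ell = o(k)$ and $\ell < ck^{\delta/2} \le ck^\delta$, is at most $(1+o(1))\,ck^\delta \le ck^\delta$ up to lower order terms. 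Combining this with $\mathrm{sum}(E_S)\le s$ gives the stated upper bound $ck^\delta + s$.

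The entire argument is essentially a sequence of definitional manipulations — the only step that needs a tiny bit of care is the last one, where the substitution must be carried out carefully to see that the negative contribution from the $-\beta\1\1^T$ piece does not cancel enough of the diagonal $(1+\beta)I$ piece to make the bound trivial or wrong-signed. The key qualitative point that makes this work is that passing from $B_U$ (where $\mathrm{sum}(B_U)=-\ell/(1-o(1))$ is small in absolute value) to $B_S$ loses $ck^\delta$ rows and columns of the off-diagonal $-\beta$'s, and this loss is precisely what contributes the $ck^\delta$ term.
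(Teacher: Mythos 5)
Your proposal is correct and follows the same direct-computation route the paper uses; the observation is essentially bookkeeping once one has the decomposition $A_S = B_S + E_S$ and the entrywise non-negativity of $E_U$, and you supply precisely that. One small algebra slip worth noting: the entry sum of $B_S=(1+\beta)I_{\tilde{k}}-\beta\1\1^T$ is $\tilde{k}(1+\beta)-\beta\tilde{k}^2 = \tilde{k}-\tilde{k}(\tilde{k}-1)\beta$, not $\tilde{k}(1+\beta)-\tilde{k}(\tilde{k}-1)\beta$ as you (and the paper) display — the two differ by $\tilde{k}\beta$; your subsequent simplification to $\tilde{k}(ck^{\delta}-\ell)/(k-\ell-1)$ is in fact consistent with the corrected form, so the final bound $\mathrm{sum}(B_S)\le ck^{\delta}$ (for $k$ large and $\ell+1\le ck^{\delta}$) and the conclusion of the observation stand.
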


\subsubsection{The Result.}
We are now equipped to prove our result.
\begin{theorem}
\thmlab{analysis:rainbow}
    For $\ell < \sqrt{k}/100$, consider any $(k-\ell)$-rainbow colorable $k$-uniform hypergraph 
    $H=(V,E)$. Let $\theta = 1/2 + \log(\ell)/\log(k)$ 
    and~~$\eta = 19(1-\theta)/40$. 
    The expected fraction of monochromatic 
    edges obtained by performing random hyperplane rounding on the solution of 
    \rlxref{rainbow}, is at most
    \[
        \frac{1}{2.1^{k}\,k^{\eta k}}
    \]
\end{theorem}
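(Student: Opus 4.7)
The plan is to bound the monochromatic probability by extracting a well-behaved simplicial sub-cone via the column subset selection of \secref{wbs}, passing to a lower-dimensional subspace via Gaussian projection, and then invoking \thmref{cone:measure}. Let $u_1,\ldots,u_k\in\Re^k$ denote the vectors for a fixed edge after the reduction of \secref{setup:local}; by \Eqref{monochr:prob} the edge is monochromatic with probability $2\mu\pth{\brc{x\sep{U^Tx\geq 0}}}$. The vectors $u_1,\dots,u_k$ can be arbitrarily close to linearly dependent, so \thmref{cone:measure} cannot be applied to $U$ directly; this is exactly the obstacle addressed by \secref{wbs}, which produces a set $S\subseteq [k]$ of size $\tilde k = k - ck^\delta$ for which the principal submatrix $A_S$ satisfies the quantitative spectral bounds of \obsref{A:spectrum} and the sum bound \Eqref{A:sumnorm}.

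To pass from $U$ to the sub-cone over $S$, I combine monotonicity of measure under intersection with a projection argument. By monotonicity,
\[
    \mu\pth{\brc{x\in\Re^k\sep{u_i^Tx\geq 0,~\forall i\in[k]}}}
    \;\leq\;
    \mu\pth{\brc{x\in\Re^k\sep{u_i^Tx\geq 0,~\forall i\in S}}}.
\]
The right-hand side depends on $x$ only through its orthogonal projection onto $W:=\mathrm{Span}\pth{u_i:i\in S}$, and by \lemref{gauss:affine:equiv} such a projection of a standard $k$-dimensional Gaussian is a standard Gaussian on $W\cong\Re^{\tilde k}$. Hence this probability equals the $\tilde k$-dimensional Gaussian measure of the simplicial cone in $W$ whose normals are $\brc{u_i:i\in S}$ (linearly independent by the non-singularity of $A_S$ in \obsref{A:spectrum}) and whose gram matrix is $A_S$. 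Now \thmref{cone:measure} applies, yielding the bound $2(e/(2\pi\tilde k))^{\tilde k/2}\ell_S^{\tilde k}/\sqrt{\detm{A_S}}$, where $\ell_S^2=\mathrm{sum}(A_S)\leq ck^\delta+s$ by \Eqref{A:sumnorm}.

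All that remains is parameter tuning. The key choice is $\delta=(1+19\theta)/20$, slightly above the balance value $\delta\approx\theta$ flagged in \obsref{A:spectrum}: this is large enough that $k^\delta$ dominates $s=O(k^{2\theta-1})$, giving $\ell_S^2\leq (c+o(1))k^\delta$, and that $\ell^2/(ck^\delta)\to 0$ fast enough to keep $\detm{A_S}\geq k^{-O(1)}\cdot e^{-o(k)}$; on the other hand, the exponent $(1-\delta)/2=19(1-\theta)/40=\eta$ is exactly the claimed improvement. Substituting into the bound above produces $((c+o(1))e/(2\pi))^{k/2}\cdot k^{-\eta k}$ up to sub-exponential corrections, and taking $c$ to be a sufficiently small absolute constant (any $c<2\pi/(e\cdot 2.1^2)\approx 0.524$ suffices, which is also compatible with the prerequisite $\ell<ck^{\delta/2}$ of \obsref{A:spectrum} since $\theta<1$) drops the exponential base below $1/2.1$, delivering the stated $1/(2.1^kk^{\eta k})$ bound.

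The main obstacle is entirely in this bookkeeping step: one has to verify that the three competing factors in \thmref{cone:measure}, namely the $(e/(2\pi\tilde k))^{\tilde k/2}$ prefactor, the $\ell_S^{\tilde k}$ term, and the $1/\sqrt{\detm{A_S}}$ penalty, combine exactly so that the chosen $\delta$ recovers the $\eta=19(1-\theta)/40$ exponent, with enough slack left over to swallow the $\tilde k=k(1-o(1))$ correction and the $e^{o(k)}$ loss from the para-volume bound. The conceptual content (column subset selection, projection to a lower-dimensional Gaussian, and the simplicial-cone bound) is entirely encapsulated in \secref{wbs} and \thmref{cone:measure}.
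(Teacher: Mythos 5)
Your plan mirrors the paper's actual proof exactly: extract the well-behaved index set $S$ of size $\tilde k = k - ck^\delta$ via the column subset selection of \secref{wbs}, use monotonicity and Gaussian rotation invariance (\lemref{gauss:affine:equiv}) to pass to the $\tilde k$-dimensional cone with Gram matrix $A_S$, invoke \thmref{cone:measure} with $\ell_S^2 = \mathrm{sum}(A_S)\leq ck^\delta + s$ and $\detm{A_S}$ from \obsref{A:spectrum}, and tune $(c,\delta)$. Your presentation of the projection step is actually a bit more careful than the paper's ("Thus just as in \secref{setup:local}\ldots"), and your closed-form $\delta = (1+19\theta)/20$ cleanly recovers $\eta = (1-\delta)/2 = 19(1-\theta)/40$, which makes the origin of $\eta$ more transparent than the paper's derivation.

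One bookkeeping slip worth flagging: you claim $\detm{A_S}\geq k^{-O(1)}\cdot e^{-o(k)}$ and describe the resulting loss as a "sub-exponential correction." That is only true when $\theta$ is bounded away from $1$ by a constant. The theorem allows $\ell$ up to $\sqrt{k}/100$, so $1-\theta$ can shrink like $\log 100/\log k$. In that regime the factor $\bigl(1 - \ell^2/(ck^\delta) - o(1)\bigr)^{\tilde k}$ is genuinely $e^{-\Theta(k)}$ (one computes $\tilde k\,\ell^2/(ck^\delta)\lesssim k/(100^{21/20}c)$), contributing a true exponential $e^{\Theta(k)}$ to $1/\sqrt{\detm{A_S}}$. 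This is still easily absorbed by the $k^{-\eta k}$ slack, but it does shift the effective exponential base: your derivation of the threshold $c < 2\pi/(e\cdot 2.1^2)\approx 0.524$ omits this $e^{\Theta(k)}$, and the corrected threshold is roughly a factor of $e$ smaller. The paper's choice $c = 1/20$ is comfortably below either threshold, so the conclusion stands, but the parenthetical numeric claim in your last paragraph is not quite right.
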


\begin{proof}
    Let $U$ be any $k\times k$ matrix whose columns are unit vectors 
    $u_1, \dots , u_k \in \Re^k$ that satisfy the edge constraints in \rlxref{strong}. 
    Recall from \secref{setup:local}, that to bound the probability of a monochromatic 
    edge we need only bound the expression in \Eqref{monochr:prob} for $U$ of 
    the above form. 
    
    By \secref{wbs}, we can always choose a matrix $U_S$ whose columns $\tilde{u}_1, 
    \dots ,\tilde{u}_{\tilde{k}}$ are from the set $\brc{u_1,\dots ,u_k}$, such that 
    the gram matrix $A_S = U_S^T U_S$ satisfies \Eqref{A:sumnorm} and \obsref{A:spectrum}. 
    Clearly the probability of all vectors in $U$ being monochromatic is at most the 
    probability of all vectors in $U_S$ being monochromatic. 
    
    Thus just as in \secref{setup:local}, to find the probability of $U_S$ being 
    monochromatic, we may assume without loss of generality that we are performing 
    random hyperplane rounding in $\Re^{\tilde{k}}$ on any $\tilde{k}$-dimensional vectors 
    $\tilde{u}_1, \dots, \tilde{u}_{\tilde{k}}$ whose gram (pairwise inner-product) matrix 
    is the aforementioned $A_S$.
    
    Specifically, by combining \Eqref{A:sumnorm} and \obsref{A:spectrum} with 
    \thmref{cone:measure}, our expression is at most:
    \begin{align*}
    		\pth{\frac{e}{2\pi}}^{\tilde{k}/2}\pth{\frac{ck^{\delta}+s}{k}}^{\tilde{k}/2}
    		\frac{1}{\sqrt{det(A_U)}}~
    		\leq ~
    		3.2^{\tilde{k}/2}\pth{\frac{(1-o(1))c}{k^{1-\delta}}}^{\tilde{k}/2}
    		\leq ~
    		\frac{1}{2.1^{k}\,k^{(1-c)(1-\delta)k}}
    \end{align*}
    assuming $c = 1/20$, $\delta\geq 1/2$ and $\ell < \sqrt{k}/100$ (constraint on $\ell$ ensures 
    that non-singularity conditions of \obsref{A:spectrum} are satisfied). 
    The claim follows. 
\end{proof}

\paragraph{Remark.}
Yet again we see a threshold for $\ell$, namely, when $\ell < \sqrt{k}/100$, hyperplane rounding  
beats the naive random algorithm, and for $\ell = \Omega(\sqrt{k})$, it fails to do better. 
In fact, as we'll see in the next section, assuming the UGC, we show a hardness result when 
$\ell = \Omega(\sqrt{k})$.

\section{Hardness of Max-$2$-Coloring under Low Discrepancy}
\seclab{hardness}
In this section we consider the hardness of Max-$2$-Coloring when promised 
discrepancy as low as one. As noted in \secref{analysis:rainbow}, our analysis 
requires the configuration of vectors in an edge to be well behaved with respect 
to sum-norm and para-volume. While in the discrepancy case, we can ensure good 
sum-norm, the vectors in an edge can have arbitrarily low para-volume. While in 
the rainbow case we can remedy this by finding a reasonably large well behaved 
subset of vectors, this is not possible in the case of discrepancy. 

Indeed, consider the following counterexample: Start in $2$ dimensions with 
$k/3$ copies each of any $u_1,u_2,u_3$ such that $u_1+u_2+u_3 = 0$. Lift all 
vectors to $3$-dimensions by assigning every vector a third coordinate of value 
exactly $1/k$. This satisfies \rlxref{discrepancy}, yet every superconstant 
sized subset has para-volume zero.

Confirming that this is not an artifact of our techniques and the problem is in fact hard, we show in this section via a reduction 
from Max-Cut, that assuming the Unique Games conjecture, it is NP-Hard to 
Max-$2$-Color much better than the naive random algorithm that miscolors $2^{-k + 1}$ fraction of edges, even in the case of 
discrepancy-$1$ hypergraphs. 

\subsection{Reduction from Max-Cut}
\label{subsec:maxcut}
Let $k = 2t + 1$. Let $G = (V, E)$ be an instance of Max-Cut, where each edge has 
weight 1. Let $n = |V|$ and $m = |E|$. We produce a hypergraph $H = (V', E')$ where 
$V' = V \times [k]$. For each $u \in V$, let $\cloud(u) := \{ u \} \times [k]$. For 
each edge $(u, v) \in E$, we add $N := 2 \binom{k}{t} \binom{k}{t + 1}$ hyperedges 
\[
\{U \cup V :  U \subseteq \cloud(u), V \subseteq \cloud(v), |U| + |V| = k, ||U| - |V|| = 1 \},
\]
each with weight $\frac{1}{N}$. Call these hyperedges {\em created by $(u, v)$}. 
The sum of weights is $m$ for both $G$ and $H$.

\subsubsection{Completeness}
Given a coloring $C : V \mapsto \{ B, W \}$ that cuts at least $(1 - \alpha)m$ edges of $G$, 
we color $H$ so that for every $v \in V$, each vertex in $\cloud(v)$ is given the same color as $v$. 
If $(u, v) \in E$ is cut, all hyperedges created by $(u, v)$ will have discrepancy 1. Therefore, 
the total weight of hyperedges with discrepancy 1 is at least $(1 - \alpha)m$.

\subsubsection{Soundness}
Given a coloring $C' : V' \mapsto \{ B, W \}$ such that the total weight of non-monochromatic 
hyperedges is $(1 - \beta)m$, $v \in V$ is given the color that appears the most in its cloud 
($k$ is odd, so it is well-defined). Consider $(u, v) \in E$. If no hyperedge created by 
$(u, v)$ is monochromatic, it means that $u$ and $v$ should be given different colors by the 
above majority algorithm (if they are given the same color, say white, then there are at least 
$t+1$ white vertices in both clouds, so we have at least one monochromatic hyperedge). 

This means that for each $(u, v) \in E$ that is uncut by the above algorithm (lost weight 1 
for Max-Cut objective), at least one hyperedge created by $(u, v)$ is monochromatic, and we lost 
weight at least $\frac{1}{N}$ there for our problem. This means that the total weight of cut 
edges for Max-Cut is at least $(1 - \beta N)m$.

\subsubsection{The Result}
\begin{theorem} [\cite{KKMO07}]
Let $G = (V, E)$ be a graph with $m = |E|$. 
For sufficiently small $\epsilon > 0$, it is UG-hard to distinguish the following cases.
\begin{itemize}
\item There is a 2-coloring that cuts at least $(1 - \epsilon)|E|$ edges.
\item Every 2-coloring cuts at most $(1 - (2 / \pi) \sqrt{\epsilon})|E|$ edges.
\end{itemize}
\end{theorem}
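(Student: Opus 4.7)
The final statement is the classical Khot--Kindler--Mossel--O'Donnell UG-hardness for Max-Cut from \cite{KKMO07}, and in the paper it will be invoked as a black box: combined with the reduction in Section~\ref{subsec:maxcut} it immediately yields \thmref{hardness1} (plug $\epsilon = (1/2)^{6k}$ into KKMO, observe via a short Stirling estimate that the edge-blowup $N = \Theta(4^k/k)$ satisfies $N(1/2)^{5k} < (2/\pi)(1/2)^{3k}$ for $k$ large, and then compose with the completeness and soundness analyses just above). No proof in the usual sense is thus required for the stated theorem itself; for the reader's benefit I nonetheless outline the KKMO argument at a high level.

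The plan would start from a Unique Games instance with label set $[R]$: under the UGC, for any $\eta > 0$, distinguishing value $\geq 1-\eta$ from value $\leq \eta$ is NP-hard once $R$ is sufficiently large. From such an instance one constructs a Max-Cut instance on $V_{UG} \times \{-1,+1\}^R$, placing, for each UG constraint with permutation $\pi$, weighted edges between $(u,x)$ and $(v,y)$, where $y$ is obtained from $\pi \cdot x$ by independently negating each coordinate so that the per-coordinate correlation is $-(1-2\epsilon)$.

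For completeness, a UG labeling $\ell$ induces the cut given by the dictator functions $x \mapsto x_{\ell(v)}$, and each satisfied UG constraint contributes a $(1-\epsilon)$ fraction of cut weight; the overall cut value is therefore at least $(1-\eta)(1-\epsilon) \geq 1 - \epsilon - \eta$, matching the completeness bound of the theorem after $\eta$ is taken much smaller than $\epsilon$.

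Soundness is where the main technical obstacle lies and is the essential content of \cite{KKMO07}. Given arbitrary Boolean functions $f_v$ defining a cut, the per-constraint cut mass is a noise stability of $\pm 1$ Boolean functions. The Mossel--O'Donnell--Oleszkiewicz invariance principle supplies a dichotomy: either some coordinate has large influence in some $f_v$, in which case one randomly decodes a labeling of non-negligible UG value (contradicting UG soundness), or the noise stability is essentially that of the Gaussian analogue. Borell's rearrangement inequality then caps the Gaussian noise stability by that of halfspaces, yielding cut fraction at most $\arccos(-(1-2\epsilon))/\pi = 1 - (2/\pi)\sqrt{\epsilon} + O(\epsilon)$. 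The hardest single piece is the invariance principle itself, proved via hypercontractivity together with a Lindeberg-style replacement on multilinear polynomials; this is why, in the present paper, the theorem is invoked rather than reproved.
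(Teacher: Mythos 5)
Your reading is correct: the paper states this as a cited theorem from \cite{KKMO07} and provides no proof, using it as a black box for the composition with the Max-Cut reduction in Section~\ref{subsec:maxcut}. Your high-level sketch of the KKMO argument (long-code test with correlation $-(1-2\epsilon)$, completeness via dictators, soundness via the invariance principle plus Borell's inequality giving $\arccos(-(1-2\epsilon))/\pi = 1 - (2/\pi)\sqrt{\epsilon} + O(\epsilon)$) is accurate, and your parameter check ($N = \Theta(4^k/k)$, $\epsilon = 2^{-6k}$) matches how the paper invokes the result.
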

Our reduction shows that

\begin{theorem}
Given a hypergraph $H = (V, E)$, it is UG-hard to distinguish the following cases.
\begin{itemize}
\item There is a 2-coloring where at least $(1 - \epsilon)$ fraction of hyperedges have discrepancy 1.
\item Every 2-coloring cuts (in a standard sense) at most $(1 - (2 / \pi) \frac{\sqrt{\epsilon}}{N})$ fraction of hyperedges.
\end{itemize}
\end{theorem}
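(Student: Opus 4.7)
The plan is to derive the theorem as a direct consequence of the KKMO UG-hardness of Max-Cut fed through the cloud-replacement reduction that has just been set up in this section. Take a Max-Cut instance $G = (V, E)$ from KKMO's dichotomy between cut value $(1-\epsilon)|E|$ and $(1-(2/\pi)\sqrt{\epsilon})|E|$, and produce the $k$-uniform hypergraph $H = (V', E')$ with $V' = V \times [k]$ and the weighted gadget of $N = 2\binom{k}{t}\binom{k}{t+1}$ hyperedges per edge described in Section~\ref{subsec:maxcut}. The two sides of the KKMO gap will map to the two sides of the hypergraph dichotomy, with the parameter $N$ appearing as a constant loss factor in the soundness gap (this is fine because for fixed $k$, $N$ is a constant).

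On the completeness side, lift any cut of $G$ of value $(1-\epsilon)|E|$ to a coloring of $H$ by painting every vertex of $\cloud(v)$ with the color assigned to $v$. For any edge $(u,v)$ that is cut in $G$, every hyperedge created by $(u,v)$ consists of $t$ vertices from one cloud and $t+1$ from the other, hence has exactly $t$ vertices of one color and $t+1$ of the other, i.e.\ discrepancy exactly $1$. Weighted appropriately, the total weight of discrepancy-$1$ edges is therefore at least $(1-\epsilon)m$, which gives the YES case of the theorem.

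On the soundness side, take any $2$-coloring $C'$ of $H$ whose non-monochromatic weight is $(1-\beta)m$, and decode it to a coloring $C$ of $V$ by majority within each cloud (well-defined since $k=2t+1$). The key pigeonhole observation is: if $C(u) = C(v)$, then both $\cloud(u)$ and $\cloud(v)$ contain at least $t+1$ vertices of that common majority color, so one can select $t+1$ such vertices from $\cloud(u)$ and $t$ from $\cloud(v)$ (or vice versa) to form a monochromatic hyperedge in the gadget for $(u,v)$. Contrapositively, every edge of $G$ that $C$ fails to cut contributes weight at least $1/N$ to the monochromatic weight of $H$, so the Max-Cut value of $C$ is at least $(1-\beta N)m$. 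Combining with KKMO, $(1-\beta N)m \le (1-(2/\pi)\sqrt{\epsilon})m$, giving $(1-\beta) \le 1 - (2/\pi)\sqrt{\epsilon}/N$, which is the stated NO case.

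I do not expect any step here to be genuinely difficult: the completeness computation is a single counting observation, and the soundness reduction is a one-line pigeonhole using that $k$ is odd. The step that requires the most care is the bookkeeping of weights under the weighted gadget (each hyperedge has weight $1/N$, so that the totals on both the Max-Cut and the hypergraph side are normalized to $m$), which is exactly what produces the $\sqrt{\epsilon}/N$ rather than $\sqrt{\epsilon}$ in the soundness gap. The only design choice that has to be honest is that $k$ be odd (used both to define the majority decoding and to ensure $\lvert\,|U|-|V|\,\rvert = 1$ is achievable), which is built into the setup $k=2t+1$.
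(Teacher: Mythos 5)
Your proposal is correct and is essentially identical to the paper's own argument: the same cloud-replacement gadget, the same completeness by copying the cut coloring into each cloud, and the same soundness via majority decoding plus the pigeonhole observation that two monochromatic clouds of the same color yield a monochromatic hyperedge, giving the factor-$N$ loss. No meaningful difference from the paper's proof.
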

$N = 2\binom{k}{t}\binom{k}{t + 1} \leq (2 / \pi) 2^k \cdot 2^k \leq (2 / \pi)  2^{2k}$. If we take $\epsilon = 2^{-6k}$ for large enough $k$, we cannot distinguish 
\begin{itemize}
\item There is a 2-coloring where at least $(1 - 2^{-6k})$ fraction of hyperedges have discrepancy 1.
\item Every 2-coloring cuts (in a standard sense) at most $(1 - 2^{-5k})$ fraction of hyperedges.
\end{itemize}
This proves~\thmref{hardness1}.

\subsection{NP-Hardness}

In this subsection, we show that given a hypergraph which admits a 2-coloring with discrepancy at most 2, 
it is NP-hard to find a 2-coloring that has less than $k^{-O(k)}$ fraction of monochromatic hyperedges.
Note that while the inapproximability factor is worse than the previous subsection, we get NP-hardness and it holds when the input hypergraph is promised to have {\em all} hyperedges have discrepancy at most $2$.
The reduction and the analysis closely follow from the more general framework of Guruswami and Lee~\cite{GL15a} except that we prove a better reverse hypercontractivity bound for our case. 

\subsubsection{$Q$-Hypergraph Label Cover}
An instance of $Q$-Hypergraph Label Cover is based on a $Q$-uniform hypergraph $H = (V, E)$. Each hyperedge-vertex pair $(e, v)$ such that $v \in e$ is associated with a projection $\pi_{e, v} : [R] \rightarrow [L]$ for some positive integers $R$ and $L$. A labeling $l : V \rightarrow [R]$ {\em strongly satisfies} $e = \left\{v_1, \dots, v_Q \right\}$ when $\pi_{e, v_1}(l(v_1)) = \cdots = \pi_{e, v_Q}(l(v_Q))$. It {\em weakly satisfies} $e$ when $\pi_{e, v_i}(l(v_i)) = \pi_{e, v_j}(l(v_j))$ for some $i \neq j$. The following are two desired properties of instances of $Q$-Hypergraph Label Cover.
\begin{itemize}
\item Regular: every projection is $d$-to-$1$ for $d = R/L$.

\item Weakly dense: any subset of $V$ of measure at least $\epsilon$ vertices induces at least $\frac{\epsilon^Q}{2}$ fraction of hyperedges.

\item $T$-smooth: for all $v \in V$ and $i \neq j \in [R]$, 
$\quad \Pr_{e \in E : e \ni v} [\pi_{e, v}(i) = \pi_{e, v}(j)] \leq \frac{1}{T}$. 
\end{itemize}

The following theorem asserts that it is NP-hard to find a good labeling in such instances. 

\begin{theorem}[\cite{GL15a}]
For all integers $T, Q \geq 2$ and $\eta > 0$, the following is true. Given an instance of $Q$-Hypergraph Label Cover that is regular, weakly-dense and $T$-smooth, it is NP-hard to distinguish between the following cases.
\begin{itemize}
\item Completeness: There exists a labeling $l$ that strongly satisfies every hyperedge.
\item Soundness: No labeling $l$ can weakly satisfy $\eta$ fraction of hyperedges.
\end{itemize}
\label{thm:PCP}
\end{theorem}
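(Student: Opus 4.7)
The plan is to derive this theorem as an instantiation of the standard smooth $2$-prover Label Cover framework, lifted to the $Q$-uniform hypergraph setting. The starting point is a bipartite $2$-prover Label Cover instance on vertex sets $U$ (with label alphabet $[R]$) and $W$ (with label alphabet $[L]$), obtained by applying Raz's parallel repetition to a gap-$3$SAT instance; this already gives NP-hardness of distinguishing perfect completeness from arbitrarily small soundness $\delta > 0$, with $d$-to-$1$ projections $\pi_{u,w} : [R] \to [L]$ whenever $u \sim w$. Applying Khot's smoothing transformation, which replaces each $u$ by a bundle of randomized copies and restricts the projections accordingly, enforces the $T$-smoothness property while preserving perfect completeness and keeping soundness arbitrarily small as a function of $T$.

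I would then build the $Q$-uniform hypergraph by setting $V := U$ and, for each $w \in W$, including as hyperedges the $Q$-tuples $\{u_1, \dots, u_Q\}$ of neighbors of $w$, with projections $\pi_{e,u_i} := \pi_{u_i, w}$ inherited from the $2$-prover constraints. The completeness step is immediate: a labeling $\ell$ satisfying all $2$-prover constraints assigns $\ell(u)$ to each $u \in V$, and every hyperedge arising from $w$ is strongly satisfied because $\pi_{u_i,w}(\ell(u_i)) = \ell(w)$ for all $i$. For the soundness step, suppose $\ell'$ weakly satisfies an $\eta$ fraction of hyperedges; for each such hyperedge there exist two vertices $u_i, u_j$ whose projections under $w$ agree, so one can decode a labeling of $W$ by, for each $w$, choosing a random neighbor $u$ and setting $\ell(w) := \pi_{u,w}(\ell'(u))$. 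A standard pairwise birthday argument shows the decoded labeling satisfies an $\Omega(\eta/Q^2)$ fraction of $2$-prover constraints, contradicting the NP-hardness of the original Label Cover instance once $\delta$ is chosen sufficiently small relative to $\eta$.

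Regularity ($d$-to-$1$ projections) and $T$-smoothness are inherited directly from the underlying smooth $2$-prover Label Cover. For weak density, I would ensure that hyperedges are sampled uniformly from the neighborhoods $N(w)$ in a sufficiently product-like manner, so that an $\epsilon$-fraction subset $S \subseteq V$ has each $w$ contributing approximately $(|S \cap N(w)|/|N(w)|)^Q$ of its hyperedges to $S$, and averaging over $w$ together with Chernoff-type concentration gives the desired $\epsilon^Q/2$ bound. The main obstacle is balancing these three properties simultaneously: Khot's smoothing introduces extra randomness that can weaken density, so one must tune the smoothness parameter and the degree of the underlying bipartite $2$-prover instance to ensure that the neighborhood sampling remains tightly concentrated. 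I expect this tension between smoothness and density---rather than the completeness or soundness arguments themselves---to be the technically delicate ingredient, handled by working with sufficiently regular $2$-prover instances and a careful choice of parameters in the smoothing step.
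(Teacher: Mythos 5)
The paper itself does not prove this theorem; it is imported verbatim from~\cite{GL15a}, so there is no in-paper argument to compare against. That said, your outline does reproduce the construction one expects: start from a $T$-smooth bipartite projection Label Cover (Raz repetition composed with Khot's smoothing) with left vertices $U$ carrying alphabet $[R]$ and right vertices $W$ carrying alphabet $[L]$; set $V:=U$; and for each $w\in W$ declare every $Q$-subset of $N(w)$ a hyperedge with the inherited projections $\pi_{e,u}=\pi_{u,w}$. Completeness is immediate, and your birthday-paradox soundness decoding is indeed the standard one: writing $p_{w,b}$ for the fraction of neighbors $u$ of $w$ with $\pi_{u,w}(\ell'(u))=b$, a union bound over the $\binom{Q}{2}$ pairs in a hyperedge gives $\E_w[\sum_b p_{w,b}^2]\ge \eta/\binom{Q}{2}$ once an $\eta$ fraction of hyperedges is weakly satisfied, and randomly setting $\ell(w):=\pi_{u,w}(\ell'(u))$ for a uniform neighbor $u$ then yields a $2$-prover labeling of value $\Omega(\eta/Q^2)$, beating the Label Cover soundness for $\delta$ small. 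Regularity and $T$-smoothness transfer to the hypergraph precisely because biregularity makes ``random hyperedge containing $u$'' equivalent to ``random neighbor $w$ of $u$.''

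The one place you go off-course is the ``technically delicate tension'' you flag at the end: it is not there. Khot's smoothing modifies the alphabets and the projection maps but leaves the underlying bipartite graph, and hence all degrees and neighborhoods, untouched; weak density is a purely combinatorial consequence of biregularity, independent of smoothing. Concretely, with right-degree $d_W$ and $|S|=\epsilon|V|$, the fraction of hyperedges inside $S$ is $\E_w\bigl[\binom{|S\cap N(w)|}{Q}/\binom{d_W}{Q}\bigr]$, which for $d_W\gg Q$ is within the built-in factor $1/2$ of $\E_w[(|S\cap N(w)|/d_W)^Q]\ge(\E_w[|S\cap N(w)|/d_W])^Q=\epsilon^Q$ by Jensen's inequality applied to the convex map $x\mapsto x^Q$. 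No Chernoff-type concentration and no parameter balancing against smoothness is needed; one only needs $d_W$ large relative to $Q$, which parallel repetition already supplies. So the proposal is sound and matches the expected construction, but it misattributes the density bound to concentration (it is convexity) and identifies a phantom difficulty in reconciling smoothness with density.
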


\subsubsection{Distributions}
We first define the distribution $\overline{\mu}'$ for each {\em block}. 
$2Q$ points $x_{q, i} \in \{ 1, 2 \}^d$ for $1 \leq q \leq Q$ and $1 \leq i \leq 2$ are sampled by the following procedure.

\begin{itemize}
\item Sample $q' \in [Q]$ uniformly at random.
\item Sample $x_{q', 1}, x_{q', 2} \in \{ 1, 2 \}^d$ i.i.d.
\item For $q \neq q'$, $1 \leq j \leq d$, sample a permutation $((x_{q, 1})_j, (x_{q, 2})_j) \in \{ (1, 2), (2, 1) \}$ uniformly at random. 
\end{itemize}

\subsubsection{Reduction and Completeness}

We now describe the reduction from $Q$-Hypergraph Label Cover. Given a $Q$-uniform hypergraph $H = (V, E)$ with $Q$ projections from $[R]$ to $[L]$ for each hyperedge (let $d = R/L$), the resulting instance of $2Q$-Hypergraph Coloring is $H' = (V', E')$ where $V' = V \times \{ 1, 2 \}^R$. Let $\cloud(v) := \left\{ v \right\} \times \{ 1, 2 \}^R$. 
The set $E'$ consists of hyperedges generated by the following procedure.

\begin{itemize}
\item Sample a random hyperedge $e = (v_1, \dots, v_Q) \in E$ with associated projections $\pi_{e, v_1}, \dots, \pi_{e, v_Q}$ from $E$. 

\item Sample $( x_{q, i} )_{1 \leq q \leq Q, 1 \leq i \leq 2} \in \{ 1, 2 \}^{R}$ in the following way. For each $1 \leq j \leq L$, independently sample 
$( (x_{q, i})_{\pi_{e, v_q}^{-1}(j)} )_{q, i}$ from $((\{ 1, 2 \}^d){2Q}, \overline{\mu}')$. 

\item Add a hyperedge between $2Q$ vertices $\left\{ (v_q, x_{q, i}) \right\}_{q, i}$ to $E'$. We say this hyperedge is {\em formed from} $e \in E$. 
\end{itemize}

Given the reduction, completeness is easy to show.

\begin{lemma}
If an instance of $Q$-Hypergraph Label Cover admits a labeling that strongly satisfies every hyperedge $e \in E$, there is a coloring $c : V' \rightarrow \{ 1, 2 \}$ of the vertices of $H'$ such that every hyperedge $e' \in E'$ has at least $(Q - 1)$ vertices of each color. 
\end{lemma}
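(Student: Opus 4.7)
The plan is to use the standard long-code decoding to define the coloring, namely $c(v,x) := x_{l(v)}$ where $l : V \to [R]$ is the given labeling of the Label Cover instance that strongly satisfies every hyperedge. The task then reduces to analyzing, for an arbitrary hyperedge $e' \in E'$ formed from some $e = (v_1, \dots, v_Q) \in E$ with associated projections $\pi_{e,v_1},\dots,\pi_{e,v_Q}$, how many of the $2Q$ vertices $\{(v_q,x_{q,i})\}_{q,i}$ receive each color. Concretely, vertex $(v_q, x_{q,i})$ gets color $(x_{q,i})_{l(v_q)}$, so I need to understand the marginal distribution of the tuple $\bigl((x_{q,1})_{l(v_q)},\,(x_{q,2})_{l(v_q)}\bigr)_{q=1}^{Q}$ over the support of the sampling procedure.

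The crucial observation is that strong satisfaction of $e$ by $l$ means the values $j^* := \pi_{e,v_q}(l(v_q))$ agree for all $q \in [Q]$. Thus every coordinate $l(v_q)$ lies inside the $j^*$-th block of the reduction, and so the joint law of the $Q$ pairs above is governed entirely by the single copy of $\overline{\mu}'$ used for block $j^*$. I would then unwind the definition of $\overline{\mu}'$ for that block: first a $q' \in [Q]$ is chosen, then $x_{q',1},x_{q',2}$ are drawn i.i.d.\ uniformly from $\{1,2\}^d$, while for every other $q \neq q'$ and every coordinate within the block, $\bigl((x_{q,1})_{\cdot}, (x_{q,2})_{\cdot}\bigr)$ is a uniformly random permutation of $(1,2)$.

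From this I read off the color pattern deterministically (for any realization in the support): for each $q \neq q'$, the coordinate $l(v_q)$ is a single coordinate of block $j^*$, so $(x_{q,1})_{l(v_q)}$ and $(x_{q,2})_{l(v_q)}$ are a permutation of $(1,2)$ and hence contribute exactly one vertex of color $1$ and one of color $2$. This gives $Q-1$ vertices of each color from the $q \neq q'$ pairs. The remaining two vertices, $(v_{q'},x_{q',1})$ and $(v_{q'},x_{q',2})$, may take any values in $\{1,2\}^2$, so in the worst case they are both color $1$ (or both color $2$), contributing at most $2$ extra vertices of a single color. Adding these to the balanced part, each color class has size at least $Q-1$, proving the lemma.

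The argument is essentially a bookkeeping check and I don't anticipate a genuine obstacle; the only care-demanding step is verifying that the relevant coordinates all collapse into a single block, which is exactly where the strong-satisfaction hypothesis is used. Everything else follows mechanically from the per-block structure of $\overline{\mu}'$, and in particular no probabilistic argument is needed since the desired guarantee holds pointwise on the support.
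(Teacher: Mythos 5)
Your proof is correct and follows essentially the same approach as the paper's: define $c(v,x) = x_{l(v)}$ and observe that strong satisfaction places all $Q$ relevant coordinates in one block, so by the structure of $\overline{\mu}'$ all but one index $q$ has $\{(x_{q,1})_{l(v_q)}, (x_{q,2})_{l(v_q)}\} = \{1,2\}$. The paper states this last fact more tersely, while you unwind the per-block sampling to justify it, but the argument is the same.
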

\begin{proof}
Let $l : V \rightarrow [R]$ be a labeling that strongly satisfies every hyperedge $e \in E$. 
For any $v \in V, x \in \{ 1, 2 \}^R$, let $c(v, x) = x_{l(v)}$.
For any hyperedge $e' = \left\{ (v_q, x_{q, i}) \right\}_{q, i} \in E'$, 
$c(v_q, x_{q, i}) = (x_{q, i})_{l(v_q)}$,
and all but one $q$ satisfies $\left\{ (x_{q, 1})_{l(v_q)}, (x_{q, 2})_{l(v_q)} \right\} = \{ 1, 2 \}$. 
Therefore, the above strategy ensures that every hyperedge of $E'$ contains at least $(Q-1)$ vertices of each color.
\end{proof}

\subsubsection{Soundness}
\label{subsec:soundness}
\begin{lemma}
\label{lem:soundness-main}
There exists $\eta := \eta(Q)$ such that 
if $I \subseteq V'$ of measure $\frac{1}{2}$ induces less than $Q^{-O(Q)}$ fraction of hyperedges in $H'$, 
the corresponding instance of $Q$-Hypergraph Label Cover admits a labeling that weakly satisfies a fraction $\eta$ of hyperedges.
\end{lemma}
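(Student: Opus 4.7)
}
The plan is to follow the standard ``dictatorship test to label cover'' soundness template (as in the Guruswami--Lee framework~\cite{GL15a}), with one key customised ingredient: a reverse hypercontractivity bound tailored to the distribution $\overline{\mu}'$ that saves a factor of $Q^{O(Q)}$ over the generic bound. Concretely, let $f : V' \to \{0,1\}$ be the indicator of the set $I$ and, for each $v \in V$, let $f_v : \{1,2\}^R \to \{0,1\}$ be its restriction to $\cloud(v)$, so that $\E_v \E[f_v] = \Pr[v' \in I] = 1/2$. Call a vertex $v$ \emph{heavy} if $\E[f_v] \ge 1/4$; by Markov at least a $1/4$ fraction of vertices are heavy. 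By weak density of the label cover instance, at least a $(1/4)^Q / 2$ fraction of hyperedges $e = (v_1,\dots,v_Q)$ are entirely heavy, and it suffices to handle such hyperedges.

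Next I would set up the standard dictatorship analysis on each such hyperedge. For an all-heavy edge $e$, the probability that a random hyperedge $e' \in E'$ formed from $e$ lies entirely inside $I$ equals
\[
\E_{(x_{q,i})_{q,i} \sim \overline{\mu}'^{\otimes L}}\Biggl[\prod_{q \in [Q], i \in \{1,2\}} f_{v_q}\bigl(x_{q,i} \circ \pi_{e,v_q}\bigr)\Biggr].
\]
The dichotomy I would prove is: either (a) some $f_{v_q}$ has a coordinate of degree-$\tau$ influence at least $\tau$ (for suitable $\tau := \tau(Q)$), in which case the $T$-smoothness of label cover, combined with the standard influence-decoding step, yields a labeling weakly satisfying $\eta(Q)$ fraction of hyperedges; or (b) all cloud functions are ``smooth'' on $e$, and then the improved reverse hypercontractivity bound for $\overline{\mu}'$ kicks in to lower bound the above expectation by $Q^{-cQ}$ for some absolute constant $c$. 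Choosing the $Q^{-O(Q)}$ threshold in the statement to be strictly below this lower bound rules out case (b), forcing case (a), and hence the desired labeling.

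The main obstacle --- and the place where the novelty of the argument lies --- is establishing the sharper reverse hypercontractivity bound for the distribution $\overline{\mu}'$. The generic bound via the minimum probability atom of $\overline{\mu}'$ is much weaker than $Q^{-O(Q)}$, because the connectivity of $\overline{\mu}'$ on $\{1,2\}^{2Q}$ deteriorates as $Q$ grows. To get the improved bound I would exploit the specific mixture structure of $\overline{\mu}'$: conditioning on the random index $q' \in [Q]$, the block $(x_{q',1}, x_{q',2})$ is an independent pair and the other $Q-1$ blocks are perfectly anti-correlated coordinate-wise. This can be rephrased as a product of $Q$ conditional ``edge-uniform'' distributions on the hypercube, each of which has good connectivity constants. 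I would apply the tensorised Mossel reverse hypercontractivity inequality block-by-block, and then average over $q'$, obtaining a bound of the form $\prod_q \E[f_{v_q}]^{O(1)} \ge (1/4)^{O(Q)} = Q^{-O(Q)}$ (or better) once all $f_{v_q}$ are heavy and smooth. The other steps --- weak density, influence decoding using $T$-smoothness, and bookkeeping to turn an edge-level conclusion into an $\eta(Q)$-fraction labeling --- are standard and are executed exactly as in~\cite{GL15a}.
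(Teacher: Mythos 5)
Your high-level template matches the paper's: both invoke the Guruswami--Lee~\cite{GL15a} soundness machinery as a scaffold, and both locate the novelty in an improved ``reverse hypercontractivity''-type bound tuned to the distribution $\overline{\mu}'$. But the mechanism you propose for that bound is different from the paper's, and it has a genuine gap.

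You propose to condition on the random index $q'$, observe that the distribution factors into a product over blocks $q \in [Q]$, and then apply Mossel's tensorized reverse hypercontractivity block-by-block. The problem is that conditioned on $q'$, the $Q-1$ blocks with $q \neq q'$ are \emph{perfectly anti-correlated}: $(x_{q,1}, x_{q,2})$ is supported only on $\{(1,2),(2,1)\}^d$, not on all of $\{1,2\}^{2d}$. Such a distribution is not connected, and reverse hypercontractivity gives no lower bound there --- for instance a function supported on the first half of the cube has positive mean but zero two-point correlation under the anti-correlated law. Un-conditioning and averaging over $q'$ afterward does not fix this, because an average of bounds of which $Q-1$ out of $Q$ are vacuous is still vacuous. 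You could instead try applying generic reverse hypercontractivity directly to the full $2Q$-point distribution $\overline{\mu}'$ (which is connected, as you note), but its minimum nonzero atom is roughly $\frac{1}{4Q}\cdot 2^{-(Q-1)}$, and the exponent $C(\alpha)$ in Mossel's bound blows up far too fast for this atom to recover $Q^{-O(Q)}$.

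The paper avoids this entirely. Rather than bounding the $2Q$-point correlation directly, it bounds only the \emph{two-point} correlation $\E[f(x)f(y)]$ where $x, y$ are the two samples within a single cloud $\cloud(v)$; the marginal joint law of $(x,y)$ per coordinate group is the $\frac{1}{Q}$-independent / $\frac{Q-1}{Q}$-anti-correlated mixture, which is connected with atom $\sim \frac{1}{4Q}$. Crucially, instead of invoking generic reverse hypercontractivity on even this two-point law (which would still be too weak), the paper does a direct Fourier computation: decomposing $f$ into shattered ($f^{\good}$) and non-shattered ($f^{\bad}$) parts, noting $\E[\chi_{\{i\}}(x)\chi_{\{i\}}(y)] = \rho = -\frac{Q-1}{Q}$ and that only shattered sets with $\pi(S) = \pi(T) = \pi(S\cap T)$ contribute, and concluding $\E[f(x)f(y)] \geq \E[f]^2(1+\rho) - O(\epsilon + \delta) \geq \frac{1}{5Q}$. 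This per-cloud two-point bound is then plugged directly into the ``Step~2'' slot of the \cite{GL15a} soundness analysis, which handles the aggregation over the $Q$ clouds (yielding the $Q^{-O(Q)}$ threshold), the weak-density argument, and the influence decoding --- steps you reconstruct by hand but which the paper treats as a black box. So: right framework, right intuition that the distribution's structure must be exploited, but the block-by-block Mossel plan does not go through and should be replaced by the single-cloud Fourier argument.
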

\begin{proof}
Consider a vertex $v$ and hyperedge $e \in E$ that contains $v$ with a permutation $\pi = \pi_{e, v}$. 
Let $f : \{ 1, 2 \}^R \mapsto [ 0 , 1 ]$ be a {\em noised} indicator function of $I \cap \cloud(v)$
with $\E_{x \in \{ 1, 2 \}^R}[f(x)] \geq \frac{1}{2} - \epsilon$ for small $\epsilon > 0$ that will be determined later. We define the inner product
\[
\langle f, g \rangle = \E_{x \in \{ 1, 2 \}^R} [f(x)g(x)].
\]
$f$ admits the Fourier expansion
\[
\sum_{S \subseteq [R]} \hat{f}(S) \chi_S
\]
where 
\[
\chi_S(x_1, \dots, x_k) = \prod_{i \in S} (-1)^{x_i}, \qquad 
\hat{f}(S) = \langle f , \chi_S \rangle.
\]
In particular, $\hat{f}(\emptyset) = \E[f(x)]$, and 
\begin{equation}
\label{eq:fourier}
\sum_{S} \hat{f}(S)^2 = \E[f(x)^2] \leq \E[f(x)]\
\end{equation}
A subset $S \subseteq [R]$ is said to be {\em shattered} by $\pi$ if $|S| = |\pi(S)|$. For a positive integer $J$, we decompose $f$ as the following:
\begin{align*}
f^{\good} &= \sum_{S : \mbox{ shattered}} \hat{f}(S) \chi_S \\
f^{\bad} &= f - f^{\good}.
\end{align*}
By adding a suitable noise and using smoothness of Label Cover, for any $\delta > 0$, we can assume that $||f^{\bad}||^2 \leq \delta$. 
See~\cite{GL15a} for the details.

Each time a $2Q$-hyperedge is sampled is formed from $e$, two points are sampled from each cloud. Let $x, y$ be the points in $\cloud(v)$. 
Recall that they are sampled such that for each $1 \leq j \leq L$,
\begin{itemize}
\item With probability $\frac{1}{Q}$, for each $i \in \pi^{-1}(j)$, $x_{i}$ and $y_{i}$ are independently sampled from $\{ 1, 2 \}$.
\item With probability $\frac{Q - 1}{Q}$, for each $i \in \pi^{-1}(j)$, $(x_{i}, y_{i})$ are sampled from $\{ (1, 2), (2, 1) \}$.
\end{itemize}

We can deduce the following simple properties.
\begin{enumerate}
\item $\E_{x, y}[\chi_{\{ i \}}(x) \chi_{\{ i \}}(y)] = - \frac{Q - 1}{Q}$. Let $\rho := - \frac{Q - 1}{Q}$.
\item $\E_{x, y}[\chi_{\{ i \}}(x) \chi_{\{ j \}}(y)] = 0$ if $i \neq j$.
\item $\E_{x, y}[\chi_{S}(x) \chi_{T}(y)] = 0$ unless $\pi(S) = \pi(T) = \pi(S \cap T)$. 
\end{enumerate}

We are interested in lower bounding 
\[
\E_{x, y} [f(x) f(y)] \geq \E[f^{\good}(x) f^{\good}(y)] 
- 3 \| f^{\bad}(x) \| ^2 \| f \|^2 \geq \E[f^{\good}(x) f^{\good}(y)] 
- 3 \delta.
\]

By the property 3., 
\begin{align*}
\E[f^{\good}(x) f^{\good}(y)] 
&= \sum_{S : \mbox{ shattered}} \hat{f}(S)^2 \rho^{|S|} \\
& = \E[f]^2 + 
\sum_{S : \mbox{ shattered}} \hat{f}(S)^2 \rho^{|S|} \\
& \geq \E[f]^2 + \rho (\sum_{|S| > 1} \hat{f}(S)^2) \qquad \mbox { since } \rho \mbox { is negative}\\
& \geq \E[f]^2 + \rho (\E[f] - \E[f]^2) \quad \mbox{ by~\eqref{eq:fourier}}  \\
& \geq \E[f]^2 (1 + \rho) - \epsilon \quad \mbox{ since } \E[f] \geq \frac{1}{2} - \epsilon \Rightarrow \E[f] - \E[f]^2 \leq \E[f]^2 + \epsilon \\
& \geq \frac{\E[f]^2}{Q} - \epsilon.
\end{align*}

By taking $\epsilon$ and $\delta$ small enough, we can ensure that 
\begin{equation}
\label{eq:reverse}
\E[f(x)f(y)] \geq \zeta := \frac{1}{5Q}.
\end{equation}
The soundness analysis of Guruswami and Lee~\cite{GL15a} ensures (\eqref{eq:reverse} replaces their Step 2) that there exists $\eta := \eta(Q)$ such that if 
the fraction of hyperedges induced by $I$ is less than $Q^{-O(Q)}$, the Hypergraph Label Cover instance admits a solution that satisfies $\eta$ fraction of constraints. We omit the details.
\end{proof}

\subsubsection{Corollary to Max-2-Coloring under discrepancy $O(\log k)$}
The above NP-hardness, 
combined with the reduction techinque from Max-Cut in Section~\ref{subsec:maxcut}, 
shows that
given a $k$-uniform hypergraph, it is NP-hard to distinguish whether it has discrepancy at most $O(\log k)$ or any $2$-coloring leaves at least $2^{-O(k)}$ fraction of hyperedges monochromatic.
Even though the direction reduction from Max-Cut results in a similar inapproximability factor with discrepancy even $1$, 
this result does not rely on the UGC and hold even {\em all} edges (compared to {\em almost} in Section~\ref{subsec:maxcut}) have discrepancy $O(\log k)$. 

Let $r = \Theta(\frac{k}{\log k})$ so that $s = \frac{k}{r} = \Theta(\log k)$ is an integer. 
Given a $r$-uniform hypergraph, it is NP-hard to distinguish whether it has discrepancy at most $2$ or any $2$-coloring leaves at least $r^{-O(r)}$ fraction of hyperedges monochromatic.
Given a $r$-uniform hypergraph, the reduction replaces each vertex $v$ with $\cloud(v)$ that contains $(2s - 1)$ new vertices. 
Each hyperedge $(v_1, \dots, v_r)$ is replaced by $d := (\binom{2s - 1}{s})^r \leq (2^s)^r = 2^k$ hyperedges
\[
\{ \cup_{i = 1}^r {V_i} : V_i \subset \cloud(v_i), |V_i| = s \}.
\]
If the given $r$-uniform hypergraph has discrepancy at most $2$, the resulting $k$-uniform hypergraph has discrepancy at most $2s = O(\log k)$.

If the resulting $k$-uniform hypergraph admits a coloring that leaves $\alpha$ fraction of hyperedges monochromatic, 
giving $v$ the color that appears more in $\cloud(v)$ is guaranteed to leaves at most $d \alpha$ fraction of hyperedges monochromatic.
Therefore, if any $2$-coloring of the input $r$-uniform hypergraph leaves at least $r^{-O(r)}$ fraction of hyperedges monochromatic, 
any $2$-coloring of the resulting $k$-uniform hypergraph leaves at least $\frac{r^{-O(r)}}{d} = 2^{-O(k)}$ fraction of hyperedges.

\subsection{Hardness of Max-2-Coloring under almost $(k-\sqrt{k})$-colorability}
Let $k$ be such that $\ell := \sqrt{k}$ be an integer and let $\chi := k - \ell$.
We prove the following hardness result for any $\epsilon > 0$ assuming the Unique Games Conjecture:
given a $k$-uniform hypergraph such that there is a $\chi$-coloring that have at least $(1 - \epsilon)$ fraction of hyperedges rainbow,
it is NP-hard to find a $2$-coloring that leaves at most $(\frac{1}{2})^{k - 1}$ fraction of hyperedges monochromatic.

The main technique for this result is to show the existence of a {\em balanced pairwise independence distribution} with the desired support. 
Let $\mu$ be a distribution on $[\chi]^k$. $\mu$ is called balanced pairwise independent if for any $i \neq j \in [k]$ and $a, b \in [\chi]$, 
\[
\Pr_{(x_1, \dots, x_k) \sim \mu} [x_i = a, x_j = b] = \frac{1}{\chi^2}.
\]
For example, the uniform distribution on $[\chi]^k$ is a balanced pairwise distribution. We now consider the following distribution $\mu$ to sample $(x_1, \dots, x_k) \in [\chi]^k$. 
\begin{itemize}
\item Sample $S \subseteq [k]$ with $|S| = \chi$ uniformly at random. Let $S = \{ s_1 < \dots < s_{\chi } \}$.
\item Sample a permutation $\pi : [\chi] \mapsto [\chi]$.
\item Sample $y \in [\chi]$.
\item For each $i \in [k]$, if $i = s_j$ for some $j \in [\chi]$, output $x_i = \pi(\chi)$. Otherwise, output $x_i = y$.
\end{itemize}
Note that for any supported by $(x_1, \dots, x_k)$, we have $\{ x_1, \dots, x_k \} = [\chi]$. 
Therefore, $\mu$ is supported on {\em rainbow strings}. 
We now verify pairwise independence. 
Fix $i \neq j \in [k]$ and $a, b \in [\chi]$. 
\begin{itemize}
\item If $a = b$, by conditioning on wheter $i, j$ are in $S$ or not, 
\begin{align*}
\Pr_{\mu} [x_i = a, x_j = b] = & \Pr[x_i = a, x_j = b | i, j \in S] \Pr[i, j \in S] +  \\
& \Pr[x_i = a, x_j = b | i \in S, j \notin S] \Pr[i \in S, j \notin S] + \\
& \Pr[x_i = a, x_j = b | i \notin, j \in S] \Pr[i \notin, j \in S] + \\
& \Pr[x_i = a, x_j = b | i, j \notin S] \Pr[i, j \notin S] \\
= & 0 \cdot (\frac{\chi(\chi - 1)}{k(k - 1)}) + 2\cdot (\frac{1}{\chi^2}) \cdot (\frac{l \chi}{k(k-1)}) + (\frac{1}{\chi}) \cdot (\frac{\ell(\ell-1)}{k(k - 1)}) \\
= & \frac{2 \ell \chi + \chi (\ell^2 - \ell)}{\chi^2 k(k-1)}
= \frac{\chi k + \chi \sqrt{k}}{\chi^2 k(k-1)}
= \frac{\sqrt{k}(\sqrt{k} + 1)}{\chi k(\sqrt{k} + 1)(\sqrt{k} - 1)} \\
= & \frac{1}{\chi (k - \sqrt{k})} = \frac{1}{\chi^2}.
\end{align*}
\item If $a \neq b$, by the same conditioning, 
\begin{align*}
\Pr_{\mu} [x_i = a, x_j = b] =& (\frac{1}{\chi(\chi - 1)}) \cdot (\frac{\chi(\chi - 1)}{k(k - 1)}) + 2\cdot (\frac{1}{\chi^2}) \cdot (\frac{\ell \chi}{k(k-1)}) + 0 \cdot(\frac{\ell(\ell-1)}{k(k - 1)}) \\
= & \frac{\chi^2 + 2l\chi}{\chi^2 k(k-1)}
= \frac{\chi + 2\ell}{\chi k(k-1)}
= \frac{k + \sqrt{k}}{\chi k(k-1)}
= \frac{1}{\chi^2}.
\end{align*}
\end{itemize}
Given such a balanced pairwise independent distribution supported on rainbow strings, a standard procedure following the work of Austrin and Mossel~\cite{AM09} shows that it is UG-hard to outperform the random $2$-coloring. We omit the details.

\section{Approximate Min-Coloring}
In this section, we provide approximation algorithms for the Min-Coloring 
problem under strong colorability, rainbow colorability, and low discrepancy 
assumptions. Our approach is standard, namely, we first apply degree 
reduction algorithms followed by the usual paradigm pioneered by Karger, Motwani 
and Sudan \cite{karger1998approximate}, for coloring bounded degree (hyper)graphs. Consequently, our 
exposition will be brief and non-linear. 
\medskip

In the interest of clarity, all results henceforth assume the special cases of 
Discrepancy $1$, or $(k-1)$-rainbow colorability, or $(k+1)$-strong colorability. 
All arguments generalize easily to the cases parameterized by $l$.

\subsection{Approximate Min-Coloring in Bounded Degree Hypergraphs}
\subsubsection{The Algorithm}

INPUT: $k$-uniform hypergraph $H=([n],E)$ with max-degree $t$ and $m$ edges, having 
Discrepancy $1$, or being $(k-1)$-rainbow colorable, or being $(k+1)$-strong colorable.
\bigskip

\begin{compactenum}
    \item Let $u_1,\dots ,u_n$ be a solution to the SDP relaxation 
    from \secref{relaxations} corresponding to the assumption on the hypergraph.
    
    \medskip
    
    \item Let $H_1$ be a copy of $H$, and let $\gamma,\tau$ be parameters to be 
    determined shortly.
    
    \medskip    
    
    \item Until no vertex remains in the hypergraph, Repeat:
    \begin{compactenum}[\qquad]

    \item Find an independent set $\mathcal{I}$ in the residual hypergraph, of size at 
    least $\gamma n$ by repeating the below process until $\cardin{\mathcal{I}}
    \geq \gamma n$:
    \begin{compactenum}[\qquad(A)]
        \item Pick a random vector $r$ from the standard multivariate normal 
        distribution. 
        
        \item For all $i$, if $\iprod{u_i}{r}\geq \tau$, add vertex $i$ to $\mathcal{I}$.
        
        \item For every edge $e$ completely contained in $\mathcal{I}$, delete any 
        single vertex in $e$, from $\mathcal{I}$.
    \end{compactenum}
    \medskip
    
    \item Color $\mathcal{I}$ with a new color and remove $\mathcal{I}$ and all edges 
    involving vertices in $\mathcal{I}$, from $H_1$.
    \end{compactenum}
\end{compactenum}

\subsubsection{Analysis}
\seclab{KMS:analysis}
First note that by \lemref{gauss:affine:equiv}, for any fixed vector $a$, $\iprod{a}{r}$ 
has the distribution $\normal{0}{1}$. 
Note that all SDP formulations in \secref{relaxations} satisfy,
\begin{equation}
\eqlab{sum-norm}
\norm{2}{\sum\limits_{j\in [k]} u_{i_j}}\leq 1
\end{equation}
Now consider any edge $e=(i_1,\dots , i_k)$. 
In any fixed iteration of the inner loop, the probability of $e$ being contained in 
$\mathcal{I}$ at Step (B), is at most the probability of 
\[
\iprod{r}{\sum\limits_{j\in [k]} u_{i_j}} \geq k\tau
\]
However, by \lemref{gauss:affine:equiv} and \Eqref{sum-norm}, the inner product above is 
dominated by the distribution $\normal{0}{1}$. Thus in any fixed iteration of the inner 
loop, let $H_1$ have $n_1$ vertices and $m_1$ edges, we have
\begin{align*}
    \Ex{\mathcal{I}} &\geq n_1 \Phi(\tau) - m_1\Phi(k\tau)\\
    &\geq n_1 e^{-\tau^2/2} - \frac{n_1 t}{k} e^{-k^2\tau^2/2}\\
    &= \Omega\pth{\gamma n_1}
    && \text {setting, }\tau^2 = \frac{2\log t}{k^2-1},\text{~~and~~} 
    \gamma = t^{-1/(k^2-1)}
\end{align*}
Now by applying Markov's inequality to the vertices not in $\mathcal{I}$, we have, 
$\Prob{\cardin{\mathcal{I}}< \gamma n_1}\leq 1-\Omega(\gamma)$. 
Thus for a fixed iteration of the outer loop, with high probability, the inner loop doesn't 
repeat more than $O(\log n_1/\gamma)$ times. 

Lastly, the outermost loop repeats $O(\log n/\gamma)$ times, using one color at each 
iteration. Thus with high probability, in polynomial time, the algorithm colors $H$ with 
\[
    t^{\frac{1}{k^2-1}}\log{n}~ \text{ colors.}
\]
\paragraph{Important Note.}
We can be more careful in the above analysis for the rainbow and strong colorability 
cases. Specifically, the crux boils down to finding the gaussian measure of the cone 
given by $\brc{x\sep{U^Tx\geq \tau}}$ instead of zero. Indeed, on closely following the 
proof of \thmref{cone:measure} we obtain for strong and rainbow coloring respectively (assuming max-degree $n^{k}$), 
\[
    n^{\frac{1}{k}\pth{1-\frac{3\beta}{2}}}\log n \qquad\text{and}\qquad
    n^{\frac{1}{k}\pth{1-\frac{5\beta}{4}}}\log n, \qquad\text{where~~}
    \beta = \frac{\log k}{\log n}
\]
While these improvements are negligible for small $k$, they are significant when 
$k$ is reasonably large with respect to $n$.
\bigskip

\subsection{Main Min-Coloring Result}
Combining results from \secref{KMS:analysis} with our degree reduction approximation schemes from the forthcoming sections, we obtain the following.
\begin{theorem}
\thmlab{min:coloring}
    Consider any $k$-uniform hypergraph $H=(V,E)$ with $n$ vertices. In $n^{c+O(1)}$ time, 
    one can color $H$ with 
    \begin{align*}
        &\min\brc{\pth{\frac{n}{c\log n}}^{\alpha},~
         n^{\frac{1}{k}\pth{1-\frac{3\beta}{2}}}, 
         \pth{\frac{m}{n}}^{\frac{1}{k^2}}}\log n
        \text{ colors,} &&\text{if $H$ is $(k+1)$-strongly colorable.}
        \\\\        
        &\min\brc{\pth{\frac{n}{c}}^{\alpha},~
         n^{\frac{1}{k}\pth{1-\frac{5\beta}{4}}}, 
         \pth{\frac{m}{n}}^{\frac{1}{k^2}}}\log n
        \text{ colors,} &&\text{if $H$ is $(k-1)$-rainbow colorable.}
        \\\\      
        &\min\brc{
        \pth{\frac{n}{c}}^{\alpha},~\pth{\frac{m}{n}}^{\frac{1}{k^2}}
        }\log n
        \text{~colors,} &&\text{if $H$ has discrepancy $1$.}\\
        &\text{where,\quad } \alpha = \frac{1}{k+2-o(1)}, \quad \beta = \frac{\log k}{\log n}
    \end{align*}
\end{theorem}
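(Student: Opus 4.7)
The theorem asserts a minimum of three different color counts achieved in time $n^{c+O(1)}$, so the algorithm would simply run three routines and output the best coloring. All three routines feed into the bounded-degree coloring algorithm of \secref{KMS:analysis}; they differ in how they guarantee (or arrange for) a small effective maximum degree $t$.

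\emph{Second term, $n^{(1-3\beta/2)/k}\log n$ and $n^{(1-5\beta/4)/k}\log n$.} This is the simplest case: apply the algorithm of \secref{KMS:analysis} directly to $H$ with no preprocessing, using the trivial bound $t \le \binom{n-1}{k-1} \le n^{k-1}$. The \emph{Important Note} at the end of \secref{KMS:analysis}, which uses the refined Gaussian-measure analysis of \thmref{cone:measure} with threshold $\tau>0$, produces exactly these exponents for the strong and rainbow cases. The discrepancy case is omitted from this term because, as explained at the start of \secref{hardness}, the SDP vectors need not have non-trivial para-volume, and \thmref{cone:measure} cannot be applied.

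\emph{Third term, $(m/n)^{1/k^2}\log n$.} Here I would invoke a degree reduction. Let $t := C\cdot m/n$ for a large constant $C$; by Markov's inequality only at most $n/C$ vertices have degree above $t$. One option is to put these high-degree vertices aside, recurse (or greedily color them with a separate palette drawn from the second-term bound applied to a sub-instance), and run the algorithm of \secref{KMS:analysis} on the remaining hypergraph, whose max-degree is now at most $t$ and which still admits a restricted solution of the appropriate SDP relaxation from \secref{relaxations}. This yields $t^{1/(k^2-1)}\log n = (m/n)^{1/k^2+o(1)}\log n$ colors. Since the stated bound allows a factor of $\log n$ slack, the combined palette remains within the claimed budget.

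\emph{First term, $(n/(c\log n))^\alpha \log n$ and $(n/c)^\alpha \log n$ with $\alpha = 1/(k+2-o(1))$.} This is the Karger--Motwani--Sudan--style bound that trades runtime for fewer colors. The extra $n^{c+O(1)}$ time budget lets us perform a deeper degree reduction: roughly, enumerate or semidefinitely refine local structure at scale $c$ so that the remaining hypergraph has $O(n/c)$ vertices (or $O(n/(c\log n))$ for the strong case, using the slightly stronger geometry), and then apply the bounded-degree algorithm. The exponent $1/(k+2)$ (vs.\ $1/(k^2-1)$) is precisely what one obtains when the vectors in each hyperedge are well-separated in the sense required by \thmref{cone:measure}, which the promise of strong/rainbow colorability or discrepancy enforces.

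\emph{Main obstacle.} The delicate point is the degree reduction step while preserving the structural promise. For discrepancy-$1$, the SDP solution restricts cleanly to any induced subhypergraph. For rainbow/strong colorability, naively deleting a vertex can destroy the property on an incident hyperedge (a color may vanish from that edge), so one must either (i) keep such vertices and absorb them into a separate palette used only on the few affected edges, or (ii) re-solve the SDP relaxations of \secref{relaxations} on the residual instance and appeal to the fact that the vector solutions inherit the required inner-product structure from the original. Either route is technically straightforward but must be calibrated so that the extra palette used on the deleted or bad vertices is at most a $\log n$ factor, which is exactly the slack built into the statement. The precise analysis with constants $\alpha$ and $\beta$ is then a bookkeeping exercise combining \thmref{cone:measure} with the degree reduction guarantee, which the authors defer to the forthcoming sections.
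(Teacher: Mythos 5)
The key ingredient that your proposal waves away as ``technically straightforward'' is actually the heart of the proof, and your substitutes for it do not obviously work. The paper gets the first and third terms by running specific, Wigderson-style \emph{combinatorial} degree-reduction algorithms (Sections~4.3--4.5, Theorems~\ref{theorem:scdr}, \ref{theorem:lddr}, \ref{theorem:rcdr}) that use the structural promise, not the SDP, to partially color the graph and drive the residual max-degree down to $t$. The engine behind these is a set of observations you never surface: for $(k+1)$-strong colorability, the neighborhood of any color class induces a $k$-uniform $k$-strongly-colorable (hence poly-time $2$-colorable) sub-hypergraph (\obsref{s:wigderson:step}), so one can brute-force over $(k+1)$-colorings of roughly $c\log n/\log k$ high-degree ``seed'' vertices and $2$-color all their neighborhoods, spending $3(k+1)$ colors per phase; for discrepancy~$1$, a $(k-1)$-set $S$ whose neighborhood contains an edge must have discrepancy $0$, so the collection of such marked $(k-1)$-sets is itself a discrepancy-$0$ instance that can be $2$-colored outright (\obsref{d:wigderson:step}, \obsref{d:marked}). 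These produce the quantitative guarantees $\frac{3n(k+1)\log k}{t^{1/(k-1)} c\log n}$ (strong) and $\approx n/(ct)$ (discrepancy, rainbow) colors with residual degree $t$ (resp.\ $\binom{n-1}{k-2}t$), and balancing against the $t^{1/(k^2-1)}\log n$ bound of \secref{KMS:analysis} yields $\alpha=1/(k+2-o(1))$. Your proposed substitutes --- ``enumerate or semidefinitely refine local structure at scale $c$'' --- are not a mechanism, and you correctly flag the difficulty (deleting vertices breaks rainbow/strong colorability on incident edges) but do not resolve it; the paper resolves it precisely by \emph{not} deleting vertices, instead fixing a candidate coloring of a small seed set and using the promise to $2$-color neighborhoods deterministically.

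Your Markov-inequality route for the third term is also shaky as written. Setting aside the $\lesssim n/C$ vertices of degree $>Cm/n$ and recursing does not preserve the edge/vertex ratio: the induced sub-hypergraph on the high-degree vertices has fewer vertices but can retain up to $m$ edges, so $m'/n'$ can grow and the recursion does not clearly bottom out within a $\log n$-color overhead. The paper instead derives $(m/n)^{1/k^2}\log n$ from the same combinatorial degree-reduction theorems with an appropriate choice of $t$ (plugging the residual degree into the bounded-degree algorithm); the $m/n$ enters through the elementary observation in the degree-reduction analysis that a vertex of degree $\geq t$ has at least $(k-1)t^{1/(k-1)}$ distinct neighbors, bounding the number of phases in terms of $n$ and $t$ alone. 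Your second-term analysis (direct application of \secref{KMS:analysis} with the refined Gaussian-measure bound, and the correct observation that discrepancy is excluded because the cone may have vanishing para-volume) is right and matches the paper. But without the combinatorial degree-reduction schemes, the first and third terms are not established.
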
 

\paragraph{Remark.}
In all three promise cases the general polytime min-coloring guarantee parameterized 
by $\ell$, is roughly $n^{\ell^2/k}$. Thus, the threshold value of $\ell$, for which 
standard min-coloring techniques improve with $k$, is $o(\sqrt{k})$.

\paragraph{Degree Reduction Schemes under Promise.}
Wigderson \cite{wigderson1983improving} and Alon et~al. \cite{AKMH96} studied degree reduction 
in the cases of $3$-colorable graphs and $2$-colorable hypergraphs, 
respectively.
Assuming our proposed structures, we are able to combine some simple 
combinatorial ideas with counterparts of the 
observations made by Wigderson and Alon et~al., to obtain degree 
reduction approximation schemes. Such approximation schemes 
are likely not possible assuming only $2$-colorability.
\subsection{Degree Redution under strong colorability}
\seclab{scdr}

Let $H=\pth{V,E\subseteq {V \choose k}}$ be a $k$-uniform 
$(k+1)$-strongly colorable hypergraph with $n$ vertices and $m$ 
edges. In this section, 
we give an algorithm that in $n^{c+O(1)}$ time, partially 
colors $H$ with $3n(k+1)\log k/(t^{1/(k-1)}c\log n)$ colors, such 
that no edge in the colored subgraph is monochromatic, 
and furthermore, the subgraph induced by the the uncolored 
vertices has max-degree $t$.

\bigskip

\noindent
The following observations motivate the structure of our algorithm. 

\begin{observation}
\obslab{s:wigderson:step}
    For any $(k+1)$-strong coloring $f: V \mapsto \pbrcx{k+1}$, of a $k$-uniform hypergraph $H$, and
    any subset of vertices $\overline{V}$ satisfying, $~\forall u,v\in \overline{V}, f(u)=f(v)=j$ 
    (all of the same color), the subgraph $F$ of $H$, induced by $N(\overline{V})$, is $k$-uniform and 
    $k$-strongly colorable. This is because $f$ is a strong coloring of $F$, and moreover, 
    $\forall\, v\in N(\overline{V}),~f(v)\neq j$, since $v$ has a neighbor in $\overline{V}$
    with color $j$. Thus we can $2$-color such a subgraph $F$ in polynomial time. 
\end{observation}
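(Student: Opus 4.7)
The plan is to verify the three assertions in the statement in order: (a) $F$ is $k$-uniform, (b) $F$ is $k$-strongly colorable, and (c) $F$ admits a proper $2$-coloring in polynomial time. The argument is purely structural and short; the only content is a one-line color-exclusion observation followed by pigeonhole.

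Assertion (a) is immediate from the definition of an induced sub-hypergraph: every hyperedge of $F$ is a hyperedge of $H$ lying entirely in $N(\overline{V})$, and since $H$ is $k$-uniform so is $F$. For assertion (b), I would first show that the color $j$ is absent from $V(F)=N(\overline{V})$ under $f$. Indeed, any $v\in N(\overline{V})$ shares a hyperedge $e$ of $H$ with some $u\in \overline{V}$; because $f$ is a $(k+1)$-strong coloring, all $k$ vertices of $e$ receive distinct colors, so $f(v)\neq f(u)=j$. Hence $f$ restricted to $N(\overline{V})$ takes values in the $k$-element set $[k+1]\setminus \brc{j}$. Now any hyperedge $e'$ of $F$ is still a hyperedge of $H$, so $f(e')$ has $k$ distinct values, and since these values lie in a set of size exactly $k$ they must exhaust it. Therefore $f|_{V(F)}$ is a $k$-strong coloring of $F$; equivalently, $F$ is $k$-partite with each edge containing each color exactly once.

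For assertion (c), I would simply invoke one of the known polynomial-time proper $2$-coloring algorithms for $k$-partite $k$-uniform hypergraphs that were recalled in the introduction, e.g.\ random hyperplane rounding of the natural SDP, the homogeneous linear system method of Alon~\cite{alon-pc}, or McDiarmid's random recoloring technique~\cite{mcdiarmid}. Any of these produces a proper $2$-coloring of $F$ in polynomial time.

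No step presents a real obstacle. The only place that needs any thought is the color-exclusion step in (b), and even there it is essentially a single application of the strong-coloring property to an edge joining $v$ and $u$; the rest is counting.
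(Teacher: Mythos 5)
Your proposal is correct and takes essentially the same approach as the paper's own one-line justification: exclude color $j$ from $N(\overline V)$ via the strong-coloring property on an incident edge, conclude that $f$ restricted to $N(\overline V)$ is a $k$-strong (equivalently, $k$-partite) coloring of the $k$-uniform hypergraph $F$, and invoke a known polynomial-time $2$-coloring procedure for $k$-partite $k$-uniform hypergraphs. The only difference is that you spell out the pigeonhole step showing each edge exhausts $[k+1]\setminus\{j\}$, which the paper leaves implicit.
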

\noindent

\begin{observation}
\obslab{s:brute:dom:set}
By \obsref{s:wigderson:step}, in order to $3(k+1)$-color the subgraph induced by 
$\overline{V}\cup N(\overline{V})$ for an arbitrary subset $\overline{V}$ of vertices, 
we need only search through all possible $(k+1)$-colorings of $\overline{V}$, and then 
attempt to $2$-color the neighborhood of each color class with two new colors. This 
process will always terminate with some proper coloring of $\overline{V}\cup 
N(\overline{V})$.
\end{observation}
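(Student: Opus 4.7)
The plan is to verify that the exhaustive procedure succeeds on the particular guess corresponding to the promised strong coloring, and that every other step reduces to a routine application of \obsref{s:wigderson:step}.

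First, I would fix a witness: by hypothesis, $H$ is $(k+1)$-strongly colorable, so let $f^* : V \to [k+1]$ be such a strong coloring. When the algorithm enumerates all $(k+1)^{|\overline{V}|}$ colorings of $\overline{V}$, at least one of these guesses coincides with $f^*|_{\overline{V}}$; call this the good guess. The claim is that for this guess the subsequent $2$-coloring attempts all succeed, so the enumeration is guaranteed to terminate with a proper coloring.

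Next, for the good guess I would partition $\overline{V}$ into color classes $\overline{V}_j := \{v \in \overline{V} : f^*(v) = j\}$ for $j \in [k+1]$ and, for each $j$, invoke \obsref{s:wigderson:step} with $\overline{V} \leftarrow \overline{V}_j$. That observation supplies that the subgraph $F_j$ induced by $N(\overline{V}_j)$ is $k$-uniform and $k$-strongly colorable; equivalently $F_j$ is $k$-partite, so it admits a polynomial-time proper $2$-coloring by the algorithms for $k$-partite hypergraphs mentioned in the introduction (SDP hyperplane rounding, iterative linear systems, or random recoloring). I would allocate to each $F_j$ a disjoint pair of fresh colors, so the $(k+1)$ neighborhoods consume $2(k+1)$ new colors in addition to the $k+1$ colors used on $\overline{V}$, for a grand total of $3(k+1)$.

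The only remaining check is that the resulting coloring of $\overline{V} \cup N(\overline{V})$ is globally proper. Any hyperedge of this induced subgraph falls into one of three cases: it lies entirely inside some $N(\overline{V}_j)$, in which case non-monochromaticity follows from the $2$-coloring of $F_j$; it contains vertices of $\overline{V}$ from at least two distinct color classes, in which case the colors assigned on $\overline{V}$ already witness non-monochromaticity; or it contains a vertex of $\overline{V}_j$ and a vertex of $N(\overline{V}_j)$, in which case the disjointness of the palettes used on $\overline{V}$ versus the $F_j$'s guarantees non-monochromaticity.

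I do not anticipate a genuine obstacle: the potential worry is that a bad guess on $\overline{V}$ causes the $2$-coloring subroutine to fail on some $F_j$, but this is handled trivially by rejecting that guess and moving to the next one in the enumeration. Since the enumeration is finite and contains the good guess $f^*|_{\overline{V}}$, the procedure terminates with a proper $3(k+1)$-coloring, which is the content of the observation.
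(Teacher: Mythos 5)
Your overall strategy matches the paper's: fix the witness strong coloring $f^*$, note that the enumeration must hit $f^*|_{\overline{V}}$, and on that guess invoke \obsref{s:wigderson:step} per color class $\overline{V}_j$ to $2$-color each $N(\overline{V}_j)$ with a fresh pair of colors, giving $3(k+1)$ in total. That is the argument the observation is gesturing at, and your color budget checks out.

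However, your global-propriety check is not quite airtight, for two related reasons. First, the sets $N(\overline{V}_1),\dots,N(\overline{V}_{k+1})$ need not be disjoint, so the instruction to ``allocate to each $F_j$ a disjoint pair of fresh colors'' is ill-defined at a vertex $u\in N(\overline{V}_i)\cap N(\overline{V}_j)$: $u$ cannot simultaneously receive a color from two disjoint palettes. You need to process the color classes in some order and only $2$-color the as-yet-uncolored portion of each $N(\overline{V}_j)$; this is fine because \obsref{s:wigderson:step} in fact shows that the subgraph induced by \emph{any} subset of $N(\overline{V}_j)$ remains $k$-uniform and $k$-strongly colorable (one restricts $f^*$), so the $2$-coloring subroutine still succeeds on the residual. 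Second, your three-case analysis omits a hyperedge lying entirely inside $N(\overline{V})$ but not inside any single $N(\overline{V}_j)$; such an edge is handled instantly by the disjoint palettes, but the case is genuinely outside the three you list. (You should also note, for case (2), that an edge fully inside $\overline{V}$ hits at least two classes precisely because $f^*$ is a strong coloring.) With these small repairs, the proof is complete and essentially identical to what the paper intends.
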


\bigskip

\noindent
We are now prepared to state the algorithm.

\subsubsection{The Algorithm \emph{SCDegreeReduce}}
\alglab{SCDegreeReduce}

\begin{compactenum}
    \item Let $H_1$ be a copy of $H$.
    
    \medskip
    
    \item \textbf{While} $H_1$ contains a vertex of degree greater than $t$:
    \medskip
    \begin{compactenum}[\qquad(A)]
        \item Let $H_2$ be a copy of $H_1$.
        
        \medskip
    
        \item Sequentially pick arbitrary vertices $\overline{V}=\brc{v_1, v_2 \dots v_s}$ 
        of degree at least $t$ from $H_2$, wherein we remove from $H_2$ 
        the vertices $\brc{v_i}\cup N(v_i)$ and all involved edges, after 
        picking $v_i$ and before picking $v_{i+1}$. We only stop when we 
        have either picked $c\log n/\log k$ vertices, or $H_2$ has max-degree $t$.
        
        \bigskip
        
        \item For every possible assignment of $k+1$ new colors 
        $\brc{c_1, \dots c_{k+1}}$ to the vertices in $\overline{V}$:
        
        \medskip
        
        \begin{compactenum}[\qquad(C1)]
            \item Let $C_i=\brc{u \sep{v\in \overline{V},~color(v)=c_i,~u\in N_{H_1}(v)}}$. Then 
            for each $i\in \pbrcx{k+1}$, $2$-color the subgraph of $H_1$ induced by $N_{H_1}(C_i)$ 
            using two new colors and the proper $2$-coloring algorithm for $r$-uniform, $r$-strongly colorable 
            graphs. 

            \medskip
            
            \item \textbf{If} no edge is monochromatic: 
            
            \begin{compactenum}[\qquad]
                \item Stick with this $3(k+1)$-coloring of $\overline{V}\cup N_{H_1}(\overline{V})$, 
                remove $\overline{V}\cup N_{H_1}(\overline{V})$ and all edges containing any of these 
                vertices, from $H_1$, and stop iterating through assignments of $\overline{V}$.
            \end{compactenum}
            
            \medskip            
            
            \item \textbf{If} some edge is monochromatic: 
            
            \begin{compactenum}[\qquad]
                \item Discard the coloring and continue iterating through assignments of 
                $\overline{V}$.
            \end{compactenum}
        \end{compactenum}
    \end{compactenum}
    
    \item[] \textbf{End~While}
    
    \bigskip    
    
    \item \textbf{Output} the partial coloring of $H$ and the residual graph $H_1$ of max-degree $t$.
\end{compactenum}

\subsubsection{The Result}

\begin{theorem}
\thmlab{scdr}
    Let $H=\pth{V,E\subseteq {V \choose k}}$ be a $k$-uniform $(k+1)$-strongly colorable hypergraph with 
    $n$ vertices. \algref{SCDegreeReduce} partially colors $H$ in $n^{c+O(1)}$ time, with at most 
    $\frac{3n(k+1)\log k}{t^{1/(k-1)}c\log n}$ colors, such that:
    \medskip
    
    \begin{compactenum}
        \item The subgraph of $H$ induced by the colored vertices has no monochromatic vertices.
        
        \medskip
        
        \item The subgraph of $H$ induced by the uncolored vertices has maximum degree $t$.
    \end{compactenum}
\end{theorem}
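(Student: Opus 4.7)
The plan is to verify four separate claims: (a) the outer loop terminates with $H_1$ of max-degree at most $t$; (b) the colored subgraph contains no monochromatic hyperedge; (c) the number of colors is at most $\frac{3n(k+1)\log k}{t^{1/(k-1)}c\log n}$; and (d) the runtime is $n^{c+O(1)}$. Claims (a) and (b) amount to routine structural arguments, while the heart of the proof is (c), which requires controlling how many vertices are removed per outer iteration.

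First, for (b), I would fix an outer iteration and argue that one of the $(k+1)^s$ colorings enumerated in step (C) must succeed. Since the residual $H_1$ is a subhypergraph of $H$, it inherits the promised $(k+1)$-strong coloring $f$. The restriction $f|_{\overline{V}}$ is among the tried assignments, and for it every color class $V_i\subseteq \overline{V}$ is $f$-monochromatic. By \obsref{s:wigderson:step}, the subgraph induced by $N_{H_1}(V_i)$ is then $k$-uniform and $k$-partite, so it admits a polytime proper $2$-coloring (as recalled in the introduction for $k$-partite $k$-uniform hypergraphs). Using two fresh colors per $V_i$ together with the $k+1$ colors spent on $\overline{V}$ yields a $3(k+1)$-coloring with no monochromatic edge inside $\overline{V} \cup N_{H_1}(\overline{V})$. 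Claim (a) is then immediate from the outer while loop's exit condition, noting that each iteration strictly removes at least $v_1$ and its neighborhood whenever the loop condition is active, so the loop indeed terminates.

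The crux is (c). Here I would establish that $|\overline{V}\cup N_{H_1}(\overline{V})| \geq s\cdot t^{1/(k-1)}$ in every outer iteration where $s$ vertices are picked, and in particular in every non-final iteration where $s = c\log n/\log k$. The key combinatorial input is a pigeonhole argument: if $v_i$ has degree at least $t$ in the residual $H_2^{(i-1)}$, its $t$ incident hyperedges determine $t$ distinct $(k-1)$-subsets of $N_{H_2^{(i-1)}}(v_i)$, forcing $\binom{|N_{H_2^{(i-1)}}(v_i)|}{k-1} \geq t$ and hence $|N_{H_2^{(i-1)}}(v_i)| \geq t^{1/(k-1)}$. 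Crucially, since $\{v_j\}\cup N_{H_2^{(j-1)}}(v_j)$ is removed from $H_2$ before $v_{j+1}$ is picked, the sets $N_{H_2^{(i-1)}}(v_i)$ are pairwise disjoint and disjoint from $\overline{V}$ itself. Summing yields $|N_{H_2}(\overline{V})| \geq s\cdot t^{1/(k-1)}$, and since $H_2\subseteq H_1$ this transfers to a lower bound on the vertices that $H_1$ loses in that outer iteration. Consequently there are at most $O\!\pth{\frac{n\log k}{c\log n\cdot t^{1/(k-1)}}}$ outer iterations, and multiplying by the $3(k+1)$ colors used per iteration gives the stated bound.

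Finally for (d), each outer iteration enumerates $(k+1)^s \leq n^{c\log(k+1)/\log k} \leq n^{c+O(1)}$ colorings of $\overline{V}$, and for each runs the polytime $k$-partite $2$-coloring subroutine $k+1$ times. With $O(n)$ outer iterations, the total runtime is $n^{c+O(1)}$. The main obstacle is the per-iteration neighbor bound in (c); once the $(k-1)$-subset pigeonhole and the pairwise disjointness of newly-uncovered neighborhoods across $i$ are in hand, the remaining accounting is essentially routine.
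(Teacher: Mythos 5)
Your proof is correct and follows essentially the same route as the paper: Observation 4.2 for the coloring within each outer iteration, fresh colors per iteration for properness, a pigeonhole bound on the neighborhood size of each degree-$\geq t$ vertex for the color count, and a direct count of the $(k+1)^{c\log n/\log k}$ enumerations for the runtime. One thing worth noting: your accounting for the color count is actually tighter and more explicit than the paper's. The paper's proof text just asserts that step (C) runs at most $n/t^{1/(k-1)}$ times, which on its own gives only $3(k+1)n/t^{1/(k-1)}$ colors and does not by itself yield the $\log k/(c\log n)$ savings claimed in the theorem statement; you correctly recover that factor by observing that each outer iteration greedily extracts $s = c\log n/\log k$ high-degree vertices with pairwise disjoint $H_2$-neighborhoods, so each iteration removes at least $s\cdot t^{1/(k-1)}$ vertices from $H_1$. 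Your extra care in verifying that an early exit from step (B) (because $H_2$ drops to max-degree $\leq t$) forces $H_1$ to also have max-degree $\leq t$ after that iteration is likewise a detail the paper leaves implicit but that is needed for the loop to terminate as claimed.
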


\begin{proof}
\obsref{s:wigderson:step} combined with the fact that step (C1) uses two new colors for each $C_i$, 
establishes that step (C) of \algref{SCDegreeReduce} will always terminate with 
some proper coloring of $\overline{V}\cup N_{H_1}(\overline{V})$. Furthermore, any edge intersecting 
$\overline{V_1}\cup N_{H_1}(\overline{V_1})$ and $\overline{V_2}\cup N_{H_1}(\overline{V_2})$ for $V_1$ 
and $V_2$ taken from different iterations of \algref{SCDegreeReduce}, cannot be monochromatic since we 
use new colors in each iteration. Thus the partial coloring is proper.
\medskip

\noindent
For the claim on number of colors, observe that a vertex of degree at least $t$, must have at least 
$(k-1)t^{1/(k-1)}$ distinct neighbors. Thus step (C) can be run at most $n/t^{1/(k-1)}$ times, using 
$3(k+1)$ new colors each time.

\noindent
Lastly for the runtime, note that for each run of step (C), there are at most $(k+1)^{c\log n/\log k} = 
n^{c+O(1)}$ assignments to try, and the rest of the work takes $n^{O(1)}$ time.
\end{proof}

\bigskip

\paragraph{Remark.}
We contrast \thmref{scdr} with the results of Alon et~al. \cite{AKMH96}, who give a polynomial time 
algorithm for degree reduction in $2$-colorable $k$-uniform hypergraphs using $O(n/t^{1/(k-1)})$ colors. 
The strong coloring property, gives us additional power, namely,  we obtain an approximation scheme, 
and furthermore, for constant $c$, \thmref{scdr} uses fewer colors than the result of Alon et~al.,
by a factor of about $k\log n/\log k$.

\noindent
The arguments in this section and the next are readily generalizable - One can modify the degree 
reduction algorithm, such that the bound on colors used, would be a function of the strong colorability parameter of the hypergraph.

\subsection{Degree Reduction under Low Discrepancy}
\seclab{lddr}

For odd $k$, let $H=\pth{V,E}$ be a $k$-uniform hypergraph with 
$n$ vertices, that admits a discrepancy $1$ coloring. In this section, 
we give an algorithm that in $n^{c+O(1)}$ time, partially 
colors $H$ with $3n(k+1)/(t^{\frac{1}{k-1}}c\log n)$ colors, such 
that no edge in the induced colored subgraph is monochromatic, 
and furthermore, the subgraph induced by the the uncolored 
vertices has max-degree $t$.
\bigskip

\noindent
First, we present a warmup algorithm that exposes the key ideas.
The following observations motivate the structure of our algorithm. 

\begin{observation}
\obslab{d:wigderson:step}
    For any discrepancy $1$ coloring $f: V \mapsto \brc{-1,1}$, of a $k$-uniform 
    hypergraph $H$, and any size $k-1$ subset of vertices $S$, we have:
    \medskip
    \begin{compactenum}[(A)]
        \item If $N(S)$ is an independent set, we can properly $2$-color the subgraph induced by $S \cup N(S)$.
        
        \bigskip
        
        \item If $N(S)$ contains an edge, then the set $S$ has discrepancy $0$ in 
        the coloring $f$. This is because, an edge cannot be monochromatic in the 
        coloring $f$, and by assumption, $S$ must be have a neighbor 
        with color $-1$ and a neighbor with color $+1$. 
    \end{compactenum}
\end{observation}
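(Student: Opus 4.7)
My plan is to establish (A) by exhibiting an explicit $2$-coloring of $S \cup N(S)$ that is unrelated to $f$, and to establish (B) by the short parity argument already sketched in the statement. Crucially, part (A) will not actually use the discrepancy-$1$ promise; it will use only $|S|=k-1$ and the independence of $N(S)$ to rule out monochromatic edges in the induced subgraph. Part (B) will use that $k$ is odd, together with the definition $N(S) = \brc{v \notin S \sep S \cup \brc{v} \in E}$, so that each vertex of $N(S)$ completes $S$ to a hyperedge whose discrepancy we can directly constrain.

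For (A), I would consider the $2$-coloring $g$ that assigns color $+1$ to every vertex of $S$ and color $-1$ to every vertex of $N(S)$. For any hyperedge $e$ lying inside $S \cup N(S)$, we have $\cardin{e \cap S} \leq \cardin{S} = k-1 < k$, so $e$ must contain some vertex of $N(S)$ and therefore some vertex of color $-1$; symmetrically, $\cardin{e \cap N(S)} < k$ since $N(S)$ is independent (contains no hyperedge), so $e$ must contain some vertex of $S$ and therefore some vertex of color $+1$. Hence no edge of the induced subgraph on $S \cup N(S)$ is monochromatic under $g$.

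For (B), I would fix an edge $e \subseteq N(S)$. Since $k$ is odd and $f$ has discrepancy at most $1$, $\sum_{v \in e} f(v) \in \brc{-1, +1}$, so $e$ contains at least one vertex $v_+$ with $f(v_+)=+1$ and at least one vertex $v_-$ with $f(v_-)=-1$. By the definition of $N(S)$, both $S \cup \brc{v_+}$ and $S \cup \brc{v_-}$ lie in $E$. Setting $s := \sum_{w \in S} f(w)$ and applying the discrepancy-$1$ bound to these two edges gives $\cardin{s+1} \leq 1$ and $\cardin{s-1} \leq 1$, which together force $s = 0$, i.e., $S$ has discrepancy $0$ under $f$.

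There is essentially no obstacle here: both parts are short combinatorial consequences of the definitions once the right $2$-coloring (for (A)) or the right pair of test edges (for (B)) is identified. The only thing worth flagging is that the narrow definition of $N(S)$ as the set of completions of $S$ to an edge is what makes the argument in (B) go through directly; under a broader neighborhood definition the edges $S \cup \brc{v_\pm}$ would not automatically belong to $E$, and the parity argument would need to be adapted.
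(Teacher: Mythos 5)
Your proof is correct and matches the paper's own reasoning: for (A) you supply precisely the two-color assignment ($S$ one color, $N(S)$ the other) that the paper's warmup algorithm uses in step~(B), and your justification that $|S|=k-1<k$ together with independence of $N(S)$ rules out monochromatic edges is the intended one; for (B) your parity argument (an edge in $N(S)$ has both colors, so both $S\cup\{v_+\}$ and $S\cup\{v_-\}$ are hyperedges of discrepancy at most $1$, forcing $\sum_{w\in S}f(w)=0$) is the same as the paper's terse justification, just spelled out. No gaps.
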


Though \obsref{d:wigderson:step} and \obsref{s:wigderson:step} are functionally 
similar, the two-pronged nature of \obsref{d:wigderson:step} almost wholly accounts 
for the gap in power between the respective degree reduction algorithms. Intuitively, 
the primary weakness comes from the fact that $N(S)$ being an independent set tells 
us nothing about the discrepancy of $S$.
\bigskip

\noindent
Nevertheless, we may still exploit some aspects of this observation.
\begin{observation}
\obslab{d:marked}
    Consider any discrepancy $1$ coloring $f: V \mapsto \brc{-1,1}$, of a 
    $k$-uniform hypergraph $H$, and any set of subsets $S_1,\dots S_m$ each of size 
    $(k-1)$ and discrepancy $0$ in the coloring $f$. The $(k-1)$-uniform hypergraph 
    $F$ with vertex set $\bigcup_i S_i$ and edge set $\brc{S_1,\dots S_m}$, has a 
    discrepancy $0$ coloring ($f$). Thus we can properly $2$-color $F$ in polynomial time. 
\end{observation}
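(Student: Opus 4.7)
The plan is to dispatch the two assertions of \obsref{d:marked} in turn. The first---that $F$ admits a discrepancy-$0$ coloring---is essentially tautological from the hypotheses; the content of the proof lies in the algorithmic claim that such a coloring can be found in polynomial time.

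For existence, I would simply exhibit $f|_{V(F)}$, where $V(F) = \bigcup_i S_i$, as the desired coloring. By hypothesis each $S_i$ has discrepancy exactly $0$ under $f$, and since $|S_i| = k - 1$ is even (recall that $k$ is odd throughout \secref{lddr}), this means $|S_i \cap f^{-1}(+1)| = |S_i \cap f^{-1}(-1)| = (k-1)/2$. Hence $f|_{V(F)}$ is a proper $2$-coloring of $F$ with discrepancy $0$, and its two color classes $A := (f|_{V(F)})^{-1}(+1)$, $B := (f|_{V(F)})^{-1}(-1)$ satisfy $|e \cap A| = |e \cap B| = (k-1)/2 > 0$ for every edge $e$ of $F$.

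For the algorithmic claim, I would invoke the fact recalled in the introduction that a hypergraph admitting such a balanced partition (a fair partial $2$-coloring in the terminology of~\cite{mcdiarmid}) can be properly $2$-colored in polynomial time. One concrete route is the linear-algebraic iteration of Alon~\cite{alon-pc}: the linear system $\sum_{v \in S_i} x_v = 0$ for $i \in [m]$ is guaranteed to have a $\pm 1$ solution (namely $f|_{V(F)}$), so the polytope $P = \brc{x \in [-1,1]^{|V(F)|} : \sum_{v \in S_i} x_v = 0,~ \forall i}$ is nonempty, and every vertex of $P$ has at least one coordinate saturated at $\pm 1$. Fixing those coordinates and iterating terminates in polynomially many rounds with a proper $\pm 1$ assignment of discrepancy $0$. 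Alternative approaches include the random-walk recoloring method of~\cite{mcdiarmid} or random hyperplane rounding of a simple SDP; all of these take as input only the hypergraph together with the promise of existence of a balanced coloring, not the coloring itself.

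The only mild subtlety is verifying that the cited algorithms depend solely on the existential promise rather than on knowledge of $f$, which is evident from their descriptions. I do not foresee any substantial technical obstacle: the observation is essentially a bookkeeping step that packages the hypothesis on the $S_i$'s into a form that lets the standard balanced-coloring machinery be applied black-box in \secref{lddr}.
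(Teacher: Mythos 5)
Your proof is correct and fills in exactly the reasoning the paper leaves implicit: the paper presents \obsref{d:marked} without an explicit proof, treating it as an immediate consequence of the observation in the introduction that hypergraphs admitting a balanced (discrepancy-$0$) coloring can be properly $2$-colored in polynomial time via any of the cited methods (\cite{alon-pc}, \cite{mcdiarmid}, or SDP rounding). You correctly note the parity point ($k$ odd $\Rightarrow$ $|S_i| = k-1$ even, so discrepancy $0$ means an exact split) and, more importantly, the genuine subtlety that the algorithm must operate from the existential promise alone rather than from knowledge of $f$.
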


\noindent
We are now ready to state the warmup algorithm, whose correctness is evident from 
\obsref{d:wigderson:step} and \obsref{d:marked}

\subsubsection{Warmup Algorithm}
\alglab{warmup}
    \begin{compactenum}
    \item Let $H_1$ be a copy of $H$, and set $\mathrm{MARKED}\leftarrow \phi$
    
    \medskip
    
    \item \textbf{While} $H_1$ contains a size $(k-1)$ subset $S$ such that 
    $N_{H_1}(S)>t$:
    \medskip
    \begin{compactenum}[\qquad(A)]
    
    		\item \textbf{If} $N_{H_1}(S)$ contains an edge:
        
        \begin{compactenum}[\qquad]
            \item Delete from $H_1$ all edges that completely contain $S$. Also, add 
            $S$ to $\mathrm{MARKED}$.
        \end{compactenum}
        
        \medskip
        
        \item \textbf{If} $N_{H_1}(S)$ is an independent set:
        
        \begin{compactenum}[\qquad]
            \item Use $2$ new colors, color $S$ one color and $N_{H_1}(S)$ the other, 
            remove $S\cup N_{H_1}(S)$ and all edges containing any of these vertices 
            from $H_1$.
        \end{compactenum}
        
    \end{compactenum}
    
    \item[] \textbf{End~While}
    
    \bigskip 
    
    \item Let $F$ be the $(k-1)$-uniform hypergraph whose vertex set is the union 
    of the sets in $\mathrm{MARKED}$, and whose edge set is $\mathrm{MARKED}$. 
    Using $2$ new colors, properly $2$-color the vertices of $F$ using the $2$-coloring 
    algorithm for discrepancy $0$ hypergraphs. Remove these vertices and all 
    involved edges, from $H_1$.
    
    \bigskip    
    
    \item \textbf{Output} the partial coloring of $H$ and the residual graph $H_1$ of 
    max-degree $t$.
\end{compactenum}

\bigskip

\subsubsection{The Algorithm \emph{LDDegreeReduce}}
\alglab{SCDegreeReduce}

\begin{compactenum}
    \item Let $H_1$ and $H_2$ be copies of $H$, $\mathrm{MARKED}\leftarrow \phi$ 
    and $T\leftarrow \phi$.    
    \medskip
    
    \item \textbf{While} $H_2$ contains a size $(k-1)$ subset $S$ of vertices, such 
    that $|N_{H_2}(S)|>t$:
    \medskip
    
    \begin{compactenum}[\qquad(A)]
        \item Delete $N_{H_2}(S)$ and all edges involving these vertices, from $H_2$.
        
        \medskip
        
        \item \textbf{If} $N_{H_1}(S)$ contains an edge:
        
        \begin{compactenum}[\qquad]
            \item Delete from $H_1$ all edges that completely contain $S$. Also, add 
            $S$ to $\mathrm{MARKED}$.
        \end{compactenum}
        
        \medskip
        
        \item \textbf{If} $N_{H_1}(S)$ is an independent set:
        
        \begin{compactenum}[\qquad]
            \item Add $S$ to $T$. 
        \end{compactenum}
        
        \medskip
    
        \item For every size $c$ subset $\overline{V} = \brc{S'_1,\dots ,S'_c}$ of $T$:
        
        \bigskip

        \item[] Fix two new colors $c_1,c_2$.\\ 
        For every possible assignment of $c_1,c_2$ to $\overline{V}$, such that 
        each $S'_i$ has discrepancy $2$, (We define $bias(S'_i) = c_1$ (resp. $c_2$) 
        for coloring bias towards $c_1$ (resp. $c_2$)):
        
        \bigskip
        
            \begin{compactenum}[\qquad(D1)]
                \item For $i=1,2$, let $C_i=\brc{u \sep{S'\in \overline{V},~bias(S')=c_i,~u\in 
                N_{H_1}(S')}}$. Then color $N_{H_1}(C_1)$ with just $c_2$ and $N_{H_1}(C_2)$ with 
                just $c_1$.
                \medskip
            
                \item \textbf{If} no edge is monochromatic: 
            
                \begin{compactenum}[\qquad]
                    \item Stick with this proper $2$-coloring of the vertices in $\overline{V},N_{H_1}(\overline{V})$.
                    
                    \item Remove $\overline{V}$ from $T$, i.e. $T\leftarrow T \setminus \overline{V}$
                    
                    \item Remove $\bigcup_{i} (S'_i\cup N_{H_1}(S'_i))$ and all edges containing any of these 
                    vertices, from $H_1$ and $H_2$, and stop iterating through assignments of $\overline{V}$.
                \end{compactenum}
            
                \medskip            
            
                \item \textbf{If} some edge is monochromatic: 
            
                \begin{compactenum}[\qquad]
                    \item Discard the coloring and continue iterating through assignments of $\overline{V}$.
                \end{compactenum}
            \end{compactenum}
        
    \end{compactenum}
    
    \item[] \textbf{End~While}
    
    \bigskip    
    
    \item For every subset $B$ of $T$ of size less than $c$:
    
    \begin{compactenum}[\qquad(1)]
        \item Let $A\leftarrow T\setminus B$.
        
        \medskip
        
        \item Using two new colors, run the proper $2$-coloring algorithm for discrepany zero hypergraphs 
        on the $(k-1)$-uniform hypergraph whose edge set is $A$.
        
        \medskip
        
        \item Using two new colors, iterate through all assignments of $B$, and attempt to $2$-color 
        $N_{H_1}(B)$ just as in Step (D1).
        
        \medskip
        
        \item \textbf{If} both colorings succeed: 
            
        \begin{compactenum}[\qquad]
            \item Stick with this proper $2$-coloring of the vertices in $T,N_{H_1}(B)$.
            \medskip
                    
            \item Remove from $H_1$ the vertices 
            $
            \bigcup_{S'\in A} S' \text{ and } \bigcup_{S'\in B} (S'\cup N_{H_1}(S'))
            $\\ 
            and all edges involving any of these vertices,
            and stop iterating through subsets of $T$.
        \end{compactenum}

        \medskip            
            
        \item \textbf{If} either coloring fails: 
            
        \begin{compactenum}[\qquad]
            \item Discard the coloring and continue iterating through subsets of $T$.
        \end{compactenum}
    \end{compactenum}
    
    \bigskip
    
    \item \textbf{Output} the proper partial coloring of $H$ and the residual graph $H_1$ of 
    max-degree ${n-1 \choose k-2}t$.
\end{compactenum}

\subsubsection{The Result.}

\begin{theorem}
\thmlab{lddr}
    For odd $k$, let $H=\pth{V,E\subseteq {V \choose k}}$ be a $k$-uniform discrepancy $1$ hypergraph with 
    $n$ vertices. \algref{SCDegreeReduce} partially colors $H$ in $n^{c+O(1)}$ time, with at most 
    $2n/ct$ colors, such that:
    \medskip
    
    \begin{compactenum}
        \item The subgraph of $H$ induced by the colored vertices has no monochromatic vertices.
        
        \medskip
        
        \item The subgraph of $H$ induced by the uncolored vertices has maximum degree 
        ${n-1 \choose k-2}t$.
    \end{compactenum}
\end{theorem}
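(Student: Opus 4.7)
The plan is to establish four properties in sequence: correctness of the partial coloring, the $2n/(ct)$ color bound, the residual max-degree bound of $\binom{n-1}{k-2}t$, and the $n^{c+O(1)}$ runtime. I would follow the blueprint of \thmref{scdr}, using \obsref{d:wigderson:step} and \obsref{d:marked} in place of \obsref{s:wigderson:step}.

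For correctness, each accepted batch in (D1) passes the explicit monochromatic-edge check of (D3), so no edge within a batch is monochromatic. Disjoint color pairs across batches rule out cross-batch issues. The discrepancy-2 restriction on assignments is exactly what is needed: for an edge $e \supseteq S'_i$ with one vertex $v \in N_{H_1}(S'_i)$, the bias-$c_1$ assignment places $(k+1)/2$ copies of $c_1$ inside $S'_i$ and assigns $c_2$ to $v$, yielding a $((k+1)/2, (k-1)/2)$ split which is non-monochromatic since $k$ is odd. For step 4, I invoke \obsref{d:marked}: the $(k-1)$-uniform hypergraph with edge set MARKED inherits a discrepancy-$0$ coloring from the promised discrepancy-$1$ coloring $f^*$ of $H$, so a proper $2$-coloring exists and can be computed in polynomial time by the discrepancy-$0$ algorithm.

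For the color bound, each successful invocation of (D1) uses exactly two new colors while removing at least $ct$ vertices: every $S'_i \in T$ satisfied $|N_{H_2}(S'_i)| > t$ at the moment of insertion and the $H_2$ vertex set is always contained in the $H_1$ vertex set, so $|N_{H_1}(S'_i)| > t$ at deletion time as well. The cleanup in step 4 contributes only lower-order color cost, giving an overall bound of $2n/(ct)$. The residual max-degree bound follows from the termination condition: every size-$(k-1)$ subset $S$ has $|N_{H_2}(S)| \le t$, and each vertex belongs to at most $\binom{n-1}{k-2}$ such subsets, so its $H_2$-degree (and hence its degree in the uncolored portion of $H_1$) is at most $\binom{n-1}{k-2}t$. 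The runtime bound is routine from enumerating $\binom{|T|}{c} \le n^c$ subsets and $2^{O(ck)}$ discrepancy-$2$ assignments per subset.

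The main obstacle is showing that (D1) succeeds often enough to drain $T$ and realize the color bound --- equivalently, that whenever $|T| \ge c$ there exist both a subset $\overline{V}$ and a discrepancy-$2$ assignment that pass the monochromatic-edge test. My plan is to use $f^*$ as a structural witness. Since $k-1$ is even, $f^*$ restricts to each $S'_i \in T$ with discrepancy $0$ or $2$. I would argue that any $S'_i$ with $f^*$-discrepancy $0$ would already have been routed to MARKED via branch (B): an independent $N_{H_1}(S'_i)$ combined with $f^*$-discrepancy $0$ on $S'_i$ forces strong constraints on $f^*|_{N_{H_1}(S'_i)}$ that parallel the branching logic of the algorithm. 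Hence every $S'_i \in T$ has $f^*$-discrepancy $2$, and restricting $f^*$ to any size-$c$ subset $\overline{V} \subseteq T$ satisfies (D1)'s requirements since $f^*$ is globally proper. Formalizing this bifurcation through a careful case analysis tracking the dynamic state of $H_1, H_2, T$, and MARKED is the delicate part.
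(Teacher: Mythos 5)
Your high-level skeleton (correctness, color count, residual degree, runtime) mirrors the paper's, and the degree and runtime bookkeeping are fine. The problem is in what you yourself flag as the delicate part: the claim that every $S'_i \in T$ must have $f^*$-discrepancy $2$. You argue this via a converse of \obsref{d:wigderson:step}(B), namely that $f^*$-discrepancy $0$ on $S'_i$ forces $N_{H_1}(S'_i)$ to contain an edge (so $S'_i$ would have gone to MARKED). That converse is false, and the paper says so explicitly right after \obsref{d:wigderson:step}: ``the primary weakness comes from the fact that $N(S)$ being an independent set tells us nothing about the discrepancy of $S$.'' A set $S$ with independent neighborhood and $f^*$-discrepancy $0$ is perfectly possible and lands in $T$, not MARKED. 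Your account would make Step 3 of the algorithm vacuous, which is a sign the bifurcation cannot go the way you want.

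The actual argument the paper relies on is different. Sets in $T$ can have $f^*$-discrepancy $0$ \emph{or} $2$. What the Step (D) loop guarantees (the paper's observations (C) and (D)) is that, after it finishes, for the true coloring $f^*$ \emph{fewer than $c$} sets in $T$ still have $f^*$-discrepancy $2$: if $c$ of them did, the assignment matching $f^*$ on that size-$c$ subset would be one of the discrepancy-$2$ assignments tried, and by \obsref{d:wigderson:step}(B) its extension to the neighborhoods would agree with $f^*$ there, so the check (D2) would pass and those sets would have been colored and removed. Step 3 then enumerates all candidate sets $B$ of size $<c$; for the correct guess $B$ (the few discrepancy-$2$ sets), $A = T\setminus B$ consists of sets with $f^*$-discrepancy $0$, so the $(k-1)$-uniform hypergraph with edge set $A$ is discrepancy-$0$ and is $2$-colored exactly by \obsref{d:marked}, while $B$ and its neighborhoods are handled by the bias argument. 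So \obsref{d:marked} is needed not only for MARKED but also for the discrepancy-$0$ bulk of $T$, which your write-up misses. Replacing your ``every set in $T$ has discrepancy $2$'' claim with this ``at most $c-1$ do'' argument is the fix.
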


\begin{proof}
	The proof goes very similarly to that of \thmref{scdr}, thus we just 
	state the key observations required to complete the proof.
	\medskip
	
	\begin{compactenum}[(A)]
		\item In any discrepancy $1$ coloring of $H$, any size $k-1$ 
		set $S'$ either has discrepancy $2$, or discrepancy $0$.
		
		\medskip
	
		\item Consider any discrepancy $1$ coloring of $H$. If a size $k-1$ 
		set $S'$ has discrepancy $2$, then $N(S')$ is monochromatic.
		
		\medskip
	
		\item At the end of any iteration of Step 2., 
		there is no size $c$ subset of $T$ such that every set in the 
		subset has discrepancy $2$ in any discrepancy $1$ coloring of $H$.
		
		\medskip
		
		\item When we reach Step 3., at least $|T|-c$ sets in $T$, all 
		have discrepancy $0$ in EVERY discrepancy $1$ coloring of $H$.
		
	\end{compactenum}	
	~
\end{proof}

\subsection{Degree Reduction under Rainbow Colorability}
\seclab{rcdr}

Now, the equivalent algorithm in the case of rainbow colorability is virtually identical to 
that of \secref{lddr}. Thus we merely state the result.

\begin{theorem}
\thmlab{rcdr}
    Let $H=\pth{V,E\subseteq {V \choose k}}$ be a $k$-uniform $(k-1)$-rainbow colorable hypergraph 
    with $n$ vertices. \algref{SCDegreeReduce} partially colors $H$ in $n^{c+O(1)}$ time, with at 
    most $(k-1)n/ct$ colors, such that:
    \medskip
    
    \begin{compactenum}
        \item The subgraph of $H$ induced by the colored vertices has no monochromatic vertices.
        
        \medskip
        
        \item The subgraph of $H$ induced by the uncolored vertices has maximum degree 
        ${n-1 \choose k-2}t$.
    \end{compactenum}
\end{theorem}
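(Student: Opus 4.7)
The plan is to adapt the degree-reduction algorithm \algref{SCDegreeReduce} (the low-discrepancy procedure from \secref{lddr}) essentially verbatim, substituting rainbow-based observations for the discrepancy-based ones. The algorithmic skeleton---searching for size-$(k-1)$ sets $S$ with $|N(S)|>t$, partitioning them into those whose neighborhoods contain an edge (added to $\mathrm{MARKED}$) and those whose neighborhoods are independent (processed immediately), then finally coloring the $\mathrm{MARKED}$ set via a proper coloring of an induced $(k-1)$-uniform subhypergraph---requires no structural change; only the combinatorial lemmas justifying each branch need to be re-derived.

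The two observations that drive \algref{SCDegreeReduce} have clean rainbow analogues. Fix any proper $(k-1)$-rainbow coloring $f \colon V \to [k-1]$ and any size-$(k-1)$ subset $S \subseteq V$. Either $f(S) = [k-1]$, in which case call $S$ rainbow, or some color $c$ is missed by $f$ on $S$; in the latter case every $v \in N(S)$ must satisfy $f(v) = c$, since the edge $S \cup \{v\}$ must hit every color, so $N(S)$ is monochromatic in $f$ and in particular contains no edge (every edge being rainbow must meet every color). Contrapositively, if $N(S)$ contains an edge then $S$ is rainbow under $f$; this is the rainbow analogue of \obsref{d:wigderson:step}(B). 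Moreover, the collection of all such rainbow $(k-1)$-sets, viewed as edges of a $(k-1)$-uniform hypergraph $F$ on their union, inherits $f$ as a proper $(k-1)$-partite coloring, so $F$ is $(k-1)$-colorable (and in particular $2$-colorable) in polynomial time by the homogeneous-linear-system argument mentioned in the introduction. This replaces \obsref{d:marked}.

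With these substitutions, both the warmup algorithm and the full \algref{SCDegreeReduce} translate to the rainbow case line by line. In the ``$N(S)$ independent'' branch we commit two new colors, one for $S$ and one for $N(S)$; this is a proper coloring inside $S \cup N(S)$ because $|S| = k-1 < k$ forbids an edge inside $S$, $N(S)$ has no edge by hypothesis, and every surviving edge in $S \cup N(S)$ must cross the two parts. When the outer loop terminates, no size-$(k-1)$ set has more than $t$ neighbors, so a trivial bound gives maximum degree at most $\binom{n-1}{k-2} t$. The color count is $(k-1)n/ct$ because each outer iteration commits at most $k-1$ new colors (attained by the $(k-1)$-partite coloring of the $\mathrm{MARKED}$ sub-hypergraph) while eliminating at least $ct$ vertices, and the $n^{c+O(1)}$ runtime comes from the same brute-force enumeration over $\binom{|T|}{c}$ subsets and their color assignments as in \secref{lddr}. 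The only mild obstacle is checking that the rigid rainbow-versus-non-rainbow dichotomy on size-$(k-1)$ sets stays consistent across iterations---which it does because for a fixed $f$ the missing color $c$ is uniquely determined by $S$, so the monochromatic constraints on successive neighborhoods never conflict.
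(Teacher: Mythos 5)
Your key structural observations are correct and line up with what the paper intends by ``virtually identical to that of \secref{lddr}'': for a $(k-1)$-rainbow coloring $f$ and a $(k-1)$-set $S$ with $N(S) \neq \emptyset$, either $f|_S$ is a bijection onto $[k-1]$ (rainbow) or exactly one color $c$ is missed, in which case every $v\in N(S)$ has $f(v)=c$ and hence $N(S)$ is independent; contrapositively, $N(S)$ containing an edge forces $S$ rainbow; and the rainbow $(k-1)$-sets form a $(k-1)$-uniform $(k-1)$-partite hypergraph that is properly $2$-colorable in polynomial time. This is the right analogue of \obsref{d:wigderson:step} and \obsref{d:marked}, and the treatment of the ``$N(S)$ independent'' branch in the warmup is sound.

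The gap is in your accounting of the $(k-1)n/ct$ color bound. You attribute the factor $k-1$ to ``the $(k-1)$-partite coloring of the $\mathrm{MARKED}$ sub-hypergraph,'' but you have already (correctly) noted that this hypergraph is $2$-colorable; the $\mathrm{MARKED}$ step contributes only $2$ colors, once, at the end, not $k-1$ colors per iteration. The $k-1$ factor must instead come from the batched processing of the set $T$ of non-rainbow $(k-1)$-sets (the counterpart of Step (D) in \algref{SCDegreeReduce}), and it is not free: for a batch $\overline{V}=\{S'_1,\dots,S'_c\}$ with $S'_i$ missing color $c_i$, the neighborhoods $N(S'_i)$ are $f$-monochromatic with possibly $k-1$ \emph{distinct} colors, and one cannot simply assign one fresh color to $\bigcup_i S'_i$ and another to $\bigcup_i N(S'_i)$. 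A rainbow edge of $H$ can lie entirely inside $\bigcup_i N(S'_i)$ by drawing one or two vertices from several blocks with distinct $f$-colors (this already occurs for $k=4$), and such an edge would be monochromatic under a naive $2$-coloring. Recovering a proper coloring of the batch therefore requires distinguishing the (up to $k-1$) missing-color classes, which is where the $k-1$ new colors per batch, and hence $(k-1)n/ct$ overall, actually come from; the enumeration is over $(k-1)^{c}$ missing-color assignments rather than $2^{c}$ bias assignments. Your proof does not address this point, and the sentence locating $k-1$ colors in the $\mathrm{MARKED}$ step is internally inconsistent with your own claim that $F$ is $2$-colorable.
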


\bibliographystyle{alpha}
\bibliography{hcup_arxiv}

\end{document}